\newtheorem{theorem}{Theorem}[section]
\newtheorem{lemma}[theorem]{Lemma}
\newtheorem{claim}[theorem]{Claim}
\newtheorem{definition}[theorem]{Definition}
\newtheorem{fact}[theorem]{Fact}
\DeclareMathOperator{\id}{id}
\newenvironment{lemma-repeat}[1]{\begin{trivlist}
		\item[\hspace{\labelsep}{\bf\noindent Lemma \ref{#1} }]\em }%
	{\end{trivlist}}
\newenvironment{theorem-repeat}[1]{\begin{trivlist}
		\item[\hspace{\labelsep}{\bf\noindent Theorem \ref{#1} }]\em }%
	{\end{trivlist}}
\newcommand*\samethanks[1][\value{footnote}]{\footnotemark[#1]}
\newcommand{\accept}{\texttt{accept}}
\newcommand{\reject}{\texttt{reject}}
\newcommand{\poly}{\textrm{poly}}
\begin{document}
\begin{titlepage}
%	\pagenumbering{arabic}
\title{Fast Distributed Algorithms for Testing Graph Properties}
\author{Keren Censor-Hillel\thanks{Technion -- Israel Institute of Technology, Department of Computer
  Science. \texttt{ckeren@cs.technion.ac.il},
  \texttt{eldar@cs.technion.ac.il}, \texttt{gregorys@cs.technion.ac.il},
  \texttt{yaduvasudev@gmail.com}. Supported in part by the Israel Science Foundation (grant 1696/14).}
	\and Eldar Fischer\samethanks %{Technion, Department of Computer Science, \texttt{ckeren@cs.technion.ac.il}. Supported in part by the Israel Science Foundation (grant 1696/14). }
	\and Gregory Schwartzman\samethanks %{Technion, Department of Computer Science, \texttt{gregorys@cs.technion.ac.il}. }
	\and Yadu Vasudev\samethanks %{Technion, Department of Computer Science, \texttt{gregorys@cs.technion.ac.il}. }
}

\maketitle
	
\begin{abstract}
%This paper initiates a thorough study of distributed property testing. It constructs fast distributed algorithms for testing triangle-freeness, bipartiteness and connectivity. Our algorithms enjoy the inherent parallelism provided by the distributed setting, which, with much additional machinery, allows them to beat by far the complexity of their sequential counterparts. This is despite the fact that the latter can access the graph at arbitrary locations.
%
%Two by-product important contributions are a new multiple lazy random walk paradigm, as well as a new clustering framework, which already find applications in this paper and we believe will turn out useful in additional algorithms for the CONGEST model.
We initiate a thorough study of \emph{distributed property testing} -- producing algorithms for the approximation problems of property testing in the CONGEST model. In particular, for the so-called \emph{dense} graph testing model we emulate sequential tests for nearly all graph properties having $1$-sided tests, while in the \emph{general} and \emph{sparse} models we obtain faster tests for triangle-freeness, cycle-freeness and bipartiteness, respectively. In addition, we show a logarithmic lower bound for testing bipartiteness and cycle-freeness, which holds even in the stronger LOCAL model.

In most cases, aided by parallelism, the distributed algorithms have a much shorter running time as compared to their counterparts from the sequential querying model of traditional property testing. The simplest property testing algorithms allow a relatively smooth transitioning to the distributed model. For the more complex tasks we develop new machinery that may be of independent interest.
\end{abstract}
%\vfill
%\noindent Regular Submission.\\
%Eligible for the Best Student Paper Award: Gregory Schwartzman is a full-time M.Sc. student.

~
\thispagestyle{empty}
\end{titlepage}

\newcommand{\ThmSim}
{
Any $\epsilon$-test in the dense graph model for a non-disjointed property that makes $q$ queries can be converted to a distributed $\epsilon$-test that takes $O(q^2)$ communication rounds.
}
\newcommand{\ThmTri}{
Algorithm~\ref{alg:triangle-freeness} is a distributed $\epsilon$-test in the general graph model for the property of containing no triangles, that requires $O(\epsilon^{-2})$ rounds.
}
\newcommand{\ThmBi}{
Algorithm~\ref{alg:dist-bip-test-det} is a distributed $\epsilon$-test in the bounded degree graph model for the property of being bipartite, that requires $O(\poly{(\epsilon^{-1} \log(n \epsilon^{-1}))})$ rounds.
}
\newcommand{\ThmCycle}{
Algorithm~\ref{alg:test-cycle-free} is a distributed $\epsilon$-test in the general graph model
for the property of being cycle-free, that requires $O(\log n/\epsilon)$ rounds.
}
\newcommand{\ThmLBBi}{
Any distributed $1/100$-test for the property of being bipartite requires
$\Omega(\log n)$ rounds of communication.
}
\newcommand{\ThmLBCycle}{
Any distributed $1/100$-test for the property of being cycle-free requires
$\Omega(\log n)$ rounds of communication.
}

\section{Introduction}
\label{sec:intro}

The performance of many distributed algorithms naturally depends on
properties of the underlying network graph. Therefore, an inherent goal is
to check whether the graph, or some given subgraph, has certain properties.
However, in some cases this is known to be hard, such as in the CONGEST
model~\cite{Peleg00}. In this model, computation proceeds in synchronous
rounds, in each of which every vertex can send an $O(\log{n})$-bit message
to each of its neighbors. Lower bounds for the number of rounds of type
$\tilde{\Omega}(\sqrt{n}+D)$ are known for \emph{verifying} many global graph
properties, where $n$ is the number of vertices in the network and $D$ is
its diameter (see, e.g. Das-Sarma et
al.~\cite{SarmaHKKNPPW12})\footnote{Here $\tilde{\Omega}$ hides factors that are
polylogarithmic in $n$.}.

To overcome such difficulties, we adopt the relaxation used in graph property testing, as first defined in \cite{GGR98,GoldreichR02}, to the distributed setting. That is, rather than aiming for an exact answer to the question of whether the graph $G$ satisfies a certain property $P$, we settle for distinguishing the case of satisfying $P$ from the case of being \emph{$\epsilon$-far} from it, for an appropriate measure of being far.

Apart from its theoretical interest, this relaxation is motivated by the common scenario of having distributed algorithms for some tasks that perform better given a certain property of the network topology, or given that the graph \emph{almost} satisfies that property. For example, Hirvonen et al.~\cite{HirvonenRSS14} show an algorithm for finding a large cut in triangle-free graphs (with additional constraints), and for finding an $(1-\epsilon)$-approximation if at most an $\epsilon$ fraction of all edges are part of a triangle. Similarly, Pettie and Su~\cite{PettieS15} provide fast algorithms for coloring triangle-free graphs.

We construct fast distributed algorithms for testing various graph properties. An important byproduct of this study is a toolbox that we believe will be useful in other settings as well.

\subsection{Our contributions}

We provide a rigorous study of property testing methods in the realm of distributed computing under the CONGEST model.
We construct \emph{$1$-sided error distributed $\epsilon$-tests}, in which if the graph satisfies the property then all vertices output \accept, and if it is $\epsilon$-far from satisfying the property then at least one vertex outputs \reject ~with probability at least $2/3$.
%\footnote{A {\em $2$-sided  error} algorithm is only required to have all vertices output \accept ~when the graph satisfies the property with probability $2/3$.}
Using the standard amplification method of invoking such a test $O(\log{n})$ times and having a vertex output \reject~if there is at least one invocation in which it should output \reject, gives rejection with higher probability at the price of a multiplicative $O(\log{n})$ factor for the number of rounds.

The definition of a graph being $\epsilon$-far from satisfying a property is roughly one of the following (see Section~\ref{sec:prelim} for precise definitions): (1) Changing any $\epsilon n^2$ entries in the adjacency matrix does not give a graph that satisfies the property (dense model), or (2) changing any $\epsilon \cdot\max\{n,m\}$ entries in the adjacency matrix does not give a graph that satisfies the property, where $m$ is the number of edges (general model). A particular case here is when the degrees are bounded by some constant $d$, and any resulting graph must comply with this restriction as well (sparse model).
%\footnote{All models have some technical restrictions over the allowable properties which we omit here for clarity, and we only mention that typical \emph{graph properties} comply with these requirements.}

In a \emph{sequential $\epsilon$-test}, access to the input is provided by queries, whose type depends on the model. In the dense model these are asking whether two vertices $v,u$ are neighbors, and in the general and sparse models these can be either asking what the degree of a vertex $v$ is, or asking what the $i$-th neighbor of $v$ is (the ordering of neighbors is arbitrary). While a sequential $\epsilon$-test can touch only a small handful of vertices with its queries, in a distributed test the lack of ability to communicate over large distances is offset by having all $n$ vertices operating in parallel.

%The following includes informal statements of our results, while the formal statements appear in the sections proving them. This is since the formal statements require the preliminaries of Section \ref{sec:prelim}.\todo{I think we can put in formal statements and remove the above two lines. I think most terms are (perhaps informally) defined by now.}

Our first contribution is a general scheme for a near-complete emulation in the distributed context of $\epsilon$-tests originating from the dense graph model (Section~\ref{sec:emulation}). This makes use of the fact that in the dense model all (sequential) testing algorithms can be made \emph{non-adaptive}, which roughly means that queries do not depend on responses to previous queries (see Section~\ref{sec:prelim} for definition).
%\footnote{A property testing algorithm is non-adaptive if it decides all its queries in advance (i.e. based only on its internal coin tosses and before receiving the results of any query), while only its accept/reject may depend on the actual input. An adaptive algorithm may make each query in turn based on the results of its previous queries (and, as before, possible internal coin tosses).}
In fact, such tests can be made to have a very simple structure, allowing the vertices in the distributed model to ``band together'' for an emulation of the test. There is only one additional technical condition (which we define below), since in the distributed model we cannot handle properties whose counter-examples can be ``split'' to disjoint graphs. For example, the distributed model cannot hope to handle the property of the graph having no disjoint union of two triangles, a property for which there exists a test in the dense model.

\begin{theorem-repeat}{thm:sim}
\ThmSim
\end{theorem-repeat}

We next move away from the dense graph model to the sparse and general models, that are sometimes considered to be more realistic. In the general model, there exists no test for the property of containing no triangle that makes a number of queries independent of the number of graph vertices~\cite{AKKR}. Here the distributed model can do better, because the reason for this deficiency is addressed by having all vertices operate concurrently. In Section~\ref{sec:triangle-freeness} we adapt the interim lemmas used in the best testing algorithm constructed in~\cite{AKKR}, and construct a distributed algorithm whose number of rounds is independent of $n$.

\begin{theorem-repeat}{thm:tri}
\ThmTri
\end{theorem-repeat}

The sparse and general models inherently require \emph{adaptive} property testing algorithms, since there is no other way to trace a path from a given vertex forward, or follow its neighborhood. Testing triangle freeness sequentially uses adaptivity only to a small degree.
However, other problems in the sparse and general models, such as the one we explore next, have a high degree of adaptivity built into their sequential algorithms, and we need to take special care for emulating it in the distributed setting.

In the sparse model (degrees bounded by a constant $d$), we adapt ideas from the bipartiteness testing algorithm of~\cite{GR99}, in which we search for odd-length cycles. Here again the performance of a distributed algorithm surpasses that of the test (a number of rounds polylogarithmic in $n$ vs. a number of queries which is $\Omega(\sqrt{n})$ -- a lower bound that is given in~\cite{GoldreichR02}). The following is proved in Section~\ref{sec:bi}.

\begin{theorem-repeat}{thm:bi}
\ThmBi
\end{theorem-repeat}

In the course of proving Theorem~\ref{thm:bi} we develop a method that we consider to be of independent interest\footnote{This technique was recently independently and concurrently devised in~\cite{GhaffariKS16} for a different use.}. The algorithm works by performing $2n$ random walks concurrently (two starting from each vertex). The parallel execution of random walks despite the congestion restriction is achieved by making sure that the walks have a uniform stationary distribution, and then showing that congestion is ``close to average'', which for the uniform stationary distribution is constant.

In Section~\ref{sec:cycle} we show a fast test for cycle-freeness. This makes use of a combinatorial lemma that we prove, about cycles that remain in the graph after removing edges independently with probability $\epsilon/2$. 
%To find the cycles, we need a BFS search to reach a vertex from two different neighbors. Since constructing multiple BFS trees is costly due to congestion, we use a simple prioritization rule the drops BFS constructions with lower priority, which makes sure that one BFS remains alive\footnote{A more involved analysis of multiple prioritized BFS executions was used in \cite{Holzer2012}, allowing all BFS executions to fully finish in a short time without too much delay due to congestion. Since we require a much weaker guarantee, we can avoid the strong full-fledged prioritization algorithm of \cite{Holzer2012} and settle for a simple rule that keeps one BFS tree alive.}. Further, due to the combinatorial property that we prove, we can settle for a logarithmic depth in the BFS tree. 
The following summarizes our result for testing cycle-freeness.
\begin{theorem-repeat}{thm:cycle-free-correctness}
\ThmCycle
\end{theorem-repeat}

We also prove lower bounds for testing bipartiteness and cycle-freeness (matching the upper bound for the latter). Roughly speaking, these are obtained by using the probabilistic method with alterations to construct graphs which are far from being bipartite or cycle-free, but all of their cycles are of length that is at least logarithmic. This technique bears some similarity to the classic result by Erd{\"o}s \cite{erdos1959graph}, which showed the existence of graphs with large girth and large chromatic number. The following are given in Section~\ref{sec:lowerbounds}.
\begin{theorem-repeat}{thm:bipart-lower-bound}
\ThmLBBi
\end{theorem-repeat}

\begin{theorem-repeat}{cor:cycle-lower-bound}
\ThmLBCycle
\end{theorem-repeat}
%
%
%We also believe we have a logarithmic lower bound for bipartiteness testing, and we have a fast
%connectivity testing, which is work in progress.
%
%%%
%%%
%%%
\paragraph{Roadmap:}
The paper is organized as follows. The remainder of this section consists of related work and historical background on property testing.
Section~\ref{sec:prelim} contains formal definitions and some mathematical tools.
The emulation of sequential tests for the dense model is given in Section~\ref{sec:emulation}. In Section~\ref{sec:triangle-freeness} we give our distributed test for triangle-freeness.
In Section~\ref{sec:bi} we provide a distributed test for bipartiteness, along with our new method of executing many random walks, and in Section~\ref{sec:cycle} we give our test for cycle-freeness. Section~\ref{sec:lowerbounds} gives our logarithmic lower bounds for testing bipartiteness and cycle-freeness.
We conclude with a short discussion in Section~\ref{sec:discussion}.

\subsection{Related work}

%%%
%%%

The only previous work that directly relates to our distributed setting is due to Brakerski and Patt-Shamir~\cite{brakerski2011distributed}. They show a \emph{tolerant} property testing algorithm for finding large (linear in size) \emph{near-cliques} in the graph. An $\epsilon$-near clique is a set of vertices for which all but an $\epsilon$-fraction of the pairs of vertices have an edge between them. The algorithm is tolerant, in the sense that it finds a linear near-clique if there exists a linear $\epsilon^3$-near clique. That is, the testing algorithm considers two thresholds of being close to having the property (in this case -- containing a linear size clique). We are unaware of any other work on property testing in this distributed setting.

Testing in a different distributed setting was considered in Arfaoui et al.~\cite{ArfaouiFIM14}. They study testing for cycle-freeness, in a setting where each vertex may collect information of its entire neighborhood up to some distance, and send a short string of bits to a central authority who then has to decide whether the graph is cycle-free or not.

Related to having information being sent to, or received by, a central authority, is the concept of proof-labelling schemes, introduced by Korman et al.~\cite{KormanKP10} (for extensions see, e.g., Baruch et al.~\cite{BaruchFP15}). In this setting, each vertex is given some external label, and by exchanging labels the vertices need to decide whether a given property of the graph holds. This is different from our setting in which no information other than vertex IDs is available. Another setting that is related to proof-labelling schemes, but differs from our model, is the prover-verifier model of Foerster et al.~\cite{FoersterLSW16}.

Sequential property testing has the goal of computing without processing the entire input. The wider family of \emph{local computation algorithms} (LCA) is known to have connections with distributed computing, as shown by Parnas and Ron~\cite{ParnasR07} and later used by others. A recent study by G\"{o}\"{o}s et al.~\cite{goos2015non} proves that under some conditions, the fact that a centralized algorithm can query distant vertices does not help with speeding up computation. However, they consider the LOCAL model, and their results apply to certain properties that are not influenced by distances.

Finding induced subgraphs is a crucial task and has been studied in several different distributed models (see, e.g.,~\cite{KariMRS15,DruckerKO12,Censor-HillelKK15,DolevLP12}). Notice that for \emph{finding} subgraphs, having \emph{many} instances of the desired subgraph can help speedup the computation, as in~\cite{DolevLP12}. This is in contrast to algorithms that perform faster if there are \emph{no} or only \emph{few} instances, as explained above, which is why we test for, e.g., the property of being \emph{triangle-free}, rather for the property of \emph{containing} triangles. (Notice that these are not the same, and in fact every graph with $3/\epsilon$ or more vertices is $\epsilon$-close to having a triangle.)

Parallelizing many random walks was addressed in~\cite{AlonAKKLT11}, where the question of graph covering via random walks is discussed. It is shown there that for certain families of graphs there is a substantial speedup in the time it takes for $k$ walks starting from the same vertex to cover the graph, as compared to a single walk. No edge congestion constraints are taken into account.
In~\cite{SarmaNPT13}, it is shown how to perform, under congestion, a single random walk of length $L$ in $\tilde{O}(\sqrt{LD})$ rounds, and $k$ random walks in $\tilde{O}(\sqrt{kLD} +k)$ rounds, where $D$ is the diameter of the graph. Our method has no dependence on the diameter, allowing us to perform a multitude of \emph{short walks} much faster.

%%%
%%%
%%%
\subsection{Historical overview}
The first papers to consider the question of property testing were~\cite{BLR93} and~\cite{RS96}.
The original motivations for defining property testing were its connection to some Computerized
Learning models, and the ability to leverage some properties to construct Probabilistically
Checkable Proofs (PCPs -- this is related to property testing through the areas of Locally Testable
Codes and Locally Decodable Codes, LTCs and LDCs). Other motivations since then have entered the
fray, and foremost among them are sublinear-time algorithms, and other big-data considerations.
Since virtually no property can be decidable without reading the entire input, property testing
introduces a notion of the allowable approximation to the original problem. In general, the
algorithm has to distinguish inputs satisfying the property, from inputs that are {\em
$\epsilon$-far} from it.  For more information on the general scheme of ``classical'' property
testing, consult the surveys \cite{Ron08,Fischer,GR10}.

The older of the graph testing models discussed here is the dense model, as defined in the seminal work of Goldreich, Goldwasser and Ron~\cite{GGR98}.
The dense graph model has historically kick-started combinatorial property testing in earnest, but it has some shortcomings. Its main one is the distance function, which makes sense only if we consider graphs having many edges (hence the name ``dense model'') -- any graph with $o(n^2)$ edges is indistinguishable in this model from an empty graph.

The stricter and at times more plausible distance function is one which is relative to the actual number of edges, rather than the maximum $\binom{n}{2}$. The general model was defined in~\cite{AKKR}, while the sparse model was defined already in~\cite{GoldreichR02}.
The main difference between the sparse and the general graph models is that in the former there is also a guaranteed upper bound $d$ on the degrees of the vertices, which is given to the algorithm in advance (the query complexity may then depend on $d$, either explicitly, or more commonly implicitly by considering $d$ to be a constant).

\section{Preliminaries}
\label{sec:prelim}
\subsection{Additional background on property testing}
While the introduction provided rough descriptions of the different property testing models, here we provide more formal definitions. The dense model for property testing is defined as follows.

\begin{definition}[dense graph model \cite{GGR98}]
	The dense graph model considers as objects graphs that are given by their adjacency matrix. Hence it is defined by the following features.
	\begin{itemize}
		\item {\bf Distance:} Two graphs with $n$ vertices each are considered to be {\em $\epsilon$-close} if one can be obtained from the other by deleting and inserting at most $\epsilon n^2$ edges (this is, up to a constant factor, the same as the normalized Hamming distance).
		\item {\bf Querying scheme:} A single query of the algorithm consists of asking whether two vertices $u,v\in V$ form a graph edge in $E$ or not.
		\item {\bf Allowable properties:} All properties have to be invariant under permutations of the input that pertain to graph isomorphisms (a prerequisite for them being graph properties).
	\end{itemize}
	The number of vertices $n$ is given to the algorithm in advance.
\end{definition}

%The dense graph model has historically kick-started combinatorial property testing in earnest, but it has some shortcomings. Its main one is the distance function, which makes sense only if we consider graphs having many edges (hence the name ``dense model'') -- any graph with $o(n^2)$ edges is indistinguishable in this model from an empty graph.

%A stricter and at times more plausible distance function is one which is relative to the actual number of edges, rather than the maximal $\binom{n}{2}$. Of course, this requires also a change in the querying scheme, since the adjacency querying scheme defined above is unlikely to find even a single edge in graphs that are not dense. To handle this, two related models have been defined.

As discussed earlier, the sparse and general models for property testing relate the distance function to the actual number of edges in the graph. They are formally defined as follows.

\begin{definition}[sparse \cite{GoldreichR02} and general \cite{AKKR} graph models]
	These two models consider as objects graphs given by their adjacency lists. They are defined by the following features.
	\begin{itemize}
		\item {\bf Distance:} Two graphs with $n$ vertices and $m$ edges (e.g.\ as defined by the denser of the two) are considered to be {\em $\epsilon$-close} if one can be obtained from the other by deleting and inserting at most $\epsilon\max\{n,m\}$ edges\footnote{Sometimes in the sparse graph model the allowed number of changes is $\epsilon dn$, as relates to the maximum possible number of edges; when $d$ is held constant the difference is not essential.}.
		\item {\bf Querying scheme:} A single query consists of either asking what is the degree of a vertex $v$, or asking what is the $i$'th neighbor of $v$ (the ordering of neighbors is arbitrary).
		\item {\bf Allowable properties:} All properties have to be invariant under graph isomorphisms (which here translate to a relabeling that affects both the vertex order and the neighbor ids obtained in neighbor queries), and reordering of the individual neighbor lists (as these orderings are considered arbitrary).
	\end{itemize}
	%The main difference between the sparse and the general graph models is that in the former there is also a guaranteed upper bound $d$ on the degrees of the vertices which is given to the algorithm in advance (the query complexity then may depend on $d$, either explicitly or implicitly by considering it to be a ``constant'').
\end{definition}

In this paper, we mainly refer to the distance functions of these models, and less so to the querying scheme, since the latter will be replaced by the processing scheme provided by the distributed computation model. Note that most property testing models get one bit in response to a query, e.g.,
``yes/no'' in response to ``is uv an edge'' in the dense graph model. However, the sparse and general models may receive $\log{n}$ bits of information for one query, e.g., an id of a neighbor of a vertex. Also, the degree of a vertex, which can be given as an answer to a query in the general model, takes $\log{n}$ bits.
Since the distributed CONGEST model allows passing a vertex id or a vertex degree along an edge in $O(1)$ rounds, we can equally relate to all three graph models.

Another important point is the difference between $1$-sided and $2$-sided testing algorithms, and the difference between non-adaptive and adaptive algorithms.

\begin{definition}[types of algorithms]
	A property testing algorithm is said to have {\em $1$-sided error} if there is no
	possibility of error on accepting satisfying inputs. That is, an input that satisfies the
	property will be accepted with probability $1$, while an input $\epsilon$-far from the
	property will be rejected with a probability that is high enough (traditionally this means a
	probability of at least $2/3$). A {\em $2$-sided  error} algorithm is also allowed to reject
	satisfying inputs, as long as the probability for a correct answer is high enough
	(traditionally at least $2/3$).
	
	A property testing algorithm is said to be {\em non-adaptive} if it decides all its queries in advance (i.e.\ based only on its internal coin tosses and before receiving the results of any query), while only its accept/reject output may depend on the actual input. An {\em adaptive} algorithm may make each query in turn based on the results of its previous queries (and, as before, possible internal coin tosses).
\end{definition}

In the following we address both adaptive and non-adaptive algorithms. However, we restrict ourselves to $1$-sided error algorithms, since the notion of $2$-sided error is not a good match for our distributed computation model.

\subsection{Mathematical background}

An important role in our analyses is played by the Multiplicative Chernoff Bound (see, e.g.,~\cite{Mitzenmacher}), hence we state it here for completeness.
\begin{fact}
\label{fact:chernoff}
Suppose that $X_1, ..., X_n$ are independent random variables taking values in $\{0, 1\}$. Let $X$ denote their sum and let $\mu = E[X]$ denote its expected value. Then, for any $\delta > 0$,
\begin{align*}
Pr[X < (1-\delta)\mu] < (\frac{e^{-\delta}}{(1-\delta)^{(1-\delta)}})^{\mu},\\
Pr[X > (1+\delta)\mu] < (\frac{e^{\delta}}{(1+\delta)^{(1+\delta)}})^{\mu}.
\end{align*}
Some convenient variations of the bounds above are:
\begin{align*}
Pr[X \geq (1+\delta)\mu] < e^{-\delta \mu / 3}, \quad \delta \geq 1\\
Pr[X \geq (1+\delta)\mu] < e^{-\delta^2 \mu / 3}, \quad \delta \in (0,1)\\
Pr[X \leq (1-\delta)\mu] < e^{-\delta^2 \mu / 2}, \quad \delta \in (0,1).
\end{align*}
\end{fact}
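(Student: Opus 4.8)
The plan is to prove the two exact bounds by the standard Chernoff (exponential-moment) method and then obtain the three convenient variations by elementary one-variable estimates. Note first that the $X_i$ need not be identically distributed: writing $p_i = \Pr[X_i = 1]$, the only facts I will use are independence and $\mu = \sum_i p_i$.

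First I would establish the upper-tail bound $\Pr[X \ge (1+\delta)\mu] \le \left(e^\delta/(1+\delta)^{1+\delta}\right)^\mu$. For any $t > 0$, Markov's inequality applied to the nonnegative random variable $e^{tX}$ gives $\Pr[X \ge (1+\delta)\mu] = \Pr[e^{tX} \ge e^{t(1+\delta)\mu}] \le E[e^{tX}]/e^{t(1+\delta)\mu}$. By independence $E[e^{tX}] = \prod_i E[e^{tX_i}]$, and for each Bernoulli factor $E[e^{tX_i}] = 1 + p_i(e^t - 1) \le \exp(p_i(e^t-1))$ using $1 + x \le e^x$. Multiplying over $i$ yields $E[e^{tX}] \le \exp((e^t-1)\mu)$, so the tail is at most $\exp(\mu[(e^t-1) - t(1+\delta)])$. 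Minimizing the exponent in $t$ by setting its derivative $e^t - (1+\delta)$ to zero gives the optimal $t = \ln(1+\delta) > 0$, and substituting back produces exactly $\exp(\mu[\delta - (1+\delta)\ln(1+\delta)]) = \left(e^\delta/(1+\delta)^{1+\delta}\right)^\mu$.

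The lower-tail bound is symmetric. Here it is only nontrivial for $\delta \in (0,1)$, since for $\delta \ge 1$ the event $\{X < (1-\delta)\mu\}$ is empty as $X \ge 0$. For $s > 0$ I apply Markov to $e^{-sX}$, use $E[e^{-sX_i}] \le \exp(p_i(e^{-s}-1))$ exactly as above, obtain the bound $\exp(\mu[(e^{-s}-1) + s(1-\delta)])$, and optimize at $s = -\ln(1-\delta) > 0$ (valid precisely because $\delta \in (0,1)$) to get $\exp(\mu[-\delta - (1-\delta)\ln(1-\delta)]) = \left(e^{-\delta}/(1-\delta)^{1-\delta}\right)^\mu$.

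Finally I would derive the three simplified forms by bounding the logarithm of each closed-form expression. For the lower tail it suffices to show $\delta + (1-\delta)\ln(1-\delta) \ge \delta^2/2$ on $(0,1)$; setting $h(\delta)$ to be the difference, one has $h(0)=0$ and $h'(\delta) = -\ln(1-\delta) - \delta \ge 0$, which follows from the series $-\ln(1-\delta) = \sum_{k\ge 1}\delta^k/k \ge \delta$. For the upper tail one shows $(1+\delta)\ln(1+\delta) - \delta \ge \delta^2/3$ for $\delta \in (0,1)$ and $(1+\delta)\ln(1+\delta) - \delta \ge \delta/3$ for $\delta \ge 1$, again by checking a value at an endpoint and the sign of the derivative (for instance $\ln(1+\delta) - 1/3 > 0$ when $\delta \ge 1$). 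I expect this last block --- the elementary but somewhat delicate scalar inequalities, each valid only in its stated range of $\delta$ --- to be the sole real obstacle; the exponential-moment machinery itself is routine once independence is used to factor the moment generating function and the single parameter is optimized.
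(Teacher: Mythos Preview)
Your proof is correct and follows the standard exponential-moment argument. However, note that the paper does not actually prove this statement: it is stated as a \texttt{Fact} with a citation to a textbook (Mitzenmacher) and no argument is given, since the Chernoff bound is treated as background. So there is no ``paper's own proof'' to compare against; you have simply supplied the textbook derivation that the paper omits by reference.
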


\section{Distributed emulation of sequential tests in the dense model}
\label{sec:emulation}
We begin by showing that under a certain assumption of being \emph{non-disjointed}, which we define below, a property $P$ that has a sequential test in the dense model that requires $q$ queries can be tested in the distributed setting within $O(q^2)$ rounds.
We prove this by constructing an emulation that translates sequential tests to distributed ones. For this we first introduce a definition of a \emph{witness} graph and then adapt~\cite[Theorem 2.2]{Goldreich}, restricted to $1$-sided error tests, to our terminology.

\begin{definition}
\label{def:rejected-graph}
Let $P$ be a property of graphs with $n$ vertices. Let $G'$ be a graph with $k<n$ vertices. We say that $G'$ is a \emph{witness against $P$}, if it is not an induced subgraph of any graph that satisfies $P$.		
\end{definition}
Notice that if $G'$ has an induced subgraph $H$ that is a witness against $P$, then by the above
definition $G'$ is also a witness against $P$.\par
The work of \cite{Goldreich} transforms tests of graphs in the dense model to a canonical form where the query scheme is based on vertex selection. This is useful in particular for the distributed model, where the computational work is essentially based in the vertices. We require the following special case for 1-sided error tests.

\begin{lemma}[\textbf{\cite[Theorem 2.2]{Goldreich}}]
\label{claim:canonical-tester}
Let $P$ be a property of graphs with $n$ vertices.
If there exists a $1$-sided error $\epsilon$-test for $P$ with query complexity $q(n, \epsilon)$, then there exists a $1$-sided error $\epsilon$-test for $P$ that uniformly selects a set of $q'=2q(n, \epsilon)$ vertices, and accepts if and only if the induced subgraph is not a witness against $P$.
\end{lemma}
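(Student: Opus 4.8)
The plan is to follow the standard canonical-tester argument of~\cite{Goldreich}, but to exploit $1$-sidedness so that the final test need not simulate $T$ at all: it only has to inspect the induced subgraph on the selected vertices. We may assume $q' = 2q(n,\epsilon) < n$, since otherwise the asserted test is allowed to read the entire graph and decide $P$ exactly.

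First I would convert the given $1$-sided error $\epsilon$-test $T$ into an equivalent test $T_1$ that samples its vertices ``on demand''. Fix at the outset a uniformly random ordering $w_1,\dots,w_n$ of $V$. Run $T$, maintaining the invariant that the vertices referred to by $T$ so far are exactly $\{w_1,\dots,w_t\}$ for the current counter $t$ (initially $t=0$): each time $T$ issues a query mentioning a vertex not previously referred to, instantiate that vertex as $w_{t+1}$ and increment $t$. Since graph properties are invariant under isomorphism, the relabeling changes nothing that $T$ can observe, so $T_1$ accepts every input with exactly the probability $T$ does; and since each of the $q=q(n,\epsilon)$ queries introduces at most two fresh vertices, we have $t\le 2q=q'<n$ at termination.

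Next I would record the consequence of $1$-sided error. Suppose $T_1$ rejects an input $G$, and let $S=\{w_1,\dots,w_t\}$ be the set of vertices it referred to. By $1$-sidedness, no $n$-vertex graph satisfying $P$ is consistent with the edge/non-edge answers $T_1$ received. Those answers are a sub-collection of the adjacency data of the induced subgraph $G[S]$; hence if $G[S]$ were an induced subgraph of some graph $G^{*}$ satisfying $P$, then $G^{*}$ would be consistent with all of $T_1$'s answers -- a contradiction. Therefore, \emph{whenever $T_1$ rejects $G$, the induced subgraph $G[S]$ is a witness against $P$.} With this in hand, define the canonical test $T'$: select a set $U\subseteq V$ of $q'$ vertices uniformly at random and \accept{} iff $G[U]$ is not a witness against $P$. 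If $G$ satisfies $P$ then $G[U]$ is an induced subgraph of a graph satisfying $P$, so by definition it is not a witness, and $T'$ accepts with probability $1$. For soundness, couple $T'$ with $T_1$ by taking $U=\{w_1,\dots,w_{q'}\}$ for the same random ordering; then $U$ is uniform over all $q'$-subsets of $V$, and $S\subseteq U$ because $t\le q'$. On the event that $T_1$ rejects, $G[S]$ is a witness and is an induced subgraph of $G[U]$, so by the remark following Definition~\ref{def:rejected-graph}, $G[U]$ is a witness as well, and $T'$ rejects. Hence $\Pr[T'\text{ rejects }G]\ge\Pr[T_1\text{ rejects }G]=\Pr[T\text{ rejects }G]\ge 2/3$ whenever $G$ is $\epsilon$-far from $P$, which is the required bound.

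The step I expect to require the most care is the first one: justifying rigorously that replacing each newly queried vertex by the next element of a fixed random ordering produces a test with exactly the same acceptance distribution as $T$ -- that is, that $T$ neither gains nor loses information, that the argument works for adaptive $T$, and that the counter bound $t\le 2q$ genuinely holds in all executions (this is where the factor $2$ is the safe choice, since a single query can introduce two fresh vertices). This is precisely the content of~\cite[Theorem 2.2]{Goldreich}, so I would present the reduction in outline and defer the bookkeeping to that reference.
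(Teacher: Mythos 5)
The paper does not prove this lemma at all --- it is imported by citation to \cite[Theorem 2.2]{Goldreich} --- and your reconstruction is precisely the standard canonicalization argument from that reference, correctly specialized to $1$-sided error: on-demand instantiation of queried vertices along a random ordering, the observation that a $1$-sided rejection forces $G[S]$ to be a witness, and the coupling $S\subseteq U$ combined with monotonicity of witnesses under taking induced supergraphs. The only nitpick is that $\Pr[T_1\text{ accepts }G]$ is really the average over orderings of $\Pr[T\text{ accepts }\pi(G)]$ rather than literally equal to $\Pr[T\text{ accepts }G]$, but since $P$ is isomorphism-invariant every $\pi(G)$ satisfies $P$ (resp.\ is $\epsilon$-far) exactly when $G$ does, so the required completeness and soundness bounds survive and the argument is correct.
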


Our emulation leverages Lemma~\ref{claim:canonical-tester} under an assumption on the property $P$, which we define as follows.
\begin{definition}
We say that $P$ is a \emph{non-disjointed} property if for every graph $G$ that does not satisfy $P$ and an induced subgraph $G'$ of $G$ such that $G'$ is a witness against $P$, $G'$ has some connected component which is also a witness against $P$. We call such components \emph{witness components}.
\end{definition}

We are now ready to formally state our main theorem for this section.

\begin{theorem}
\label{thm:sim}
\ThmSim
\end{theorem}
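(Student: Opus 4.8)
The goal is to emulate the canonical vertex-selection test from Lemma~\ref{claim:canonical-tester} inside the CONGEST model. First I would invoke Lemma~\ref{claim:canonical-tester} to assume the sequential test has the form: pick $q' = 2q(n,\epsilon)$ vertices uniformly at random, look at the induced subgraph on these vertices, and reject iff that subgraph is a witness against $P$. The distributed algorithm will have every vertex $v$ independently decide to ``join the sample'' with probability roughly $q'/n$ (or, to match the sequential test exactly, the vertices jointly simulate choosing a uniformly random $q'$-subset — this can be done by each vertex picking a random rank and the algorithm selecting the $q'$ lowest, but that requires global coordination; it is cleaner to let each vertex join independently with probability $\Theta(q'/n)$ and argue that the resulting distribution of induced subgraphs is close enough, or to note that $1$-sided tests remain valid when the sample size only grows). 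Let $S$ be the set of chosen vertices; with the stated probability $|S| = O(q')$ and, when $G$ is $\epsilon$-far from $P$, the induced subgraph $G[S]$ is a witness against $P$ with probability at least $2/3$.

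The crux is that the chosen vertices must \emph{learn} $G[S]$ — or rather, learn enough of it to detect a witness — using only $O(q^2)$ rounds of $O(\log n)$-bit messages, despite the fact that two vertices of $S$ may be far apart in $G$. This is where the \emph{non-disjointed} hypothesis enters. Because $P$ is non-disjointed, if $G[S]$ is a witness against $P$ then it contains a \emph{connected} witness component $C \subseteq S$; and $C$, being connected as an induced subgraph, has all its $O(q')$ vertices within distance $O(q')$ of each other \emph{in $G[S]$}, hence within distance $O(q')$ of each other \emph{in $G$} along paths — wait, that is not immediate, since a path in $G[S]$ uses only $S$-vertices but a short path in $G$ might leave $S$. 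The right statement: $C$ connected in $G[S]$ means there is a spanning tree of $C$ using only edges of $G$ between $S$-vertices, so the vertices of $C$ are pairwise within $G$-distance $|C| - 1 = O(q')$. Therefore, the plan is: each vertex $v \in S$ runs a BFS/flooding of radius $2q'$ (the diameter bound for a witness component, which has at most $q'$ vertices), collecting the identities and mutual adjacencies of all $S$-vertices reachable within that radius. After $O(q')$ rounds each $v \in S$ knows the induced subgraph on $\{u \in S : \mathrm{dist}_G(v,u) \le 2q'\}$, which in particular contains any witness component through $v$. Each such $v$ then locally checks whether what it sees contains a witness against $P$; if so it outputs \reject. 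If $G$ satisfies $P$, no induced subgraph is ever a witness, so every vertex accepts — giving the required $1$-sided guarantee.

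**The congestion issue and the round count.** The subtlety that turns $O(q')$ into $O(q'^2) = O(q^2)$ is congestion: during the radius-$2q'$ flooding, a single edge may need to relay the identities of many sampled vertices, and we can only push $O(\log n)$ bits — i.e., $O(1)$ vertex IDs — per round. The fix is that each vertex only needs to forward information about sampled vertices, and along any BFS layer the relevant information is a set of at most $q'$ IDs (those of $S$); so vertex $v \in S$ floods its own ID outward, and this single token, together with all other $\le q'$ such tokens, needs $O(q')$ rounds to traverse each of the $O(q')$ BFS layers in the worst case of contention — pipelining the $q'$ tokens through a path of length $2q'$ takes $O(q') + O(q') = O(q')$ rounds per token but they share edges, yielding $O(q'^2) = O(q^2)$ rounds total. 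I would make this precise by having each vertex maintain a queue of $S$-IDs-with-distances to forward, sending one per round; standard pipelining arguments bound the total time by (number of tokens) $+$ (distance) $= O(q') + O(q') = O(q')$ per token but $O(q'^2)$ overall when $q'$ tokens traverse overlapping length-$O(q')$ paths. The adjacency information among collected $S$-vertices is obtained for free, since if $u, w \in S$ are both within radius $2q'$ of $v$ then $v$ learns whether $uw \in E$ from the tokens (each token can carry, or $v$ can later query in $O(q')$ more rounds, the neighbor list restricted to $S$).

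**Main obstacle.** The step I expect to require the most care is arguing correctness of the \emph{local} witness check under the non-disjointed assumption — specifically, ensuring that whenever $G$ is $\epsilon$-far from $P$ and the sampled $G[S]$ is a witness, \emph{some vertex} $v \in S$ actually sees a complete witness component within its radius-$2q'$ ball. This needs: (i) a witness component $C$ exists (non-disjointedness), (ii) $C$ is connected in $G[S]$ hence has $G$-diameter $\le |C|-1 \le q'$, so picking any $v \in C$, all of $C$ lies in $v$'s radius-$q' \le 2q'$ ball, and (iii) $v$ correctly collects the induced subgraph on $C$ — including all internal adjacencies — which follows because $C \subseteq S$ and every edge of $C$ is between two $S$-vertices within the ball. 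One must also handle the $1$-sided direction (trivial: if $G \models P$, no induced subgraph on any vertex subset is a witness, so no vertex ever rejects) and verify the sample-size concentration via Fact~\ref{fact:chernoff} so that with probability $2/3$ both $|S| = O(q')$ and $G[S]$ is a witness when it should be — composing the $2/3$ of the sequential test with the concentration bound (boosting the sequential test's success probability to, say, $5/6$ by a constant number of repetitions first, so the union bound still clears $2/3$).
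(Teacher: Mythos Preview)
Your plan matches the paper's: invoke Lemma~\ref{claim:canonical-tester}, sample each vertex independently with probability $\Theta(q/n)$, rely on non-disjointedness to guarantee a \emph{connected} witness component $C\subseteq S$ of size at most $q'$ (hence of $G$-diameter at most $q'-1$), and have sampled vertices collect enough of $G[S]$ locally to see $C$. The one mechanical difference is that the paper floods only through $G[S]$---each picked vertex repeatedly sends its accumulated edge-set $U_v$ to its \emph{picked} neighbors---so after $O(q)$ iterations each picked vertex holds exactly its connected component in $G[S]$; you instead flood through all of $G$ to radius $O(q)$. Both work; the paper's version is slightly cleaner because the data on any edge is directly bounded by the $O(q^2)$ edges of $G[S]$. (Your pipelining arithmetic is also scrambled: $q'$ unit tokens over distance $O(q')$ pipeline in $O(q')$ rounds, not $O(q'^2)$; the $O(q'^2)$ actually comes from each token carrying an $S$-neighbor list of size up to $q'$.)

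There is, however, a real gap in the $1$-sided guarantee. You call that direction ``trivial: no induced subgraph of a $P$-graph is a witness,'' but this is only valid if what a vertex reconstructs is genuinely an \emph{induced} subgraph of $G$. In the low-probability event that $|S|\gg q'$, your fixed $O(q^2)$-round budget does not suffice to transmit all $S$-neighbor lists, and a vertex $v$ can end up holding a graph on some $V'\subseteq S$ that is \emph{missing edges} relative to $G[V']$. For non-monotone non-disjointed properties---e.g., perfectness, one of the paper's own applications---such a graph can be a witness even when $G$ satisfies $P$: a perfect $G$ can contain $C_5$ plus a chord as an induced subgraph, and dropping the chord yields the odd hole $C_5$, a witness against perfectness. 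So $v$ would falsely reject. The paper handles this explicitly: each picked vertex aborts (accepts for that outer iteration) once its accumulated edge-set exceeds $100q^2$, and two independent outer iterations then restore the $2/3$ rejection probability in the far case (Lemma~\ref{lem:q} and Lemma~\ref{lemma:correctness}). You need an analogous safeguard---e.g., abort-and-accept whenever a vertex sees more than $10q'$ sampled IDs---before the $1$-sided claim goes through.
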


The following lemma essentially says that not satisfying a non-disjointed property cannot rely on subgraphs that are not connected, which is exactly what we need to forbid in a distributed setting.
\begin{lemma}
\label{lem:dist-prop-min}
The property $P$ is a non-disjointed property if and only if all minimal
witnesses that are induced subgraphs of $G$ are connected.
\end{lemma}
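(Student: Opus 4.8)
The plan is to prove both directions of the biconditional directly from the definitions, using the observation (stated right after Definition~\ref{def:rejected-graph}) that any induced subgraph containing a witness against $P$ is itself a witness against $P$. Throughout, fix a graph $G$ that does not satisfy $P$; a \emph{minimal witness} means an induced subgraph of $G$ that is a witness against $P$ but none of whose proper induced subgraphs is a witness against $P$.

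First I would prove the ``only if'' direction: assume $P$ is non-disjointed and let $G'$ be a minimal witness that is an induced subgraph of $G$. By the non-disjointed property applied to $G$ and $G'$, the graph $G'$ has a connected component $C$ that is itself a witness against $P$. But $C$ is an induced subgraph of $G'$, and by minimality of $G'$ no proper induced subgraph of $G'$ is a witness; hence $C = G'$, so $G'$ is connected. This shows all minimal witnesses that are induced subgraphs of $G$ are connected.

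Next I would prove the ``if'' direction by contraposition in the natural way: assume that every minimal witness that is an induced subgraph of $G$ is connected, and let $G'$ be any induced subgraph of $G$ that is a witness against $P$; I must exhibit a connected component of $G'$ that is a witness. Consider the finite family of induced subgraphs of $G'$ that are witnesses against $P$ — this family is nonempty since $G'$ belongs to it — and pick a member $H$ of minimum number of vertices. Then $H$ is a minimal witness: any proper induced subgraph of $H$ that were a witness would be a smaller induced subgraph of $G'$ that is a witness, contradicting the choice of $H$. Since $H$ is an induced subgraph of $G$, the hypothesis forces $H$ to be connected, so $H$ lies entirely within a single connected component $C$ of $G'$. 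Then $C$ is an induced subgraph of $G'$ containing the witness $H$, so by the observation after Definition~\ref{def:rejected-graph}, $C$ is a witness against $P$. Thus $G'$ has a connected component that is a witness, which is exactly the non-disjointed condition.

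The main thing to be careful about is the quantifier structure in the ``if'' direction: the non-disjointed definition must hold for \emph{every} witness subgraph $G'$ of every non-satisfying $G$, so the argument must start from an arbitrary such $G'$ and not merely from a minimal one; the device of passing to a minimum-vertex witness inside $G'$ handles this cleanly. A second small point worth stating explicitly is why minimality can be taken with respect to vertex count rather than the subgraph-order relation — since we are among induced subgraphs of the fixed finite graph $G$, both notions give a well-defined minimal element, and a vertex-minimal witness is automatically minimal under taking proper induced subgraphs. No real obstacle remains once the observation after Definition~\ref{def:rejected-graph} is invoked to promote $H$'s witness status up to its ambient component $C$.
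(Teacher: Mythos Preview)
Your proof is correct and follows essentially the same approach as the paper: both directions proceed by the same logic of using a witness component to contradict minimality in the forward direction, and passing to a minimal (connected) witness $H$ inside an arbitrary witness $G'$ and then promoting witness status to the component of $G'$ containing $H$ in the backward direction. Your version is somewhat more explicit about why a minimal witness exists inside $G'$ (via the minimum-vertex device), but the underlying argument is the same.
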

Here \emph{minimal} refers to the standard terminology, which means that no proper induced subgraph is a witness against $P$.

\begin{proof}
First, if $P$ is non-disjointed and $G$ does not satisfy $P$, then for
every subgraph $G'$ of $G$ that is a witness against $P$, $G'$ has a
witness component. If $G'$ is minimal then it must be
connected, since otherwise it contains a connected component which is a
witness against $P$, which contradicts the minimality of $G$.

For the other direction, if all the minimal witnesses that are induced
subgraphs of $G$ are connected, then every induced subgraph $G'$ that is a
witness against $P$ is either minimal, in which case it is connected, or is
not minimal, in which case there is a subgraph $H$ of $G'$ which is
connected and a minimal witness against $P$. The connected component $C$ of
$G'$ which contains $H$ is a witness against $P$ (otherwise $H$ is not a
witness against $P$), and hence it follows that $P$ is non-disjointed.
\end{proof}

Next, we give the distributed test (Algorithm~\ref{alg:simulate-test}). The test has an outer loop
in which each vertex picks itself with probability $5q/n$, collects its neighborhood of a certain
size of edges between \emph{picked} vertices in an inner loop, and rejects if it identifies a
witness against $P$. The outer loop repeats two times because not only does the sequential test have
an error probability, but also with some small probability we may randomly pick too many or not
enough vertices in order to emulate it. Repeating the main loop twice reduces the error probability
back to below $1/3$. In the inner loop, each vertex collects its neighborhood of picked vertices and
checks if its connected component is a witness against $P$. To limit communications this is done
only for components of picked vertices that are sufficiently small: if a vertex detects that it is
part of a component with too many edges then it accepts and does not participate until the next
iteration of the outer loop.

\begin{algorithm}[htbp]
\caption{Emulation algorithm with input $q$ for property $P$
\label{alg:simulate-test}}
\KwVars{$U_v$ edges known to $v$, $U'_v$ edges to update and send (temporary variables)}
\Perform{$2$}
{
	reset the state for all vertices\\
			
	\Simul{ vertex $v$  }
	{
		%let $q$ be the constant promised in Lemma \ref{claim:canonical-tester}\\
		Vertex $v$ \textit{picks} itself with probability $5q/n$\\
		\If{$v$ is picked}
		{
			Notify all neighbors that $v$ is picked\\
			Set $U'_v = \{(v,u) \in E | \text{  u is picked}\}$ and $U_v=\emptyset$\\
					
			\Perform{$10q$}
			{
				\tcpy{At each iteration $U_v$ is a subgraph of $v$'s connected component}
				$U'_v = U'_v \backslash U_v$ \tcpy{only need recently discovered edges}
				$U_v = U_v\cup U'_v$ \tcpy{add them to $U_v$}
						
				\If(\tcpy*[h]{don't operate if there are too many edges}){$|U_v|\leq 100q^2$}
				{
					Send $U'_v$ to all picked neighbours of $v$ \tcpy{propagate known edges}
				}
				\textit{Wait} until the time bound for all other vertices to finish this iteration\\
				Set $U'_v$ to the union of edge sets received from neighbors
			}
			\If{ $U_v \cup U'_v$ is a witness against $P$}
			{
				Vertex $v$ outputs \reject ~(ending all operations)
			}
		}
		\Else
		{
			\textit{Wait} until the time bound for all other vertices to finish this iteration of the outermost loop
		}
	}
}
Every vertex $v$ that did not \reject~outputs \accept

\end{algorithm}

To analyze the algorithm, we begin by proving that there is a constant probability for the number of picked vertices to be sufficient and not too large.
\begin{lemma}
\label{lem:q}
The probability that the number of vertices picked by the algorithm is between $q$ and $10q$ is more than $2/3$ .
\end{lemma}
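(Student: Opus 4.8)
The plan is to view the number of picked vertices as a sum of independent indicator random variables and to bound its deviation from the mean with the Chernoff bounds of Fact~\ref{fact:chernoff}. For each vertex $v$ let $X_v \in \{0,1\}$ be the indicator of the event ``$v$ picked itself''. These are independent with $\Pr[X_v = 1] = 5q/n$, so the total count $X = \sum_{v \in V} X_v$ has expectation $\mu := E[X] = n \cdot (5q/n) = 5q$ by linearity. The complement of the target event ``$q \le X \le 10q$'' is $\{X < q\} \cup \{X > 10q\}$, and it suffices to show this bad event has probability strictly less than $1/3$; then $\Pr[q\le X\le 10q] > 2/3$ as claimed.

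For the lower tail, write $q = (1-\tfrac45)\mu$, so by the Chernoff lower-tail bound $\Pr[X \le q] < \left(e^{-4/5}/(1/5)^{1/5}\right)^{5q}$. For the upper tail, write $10q = (1+1)\mu$, so the Chernoff upper-tail bound gives $\Pr[X \ge 10q] < \left(e^{1}/2^{2}\right)^{5q} = (e/4)^{5q}$. Each bound is of the form $c^{q}$ with $c<1$ and hence is decreasing in $q$, so it remains only to evaluate the sum of the two bounds at the worst case $q = 1$ (the sequential test makes at least one query, so $q \ge 1$) and check that it is below $1/3$. A short numerical estimate shows each base above is below $0.68$, so at $q=1$ the two bounds sum to well under $1/3$, and the lemma follows.

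The only point requiring care is the choice of Chernoff form: the ``convenient variations'' listed in Fact~\ref{fact:chernoff} are too lossy here, since at $q=1$ they would give a sum close to $0.4 > 1/3$. I would therefore use the sharper first pair of inequalities in Fact~\ref{fact:chernoff}, which are strong enough to cover every $q \ge 1$ at once; everything else is routine bookkeeping. (Should one wish to avoid the sharper bounds, an alternative is to note the lemma is only ever invoked for $q$ above a small absolute constant, but invoking the precise bounds is cleaner and uniform in $q$.)
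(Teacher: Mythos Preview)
Your proposal is correct and follows essentially the same approach as the paper: both set up $X=\sum_v X_v$ with $E[X]=5q$, apply the sharp Chernoff forms from Fact~\ref{fact:chernoff} with $\delta=4/5$ for the lower tail and $\delta=1$ for the upper tail, and verify the two error probabilities sum to less than $1/3$. The paper simplifies the two bounds to $(5/e^4)^q<1/10$ and $(e^5/2^{10})^q<2/10$ rather than arguing via ``each base below $0.68$'', but the content is the same; your remark that the convenient Chernoff variants are too weak at $q=1$ is a correct and helpful observation (one minor nit: you bound $\Pr[X\le q]$ where Fact~\ref{fact:chernoff} literally gives $\Pr[X<q]$, but since you only need the latter this is harmless).
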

\begin{proof}
For every $v \in V$, we denote by $X_v$ the indicator variable for the event that vertex $v$ is picked.
Note that these are all independent random variables. Using the notation $X = \sum_{v\in V}X_v$ gives that $E[X]=5q$, because each vertex is picked with probability $5q/n$.
Using the Chernoff Bound from Fact~\ref{fact:chernoff} with $\delta=4/5$ and $\mu = 5q$, we can bound the probability of having too few picked vertices:
\begin{align*}
Pr[X < q] = Pr[X < (1-\delta)\mu] < (\frac{e^{-4/5}}{(1-(4/5))^{(1-(4/5))}})^{5q} = (\frac{5}{e^4})^q < \frac{1}{10}.
\end{align*}
For bounding the probability that there are too many picked vertices, we use the other direction of the Chernoff Bound with $\delta=1$ and $\mu = 5q$, giving:
\begin{align*}
Pr[X > 10q]  = Pr[X > (1+\delta)\mu] < (\frac{e}{2^2})^{5q} = (\frac{e^5}{2^{10}})^q < \frac{2}{10}.
\end{align*}
Thus, with probability at least $2/3$ it holds that $q \leq X \leq 10q$.
\end{proof}

Now, we can use the guarantees of the sequential test to obtain the guarantees of our algorithm.
\begin{lemma}
\label{lemma:correctness}
Let $P$ be a non-disjointed graph property. If $G$ satisfies $P$ then all vertices output \accept ~in Algorithm~\ref{alg:simulate-test}. If $G$ is $\epsilon$-far from satisfying $P$, then with probability at least $2/3$ there exists a vertex that outputs \reject.
\end{lemma}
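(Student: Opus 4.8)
The plan is to prove the two halves of the statement separately. Completeness (all vertices accept when $G$ satisfies $P$) should fall out almost directly from the definition of a witness, while soundness (some vertex rejects when $G$ is $\epsilon$-far) will come from pushing the canonical-form sequential tester of Lemma~\ref{claim:canonical-tester} through the emulation, using Lemma~\ref{lem:q} to control how many vertices get picked and using non-disjointedness (via Lemma~\ref{lem:dist-prop-min}) to make sure the counterexample is visible to a single vertex. The two passes of the outer loop will be needed only to amplify a constant success probability past $2/3$.

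For completeness I would argue that when $G$ satisfies $P$ no picked vertex can ever see a witness. Indeed, since $G$ satisfies $P$, by Definition~\ref{def:rejected-graph} no induced subgraph of $G$ is a witness against $P$; and one checks that a picked vertex $v$ only ever tests against the witness condition the subgraph of $G$ induced on the part of its connected component (among picked vertices) that it has fully uncovered within the $10q$ rounds and the $100q^2$-edge budget (a vertex that cannot fully uncover its component in that budget simply accepts). Hence no vertex rejects and every vertex outputs \accept.

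For soundness, assume $G$ is $\epsilon$-far from $P$. By Lemma~\ref{claim:canonical-tester} there is a $1$-sided error $\epsilon$-test that uniformly selects a set of $q'$ vertices and rejects exactly when the induced subgraph is a witness, so a uniformly random $q'$-subset of $V$ induces a witness with probability at least $2/3$. Fix one pass of the outer loop, let $S$ be its set of picked vertices, and let $B$ be the event $q'\le|S|\le 10q$, which has probability more than $2/3$ by Lemma~\ref{lem:q}. Conditioned on $|S|=k$ with $B$ holding, $S$ is uniform among $k$-subsets and hence contains a uniformly random $q'$-subset $S'$; since a graph having a witness as an induced subgraph is itself a witness (the remark after Definition~\ref{def:rejected-graph}), $G[S']$ being a witness forces $G[S]$ to be one, so $G[S]$ is a witness with probability at least $2/3$ given $B$, and therefore both $B$ holds and $G[S]$ is a witness with probability at least $\tfrac23\cdot\tfrac23=\tfrac49$. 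The crux is that in this case a vertex must reject: because $|S|\le 10q$, every connected component of $G[S]$ has at most $10q$ vertices, hence fewer than $100q^2$ edges and diameter below $10q$, so nobody trips the edge bound and after $10q$ flooding rounds every picked vertex holds the entire edge set of its component; by non-disjointedness $G[S]$ has a connected component $C$ that is itself a witness, and every vertex of $C$ tests exactly $C$ and rejects. So a single pass makes a vertex reject with probability at least $\tfrac49$, and since the two passes use independent coins, some vertex rejects in at least one of them with probability at least $1-(1-\tfrac49)^2=\tfrac{56}{81}>\tfrac23$.

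I expect the real work — and the likely obstacles — to be twofold. First, the visibility step in the soundness argument: one must check that whenever the number of picked vertices is not too large, every component among picked vertices is small enough in both edge count and diameter for the $10q$-round, $100q^2$-edge flooding to reconstruct it exactly inside each of its vertices, so that a witness component of a non-disjointed property is literally reconstructed by some vertex; this is exactly where non-disjointedness is indispensable, since a property whose counterexamples split across components could never be caught this way. Second, the probabilistic bookkeeping: there are two independent failure modes — the sequential test's own error of up to $1/3$ and the up-to-$1/3$ chance of picking a bad number of vertices — each costing a constant factor, and one must verify that exactly two passes of the outer loop bring the overall success probability back above $2/3$ (the $56/81$ estimate) rather than needing more. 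Completeness is comparatively routine, the one thing to be careful about being that a vertex never bases a rejection on anything other than an honest induced subgraph of $G$.
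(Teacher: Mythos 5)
Your proof is correct and follows essentially the same route as the paper's: completeness directly from the definition of a witness, and soundness by combining the canonical tester of Lemma~\ref{claim:canonical-tester}, the size bound of Lemma~\ref{lem:q}, non-disjointedness to localize the witness in a single connected component, and the two-pass amplification giving $1-(5/9)^2=56/81>2/3$. You are in fact somewhat more explicit than the paper on two points it leaves implicit --- the monotonicity/exchangeability step showing a larger sample cannot hurt, and the check that a component of at most $10q$ picked vertices fits within the $10q$-round, $100q^2$-edge budget so that its witness component is fully reconstructed --- and you share with the paper the same mild sloppiness about $q$ versus $q'=2q$ in invoking Lemma~\ref{lem:q}.
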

	
\begin{proof}
First, assume that $G$ satisfies $P$. Vertex $v$ outputs \reject~only if it is part of a witness against $P$, which is, by definition, a component that cannot be extended to some $H$ that satisfies $P$. However, every component is an induced subgraph of $G$ itself, which does satisfy $P$, and thus every component can be extended to $G$. This implies that no vertex $v$ outputs \reject.

Now, assume that $G$ is $\epsilon$-far from satisfying $P$. Since the sequential test rejects with probability at least $2/3$, the probability that a sample of at least $q$ vertices induces a graph that cannot be extended to a graph that satisfies $P$ is at least $2/3$ . Because $P$ is non-disjointed, the induced subgraph must have a connected witness against $P$.
We note that a sample of more than $q$ vertices does not reduce the rejection probability. Hence, if we denote by $A$ the event that the subgraph induced by the picked vertices has a connected witness against $P$, then $Pr[A] \geq 2/3$, conditioned on that at least $q$ vertices were picked.

However, a sample that is too large may cause a vertex to output \accept ~because it cannot collect its neighborhood. We denote by $B$ the event that the number of vertices sampled is between $q$ and $10q$, and by Lemma~\ref{lem:q} its probability is at least $2/3$. We bound $Pr[A\cap B]$ using Bayes' Theorem, obtaining $Pr[A\cap B]= Pr[A|B]Pr[B] \geq (2/3)^2$. Since the outer loop consists of $2$ independent iterations, this gives a  probability of at least $1-(1-4/9)^2 \geq 2/3$ for having a vertex that outputs \reject.
\end{proof}

We now address the round complexity. Each vertex only sends and receives information from its $q$-neighborhood about edges between the chosen vertices. If too many vertices are chosen we detect this and accept. Otherwise we only communicate the chosen vertices and their edges, which requires $O(q^2)$ communication rounds using standard \emph{pipelining}\footnote{Pipelining means that each vertex has a buffer for each edge, which holds the information (edges between chosen vertices, in our case) it needs to send over that edge. The vertex sends the pieces of information one after the other.}.
Together with Lemma~\ref{lemma:correctness}, this proves Theorem~\ref{thm:sim}.
	
\subsection{Applications: $k$-colorability and perfect graphs}
\label{subsec:sim}
Next, we provide some examples of usage of Theorem~\ref{thm:sim}. A result by Alon and Shapira~\cite{AlonShapira} states that all graph properties closed under induced subgraphs are testable in a number of queries that depends only on $\epsilon^{-1}$. We note that, except for certain specific properties for which there are ad-hoc proofs, the dependence is usually a tower function in $\epsilon^{-1}$ or worse (asymptotically larger).
%We start by the following result by Alon, stated in~\cite[Proposition 4.4]{Fischer}, which addresses properties that are \emph{closed under induced subgraphs}. Such a property is one that holds for every induced subgraph of a graph $G$, if it holds for $G$.

%\begin{proposition}
%\label{prop:alon}
%If $P$ is a property closed under induced subgraphs, and there exists an $\epsilon$-test for $P$ that makes $q$ queries independently of the number of vertices of the input graph, then there exists an $\epsilon$-test that works by uniformly choosing $f(q)$ vertices of the graph, querying all pairs and checking that the induced subgraph satisfies $P$. The function $f$ is a global polynomial in $q$ (its degree and coefficients are global constants, independent of the other parameters).
%\end{proposition}

 From this, together with Lemma~\ref{claim:canonical-tester} and Theorem~\ref{thm:sim}, we deduce that if $P$ is a non-disjointed property closed under induced subgraphs, then it is testable, for every fixed $\epsilon$, in a constant number of communication rounds.

\paragraph{Example -- $k$-colorability:} The property of being $k$-colorable is testable in a distributed manner by our algorithm. All minimal graphs that are witnesses against $P$ (not $k$-colorable) are connected, and therefore according to Lemma~\ref{lem:dist-prop-min} it is a non-disjointed property. It is closed under induced subgraphs, and by \cite{AlonK02} there exists a $1$-sided error $\epsilon$-test for $k$-colorability that uniformly picks $O(k\log(k)/\epsilon^2)$ vertices, and its number of queries is the square of this expression (note that the polynomial dependency was already known by~\cite{GGR98}). Our emulation implies a distributed $1$-sided error $\epsilon$-test for $k$-colorability that requires $O(\poly{(k\epsilon^{-1})})$ rounds.

\paragraph{Example -- perfect graphs:} A graph $G$ is said to be \emph{perfect} if for every induced subgraph $G'$ of $G$, the chromatic number of $G'$ equals the size of the largest clique in $G'$.
Another characterization of a perfect graph is via \emph{forbidden subgraphs}: a graph is perfect if and only if it does not have odd holes (induced cycles of odd length at least $5$) or odd anti-holes (the complement graph of an odd hole)~\cite{strong}. Both odd holes and odd anti-holes are connected graphs. Since these are all minimal witnesses against the property, according to Lemma~\ref{lem:dist-prop-min} it is a non-disjointed property.
Using the result of Alon-Shapira~\cite{AlonShapira} we know that the property of a graph being perfect is testable. Our emulation implies a distributed $1$-sided error $\epsilon$-test for being a perfect graph that requires a number of rounds that depends only on $\epsilon$.

\section{Distributed test for triangle-freeness}
\label{sec:triangle-freeness}
In this section we show a distributed $\epsilon$-test for triangle-freeness. Notice that since triangle-freeness is a non-disjointed property, Theorem~\ref{thm:sim} gives a distributed $\epsilon$-test for triangle-freeness under the dense model with a number of rounds that is $O(q^2)$, where $q$ is the number of queries required for a sequential $\epsilon$-test for triangle-freeness. However, for triangle-freeness, the known number of queries is a tower function in $\log(1/\epsilon)$~\cite{Fox2010}.

Here we leverage the inherent parallelism that we can obtain when checking the neighbors of a vertex, and show a test for triangle-freeness that requires only $O(\epsilon^{-2})$ rounds (Algorithm~\ref{alg:triangle-freeness}). Importantly, our algorithm works not only for the dense graph model, but for the general graph model (where distances are relative to the actual number of edges), which subsumes it. In the sequential setting, a test for triangle-freeness in the general model requires a number of queries that is some constant power of $n$ by \cite{AKKR}. Our proof actually follows the groundwork laid in \cite{AKKR} for the general graph model -- their algorithm picks a vertex and checks two of its neighbors for being connected, while we perform the check for all vertices in parallel.

\begin{algorithm}[htbp]
\caption{Triangle freeness test\label{alg:triangle-freeness}}
\Simul{vertex $v$ }
{
	\Perform{$32\epsilon^{-2}$}
	{	
		Pick $w_1, w_2 \in N(v), w_1 \neq w_2$ uniformly at random \\
		Send $w_2$ to $w_1$ \tcpy{Ask $w_1$ if it is a neighbor of $w_2$}
		\ForEach(\tcpy*[h]{Asked by $u$ if $v$ is a neighbor of $w$}){$w_u$ sent by $u \in N(v)$}
		{
			\If{$w_u \in N(v)$}
			{
				Send ``yes'' to $u$\\
			}
			\Else
			{
				Send ``no'' to $u$\\
			}
		}
		\If{received ``yes'' from $w_1$}
		{
					\reject ~(ending all operations)
		}
	}
}
\accept ~(for vertices that did not reject)
\end{algorithm}

\begin{theorem}
\label{thm:tri}
\ThmTri
\end{theorem}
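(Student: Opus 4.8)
The plan is to analyze Algorithm~\ref{alg:triangle-freeness} in two directions, following the standard property-testing template. First, the completeness direction: if $G$ is triangle-free, then no vertex ever receives a ``yes'' answer from a chosen $w_1$, because a ``yes'' would certify that $v, w_1, w_2$ are mutually adjacent, i.e.\ a triangle. Hence every vertex outputs \accept\ with probability $1$, giving the $1$-sided error guarantee. The round complexity is immediate: the outer loop runs $32\epsilon^{-2}$ times, and each iteration uses $O(1)$ communication rounds (each vertex sends one message $w_2$ to $w_1$, replies to all queries it receives, and checks for a ``yes''), so the total is $O(\epsilon^{-2})$ rounds. Note that in the CONGEST model a vertex may receive up to $\deg(v)$ queries in one iteration, but it only needs to reply with a single bit per incoming edge, so one round suffices per sub-step.

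The substantive part is soundness: if $G$ is $\epsilon$-far from triangle-free in the general graph model, then some vertex rejects with probability at least $2/3$. Here I would invoke the combinatorial groundwork of~\cite{AKKR}. The key structural fact extracted from their analysis is that if $G$ (with $m$ edges, $n$ vertices, and $M = \max\{n,m\}$) is $\epsilon$-far from triangle-free, then one cannot destroy all triangles by deleting $\epsilon M$ edges; from this one derives that there is a set of $\Omega(\epsilon M)$ edges $\{u,v\}$ each of which lies in a triangle, and moreover that for a noticeable fraction of vertices $v$, a random pair of neighbors $w_1, w_2 \in N(v)$ closes a triangle with probability at least roughly $\epsilon^2$ (the two factors of $\epsilon$ coming from the density of ``triangle edges'' incident to $v$ and the chance the second neighbor completes it). Concretely, I expect to show that summing over all vertices, the expected number of ordered triples $(v, w_1, w_2)$ with $w_1, w_2 \in N(v)$, $w_1 \neq w_2$, and $\{w_1,w_2\} \in E$, sampled as in one iteration, is at least some constant times $\epsilon^2 \cdot (\text{something} \ge 1)$ — precisely, that the probability that a \emph{single} iteration produces a rejecting configuration at \emph{some} vertex is at least $\epsilon^2$ (perhaps up to a constant). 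Given that, running $32\epsilon^{-2}$ independent iterations makes the probability that \emph{no} iteration ever rejects at most $(1-\epsilon^2)^{32\epsilon^{-2}} \le e^{-32} < 1/3$, which completes the argument.

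The main obstacle, and the place where I would need to be careful, is the exact form of the per-iteration detection bound and matching it to the $\epsilon$-far hypothesis in the \emph{general} model rather than the dense model. In the dense model one would get a clean ``$\Omega(\epsilon n^2)$ triangle-contributing edge-pairs'' statement; in the general model the normalization by $\max\{n,m\}$ and the possibility of highly irregular degree distributions means one must argue that the ``heavy'' vertices (those with many triangle-edges incident) collectively have enough sampling power, since a vertex of very high degree samples any particular neighbor with small probability. This is exactly the delicate counting that~\cite{AKKR} perform, and I would adapt their lemma (rather than reprove it) to conclude that there exists a vertex whose single-iteration rejection probability, or the union over vertices of the rejection event, is $\Omega(\epsilon^2)$. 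Once that lemma is in hand as a black box, the rest is the routine amplification and union-bound bookkeeping sketched above.
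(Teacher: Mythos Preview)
Your overall plan matches the paper's: completeness is trivial, the round count is immediate, and soundness goes via a per-iteration detection bound of order $\epsilon^2$ amplified over $\Theta(\epsilon^{-2})$ independent iterations. Two points are worth flagging.

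First, the paper does not treat the $\Omega(\epsilon^2)$ bound as a black box from~\cite{AKKR}; it proves it directly. The argument is exactly the heavy/light split you anticipate: set $b=2\sqrt{m/\epsilon}$, call a vertex heavy if its degree is at least $b$, and show that at most $\epsilon m/2$ edges join two heavy vertices. Since $\epsilon$-farness gives at least $\epsilon m$ triangle edges, at least $\epsilon m/2$ triangle edges are light (have a low-degree endpoint). For such an edge, the low-degree endpoint picks it with probability at least $1/b$, and then picks the third triangle vertex also with probability at least $1/b$, yielding expected number of detections per iteration at least $(\epsilon m/2)\cdot b^{-2}=\epsilon^2/8$. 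This is the concrete form of the ``delicate counting'' you allude to, and it is short enough that you should include it rather than cite it.

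Second, you slide from ``expected number of detecting triples per iteration is $\Omega(\epsilon^2)$'' to ``probability some vertex rejects in one iteration is $\Omega(\epsilon^2)$'' without justification; these are not the same statement. The paper avoids this by summing the indicators $Z_{i,v}$ over \emph{all} iterations and vertices, obtaining $E[Z]\geq 4$, noting that the $Z_{i,v}$ are mutually independent (each vertex samples its own pair), and applying a Chernoff bound to get $\Pr[Z<1]<1/3$. Your per-iteration-then-product route would also work once you invoke that same independence to pass from expectation to probability, but as written the step is a gap.
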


Our line of proof follows that of~\cite{AKKR}, by distinguishing edges that connect two high-degree vertices from those that do not. Formally, let $b = 2\sqrt{\epsilon^{-1} m}$, where $m$ is the number of edges in the graph, and denote $B=\{v \in V \mid deg(v) \geq b\}$. We say that an edge $e=(u,v)$ is \emph{light} if $v \not\in B$ or $u \not\in B$, and otherwise, we say that it is \emph{heavy}. That is, the set of heavy edges is $H = \{(u,v)\in E \mid u \in B, v \in B\}$. We begin with the following simple claim about the number of heavy edges.

\begin{claim}
\label{claim:tri-heavy}
The number of heavy edges, $|H|$, is at most $\epsilon m / 2$.
\end{claim}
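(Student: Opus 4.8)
The plan is to bound $|H|$ by a straightforward counting argument that exploits the definition of the threshold $b = 2\sqrt{\epsilon^{-1}m}$. First I would observe that every heavy edge has both endpoints in $B$, so $|H| \leq \binom{|B|}{2} \leq |B|^2/2$. Thus it suffices to show that $|B|$ is not too large, namely $|B| \leq \sqrt{\epsilon m}$, which would give $|H| \leq \epsilon m / 2$ immediately.

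To bound $|B|$, I would use the handshake identity $\sum_{v \in V} \deg(v) = 2m$. Since every vertex $v \in B$ has $\deg(v) \geq b$ by definition, we get $b \cdot |B| \leq \sum_{v \in B} \deg(v) \leq \sum_{v \in V} \deg(v) = 2m$, hence $|B| \leq 2m/b$. Substituting $b = 2\sqrt{\epsilon^{-1}m}$ yields $|B| \leq 2m/(2\sqrt{\epsilon^{-1}m}) = m/\sqrt{\epsilon^{-1}m} = \sqrt{\epsilon m}$.

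Combining the two steps: $|H| \leq |B|^2/2 \leq (\sqrt{\epsilon m})^2/2 = \epsilon m / 2$, as desired. There is no real obstacle here — the proof is a two-line computation. The only thing to be careful about is getting the algebra with the square roots right and making sure the constant $2$ in the definition of $b$ is precisely what makes the bound come out to $\epsilon m / 2$ rather than some other constant multiple. One might also note in passing that heavy edges being between high-degree vertices is exactly why this set is small, whereas light edges (having a low-degree endpoint) are the ones the algorithm must handle via the sampling argument in the rest of the section.
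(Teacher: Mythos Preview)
Your proof is correct and follows essentially the same approach as the paper: bound $|H|$ by $|B|^2/2$, then use the degree-sum inequality $|B|\,b \leq 2m$ to obtain $|B| \leq \sqrt{\epsilon m}$, and combine. The only difference is cosmetic---you spell out the handshake identity explicitly, whereas the paper just writes $|B|b \leq 2m$ directly.
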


\begin{proof}
The number of heavy edges is $|H| \leq |B|(|B|-1)/2 < |B|^2/2$. Since $|B|b  \leq 2m$, we get that $|B| \leq \frac{2m}{b} = \frac{2m}{2\sqrt{\epsilon^{-1} m}} = \sqrt{\epsilon m}$.
This gives that $|H| \leq \frac{1}{2}|B|^2 \leq  \epsilon m / 2$.
\end{proof}

Next, we fix an iteration $i$ of the algorithm. Every vertex $v$ chooses two neighbors $w_1, w_2$. Let $A=\{(v,w_1)\in E \mid v \in V\setminus B\}$, where $w_1$ is the first of the two vertices chosen by the low-degree vertex $v$. Let $T=\{e \in E \mid \text{$e$ is a light edge in a triangle}\}$, and let $A_T = T\cap A$.
We say that an edge $(v,w_1) \in A_T$ is \emph{matched} if $(v,w_2)$ is in the same triangle as $(v,w_1)$.
If $(v,w_1) \in A_T$ is matched then $\{v,w_1,w_2\}$ is a triangle that is detected by $v$.

We begin with the following lemma that states that if $G$ is $\epsilon$-far from being triangle-free, then in any iteration $i$ we can bound the expected number of matched edges from below by $\epsilon^2/8$. Let $Y$ be the number of matched edges.

\begin{lemma}
\label{lem:matched}
The expected number of matched edges by a single iteration of the algorithm, $E[Y]$, is greater than $\epsilon^2 /8$.
\end{lemma}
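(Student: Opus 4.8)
The plan is to write $Y$ as a sum of per-vertex indicators, evaluate each expectation exactly, and then combine the degree bound $\deg(v) < b$ for $v \notin B$ with the fact that the graph remains far from triangle-free even after all heavy edges are removed.

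First I would observe that $Y = \sum_{v \in V \setminus B} Z_v$, where $Z_v$ is the indicator of the event that the two neighbors $w_1,w_2$ picked by $v$ are themselves adjacent (so that $\{v,w_1,w_2\}$ is a triangle that $v$ detects). Indeed, if $v \notin B$ then the edge $(v,w_1)$ is light by definition, so $Z_v = 1$ is exactly the event that $(v,w_1)\in A_T$ is matched. Writing $t(v)$ for the number of triangles of $G$ through $v$, counting ordered pairs of adjacent neighbors of $v$ gives $E[Z_v] = 2t(v)/(\deg(v)(\deg(v)-1))$ (vertices of degree less than $2$ contribute $0$ and may be ignored). Since $v\notin B$ means $\deg(v) < b = 2\sqrt{\epsilon^{-1}m}$, we have $\deg(v)(\deg(v)-1) < b^2 = 4\epsilon^{-1}m$, hence $E[Z_v] > \frac{\epsilon}{2m}\,t(v)$, and therefore $E[Y] > \frac{\epsilon}{2m}\sum_{v\notin B} t(v)$.

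It remains to show $\sum_{v\notin B} t(v) > \epsilon m/4$. Consider the trimmed graph $G' = G \setminus H$. By Claim~\ref{claim:tri-heavy}, $|H| \le \epsilon m/2$; and since $G$ is $\epsilon$-far from triangle-free it requires more than $\epsilon\max\{n,m\} \ge \epsilon m$ edge deletions to become triangle-free (edge insertions cannot destroy triangles). Deleting $H$ accounts for at most $\epsilon m/2$ of these, so $G'$ still needs more than $\epsilon m/2$ deletions. Let $T'$ be the set of edges of $G'$ lying in at least one triangle of $G'$; deleting $T'$ makes $G'$ triangle-free, so $|T'| > \epsilon m/2$, and every edge of $T'$ is light. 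For each $v\notin B$ let $d'(v)$ be the number of edges of $T'$ incident to $v$. Every edge of $T'$ is light, hence has at least one endpoint outside $B$, so $\sum_{v\notin B} d'(v) \ge |T'|$. Finally, for a fixed $v\notin B$, each $T'$-edge at $v$ lies in a triangle of $G'$ through $v$, and each triangle through $v$ uses exactly two edges at $v$; double counting the (edge, triangle-through-$v$) incidences gives $t(v) \ge d'(v)/2$. Summing over $v\notin B$ yields $\sum_{v\notin B} t(v) \ge \frac12\sum_{v\notin B} d'(v) \ge |T'|/2 > \epsilon m/4$, and plugging back in gives $E[Y] > \epsilon^2/8$.

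The step I expect to be the crux is the passage from ``$G'$ is far from triangle-free'' to a lower bound on $\sum_{v\notin B} t(v)$. The easy argument --- passing to a maximal family of edge-disjoint triangles in $G'$ --- loses a factor of $3$ (each such triangle has three light edges) and yields only $E[Y] > \epsilon^2/12$. Obtaining the stated $\epsilon^2/8$ requires the slightly sharper bookkeeping above: counting $T'$-edges per low-degree vertex and using the exact factor-$2$ relation between $T'$-edges at $v$ and triangles of $G'$ through $v$. The other mild point of care is transferring far-ness correctly from $G$ to $G\setminus H$, which uses Claim~\ref{claim:tri-heavy} together with the observation that only deletions, not insertions, matter for triangle-freeness.
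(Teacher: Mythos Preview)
Your proof is correct and rests on the same two ingredients as the paper's: the heavy/light split via the threshold $b=2\sqrt{m/\epsilon}$, and Claim~\ref{claim:tri-heavy}. The organization differs. The paper never computes $E[Z_v]$ exactly or counts triangles through a vertex; instead it works with the random set $A_T$ (first-picked edges that happen to be light triangle edges), uses iterated expectation to get $E[Y]\ge |T|/b^2$ where $T$ is the set of \emph{all} light triangle edges of $G$, and bounds $|T|\ge \epsilon m/2$ directly by subtracting $|H|$ from the number of triangle edges. Your route --- exact per-vertex probability, then the detour through $G'=G\setminus H$ and the double-counting $t(v)\ge d'(v)/2$ --- reaches the same constant. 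The paper's version is a little shorter because it never has to pass from an edge count to a triangle count; on the other hand, your direct computation of $E[Z_v]$ avoids the conditioning on $A_T$ and is arguably cleaner probabilistically. Incidentally, your $G'$ detour is not essential: the same double-counting works with the paper's $T$ in place of your $T'$, since for $v\notin B$ both edges at $v$ of any triangle through $v$ are automatically light.
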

\begin{proof}
For every $e \in A_T$, let $Y_e$ be a random variable indicating whether $e$ is matched. Then $Y=\sum_{e \in A_T} Y_e$, giving the following bound:
\begin{align}
\label{eq:Y}
E[Y| A_T] = E[\sum_{e \in A_T} Y_e | A_T] = %\sum_{e \in A_T} E[Y_e] =
\sum_{e \in A_T} Pr[\text{e is matched}] \geq |A_T| / b,
\end{align}
where the last inequality follows because a light edge in $A_T$ is chosen by a vertex with degree at most $b$, hence the third triangle vertex gets picked with probability at least $1/b$.

Next, we argue that $E[|A_T|] \geq |T|/b$. To see why, for every edge $e$, let $X_e$ be a random variable indicating whether $e \in A$.
Let $X=\sum_{e \in T}X_e=|A_T|$. Then,
\begin{align}
\label{eq:A_T}
E[|A_T|] = E[X] = E[\sum_{e \in T} X_e] = \sum_{e \in T} E[X_e] = \sum_{e \in T} Pr[e \in A] \geq |T|/b,
\end{align}
where the last inequality follows because a light edge has at least one endpoint with degree at most $b$. Hence, this edge gets picked by it with probability at least $1/b$.

It remains to bound $|T|$ from below, for which we claim that  $|T| \geq \epsilon m / 2$. To prove this, first notice that, since $G$ is $\epsilon$-far from being triangle free, it has at least $\epsilon m$ triangle edges, since otherwise we can just remove all of them and make the graph triangle free with less than $\epsilon m$ edge changes.
By Claim~\ref{claim:tri-heavy}, the number of heavy edges satisfies $|H| \leq \epsilon/2 m$. Subtracting this from the number of triangle edges gives that at least $\epsilon m /2$ edges are light triangle edges, i.e.,

\begin{align}
\label{eq:T}
|T| \geq \epsilon m /2.
\end{align}

Finally, by Inequalities (\ref{eq:Y}),  (\ref{eq:A_T}) and (\ref{eq:T}), using iterated expectation we get:
\begin{align*}
E[Y]=E_{A_T}[E[Y|A_T]]\geq E[\frac{|A_T|}{b}]\geq \frac{|T|}{b^2} \geq \dfrac{\epsilon m}{2} \frac{1}{4\epsilon^{-1} m} = \epsilon^2/8.
\end{align*}
\end{proof}
	
We can now prove the correctness of our algorithm, as follows.	

\begin{lemma}
\label{lem:tri-test}
If $G$ is triangle-free then all vertices output \accept ~in Algorithm~\ref{alg:triangle-freeness}. If $G$ is $\epsilon$-far from being triangle-free, then with probability at least 2/3 there exists a vertex that outputs \reject.
\end{lemma}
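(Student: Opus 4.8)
The plan is to prove the two directions separately. Completeness is immediate from the structure of the algorithm: a vertex $v$ outputs \reject{} in some iteration only after receiving ``yes'' from the neighbor $w_1$ it chose, and inspecting the algorithm this happens exactly when $w_2 \in N(w_1)$; since also $w_1, w_2 \in N(v)$, this would make $\{v, w_1, w_2\}$ a triangle, contradicting triangle-freeness. Hence if $G$ is triangle-free then no vertex ever rejects and all output \accept{}. The work is in the soundness direction, where the plan is to combine the expectation bound of Lemma~\ref{lem:matched} with a standard independence-based amplification over the $32\epsilon^{-2}$ iterations.

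For soundness, assume $G$ is $\epsilon$-far from triangle-free. Fix an iteration $i$, and for each low-degree vertex $v \in V \setminus B$ let $Z_v^{(i)}$ be the indicator of the event that the pair $(w_1, w_2)$ chosen by $v$ in iteration $i$ forms a triangle together with $v$. By the same reasoning as in the completeness argument, $Z_v^{(i)} = 1$ is exactly the event that $v$ receives ``yes'' and outputs \reject{} in iteration $i$, so $Z_v^{(i)}=1$ for some $v$ implies that some vertex rejects in iteration $i$. Moreover the quantity $Y$ of Lemma~\ref{lem:matched} is precisely $\sum_{v \in V \setminus B} Z_v^{(i)}$ (an edge $(v,w_1)\in A_T$ is matched iff $(v,w_2)$ lies in the same triangle, i.e.\ iff $Z_v^{(i)}=1$), so that lemma gives $\sum_{v \in V \setminus B}\Pr[Z_v^{(i)}=1] = E[Y] > \epsilon^2/8$. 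Crucially, the pairs chosen by distinct vertices within an iteration are independent, and the choices across distinct iterations use fresh randomness, so all of $\{Z_v^{(i)}\}$ are mutually independent and each iteration has the same distribution.

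Next I would bound the probability that the algorithm never rejects, which (coupling all random choices as drawn in advance) is contained in the event that $Y=0$ in every iteration. Within one iteration, independence across $v$ gives $\Pr[Y=0] \le \prod_{v \in V\setminus B}(1 - \Pr[Z_v^{(i)}=1]) \le \exp\!\left(-\sum_{v\in V\setminus B}\Pr[Z_v^{(i)}=1]\right) = e^{-E[Y]} < e^{-\epsilon^2/8}$. Since the iterations are independent, the probability that $Y=0$ in all $32\epsilon^{-2}$ of them is at most $\left(e^{-\epsilon^2/8}\right)^{32\epsilon^{-2}} = e^{-4} < 1/3$. Therefore with probability at least $1 - e^{-4} > 2/3$ some iteration has $Y \ge 1$, and in that iteration some vertex outputs \reject{}, which proves the claim.

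The main obstacle to be careful about is precisely the step from the expectation bound $E[Y] > \epsilon^2/8$ of Lemma~\ref{lem:matched} to a constant per-iteration probability of detection: an expectation bound alone does not lower-bound $\Pr[Y \ge 1]$, and the argument genuinely relies on the fact that within an iteration $Y$ splits as a sum of \emph{independent} per-vertex indicators, so that $\Pr[Y=0]\le e^{-E[Y]}$, together with independence across iterations for the amplification. A secondary point is to make sure the ``detection'' event is correctly attributed to the low-degree endpoint of a matched edge, so that it lines up with the definitions of $A$, $A_T$, and ``matched'' used in Lemma~\ref{lem:matched}; and, as a trivial matter, vertices of degree at most $1$ (which lie on no triangle) can simply be ignored throughout.
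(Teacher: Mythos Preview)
Your proof is correct and follows essentially the same approach as the paper: both use Lemma~\ref{lem:matched} for the per-iteration expectation bound and the mutual independence of the per-vertex, per-iteration detection indicators. The only minor difference is in the final tail step---the paper pools all indicators into one sum $Z=\sum_{i,v}Z_{i,v}$ with $E[Z]\ge 4$ and applies a single Chernoff bound to get $\Pr[Z<1]\le 4/e^{3}$, whereas you bound the per-iteration failure by $\prod_v(1-\Pr[Z_v^{(i)}=1])\le e^{-E[Y]}$ and multiply across iterations to get $e^{-4}$; both give the required $2/3$ success probability.
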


\begin{proof}
If $G$ is triangle free then in each iteration $v$ receives ``no'' from $w_1$ and after all iterations it returns \accept.

Assume that $G$ is $\epsilon$-far from being triangle-free. Let $Z_{i,v}$ be an indicator variable for the event that vertex $v$ detects a triangle at iteration $i$.
First, we note that the indicators are independent, since a vertex detecting a triangle does not affect the chance of another vertex detecting a triangle (note that the graph is fixed), and the iterations are done independently.
Now, let $Z=\sum_{i=1}^{32\epsilon^{-2}} \sum_{v \in V} Z_{i,v}$, and notice that $Z$ is the total number of detections over all iterations. Lemma~\ref{lem:matched} implies that for a fixed $i$, it holds that $E[\sum_{v \in V} Z_{i,v}] = E[Y] \geq \epsilon^2/8$, which sums to:
\begin{align*}
E[Z] = E\left[\sum_{i=1}^{32\epsilon^{-2}} \sum_{v \in V}  Z_{i,v}\right] = \sum_{i=1}^{32\epsilon^{-2}} E\left[\sum_v Z_{i,v}\right] \geq \sum_{i=1}^{32\epsilon^{-2}} \epsilon^2/8 = 4.
\end{align*}
Using the Chernoff Bound from Fact~\ref{fact:chernoff} with $\delta=3/4$ and $\mu \geq 4$ gives
\begin{align*}
Pr[Z < 1] \leq Pr[Z < (1-\delta)\mu] < (\frac{e^{-3/4}}{(1-(3/4))^{(1-(3/4))}})^{4} = 4/e^3 < 2/3,
\end{align*}
and hence with probability at least $2/3$ at least one triangle is detected and the associated vertex outputs \reject, which completes the proof.
\end{proof}

In every iteration, each vertex initiates only two messages of size $O(\log{n})$ bits, one sent to
$w_1$ and one sent back by $w_1$. Since there are $O(\epsilon^{-2})$ iterations, this implies that the
number of rounds is $O(\epsilon^{-2})$ as well. This, together with Lemma~\ref{lem:tri-test},
completes the proof of Theorem~\ref{thm:tri}.

\section{Distributed bipartiteness test for bounded degree graphs}
\label{sec:bi}
In this section we show a distributed $\epsilon$-test for being bipartite for graphs with degrees bounded by $d$.
Our test builds upon the sequential test of~\cite{GR99} and, as in the case of triangle freeness, takes advantage of the ability to parallelize queries. While the number of queries of the sequential test is $\Omega(\sqrt{n})$~\cite{GoldreichR02}, the number of rounds in the distributed test is only \emph{polylogarithmic} in $n$ and polynomial in $\epsilon^{-1}$. As in~\cite{GR99}, we assume that $d$ is a constant, and omit it from our expressions (it is implicit in the $O$ notation for $L$ below).

Let us first outline the algorithm of~\cite{GR99}, since our distributed test borrows from its framework and our analysis is in part derived from it. The sequential test basically tries to detect odd cycles. It consists of $T$ iterations, in each of which a vertex $s$ is selected uniformly at random and $K$ random walks of length $L$ are performed starting from the source $s$. If, in any iteration with a chosen source $s$, there is a vertex $v$ which is reached by an even prefix of a random walk and an odd prefix of a random walk (possibly the same walk), then the algorithm rejects, as this indicates the existence of an odd cycle. Otherwise, the algorithm accepts.
To obtain an $\epsilon$-test the parameters are chosen to be $T=O(\epsilon^{-1})$, $K=O(\epsilon^{-4}\sqrt{n}\log^{1/2}{(n\epsilon^{-1})})$, and $L=O(\epsilon^{-8}\log^6{n})$.

The main approach of our distributed test is similar, except that a key ingredient is that we can afford to perform much fewer random walks from every vertex, namely $O(\poly{(\epsilon^{-1}\log{n\epsilon^{-1}})})$. This is because we can run random walks in parallel originating from all vertices at once. However, a crucial challenge that we need to address is that several random walks may collide on an edge, violating its congestion bound. To address this issue, our central observation is that \emph{lazy} random walks (chosen to have a uniform stationary distribution) provide for a very low probability of having too many of these collisions at once. The main part of the analysis is in showing that with high probability there will never be too many walks concurrently in the same vertex, so we can comply with the congestion bound. We begin by formally defining the lazy random walks that we use.

\begin{definition}
\label{def:lazy-rw}
A {\em lazy} random walk over a graph $G$ with degree bound $d$ is a random walk, that is, a (memory-less) sequence of random variables $Y_1,Y_2,\ldots$ taking values from the vertex set $V$, where the transition probability $Pr[Y_k=v|Y_{k-1}=u]$ is $1/2d$ if $uv$ is an edge of $G$, $1-deg(u)/2d$ if $u=v$, and $0$ in all other cases.
%A random walk is \emph{lazy} if from a vertex $v$ the next vertex of the walk is $w$, such that $w$ is $v$ itself with probability $1-deg(v)/2d$, and for every $u \in N(v)$, $w$ equals $u$ with probability $1/2d$.
\end{definition}
The stationary distribution for the lazy random walk of Definition~\ref{def:lazy-rw} is uniform~\cite[Section 8]{ron2010algorithmic}.
Next, we describe a procedure to handle one iteration of moving the random walks (Algorithm~\ref{alg:dist-move-walks}), followed by our distributed test for bipartiteness using lazy random walks from every vertex concurrently (Algorithm~\ref{alg:dist-bip-test-det}).
		
\begin{algorithm}[htbp]
\caption{Move random walks once with input $\xi$\label{alg:dist-move-walks}}
\KwVars{$W_v$ walks residing in $v$ (multiset), $H_v$ history of walks through $v$}
\KwIn{$\xi$, the maximum congestion per vertex allowed}
\tcpy{each walk is characterized by $(i,u)$ where $i$ is the number of actual moves and $u$ is the origin vertex}
\Simul{vertex $v$}
{
    \If(\tcpy*[h]{give up if exceeded the maximum allowed}){$|W_v|\leq\xi$}
    {
        \For{every $(i,u)$ in $W_v$}
        {
            draw next destination $w$ (according to the lazy walk scheme)\\
			\If(\tcpy*[h]{walk exits $v$}){$w\neq v$}
            {
				send $(i+1,u)$ to $w$\\
				remove $(i,u)$ from $W_v$
			}
		}
	}
    {\em wait} until the maximum time for all other vertices to process up to $\xi$ walks\\
	add the walks received by $v$ to $W_v$ and $H_v$ \tcpy{walks entering $v$}
}
\end{algorithm}

It is quite immediate that Algorithm \ref{alg:dist-move-walks} takes $O(\xi)$ communication rounds.
		
\begin{algorithm}[htbp]
\caption{Distributed bipartiteness test\label{alg:dist-bip-test-det}}
\KwVars{$W_v$ walks residing in $v$ (multiset), $H_v$ history of walks through $v$}
\Perform{$\eta= O(\epsilon^{-9} \log(n \epsilon^{-1}))$}
{
    \Simul{vertex $v$}
    {
        initialize $H_v$ and $W_v$ with two copies of the walk $(0,v)$
    }
    \Perform{$L=O(\epsilon^{-8}\log^6{n})$}
    {
        move walks using Algorithm \ref{alg:dist-move-walks} with input $\xi=\gamma+2=3(2\ln{n}+\ln{L})+2$
    }
    \Simul{vertex $v$}
    {
        \If{$H_v$ contains $(i,u)$ and $(j,u)$ for some $u$, even $i$ and odd $j$}
        {
            \reject ~(ending all operations) \tcpy{odd cycle found}
        }
    }
}
\accept ~(for vertices that did not reject)
\end{algorithm}

Our main result here is that Algorithm~\ref{alg:dist-bip-test-det} is indeed a distributed $\epsilon$-test for bipartiteness.

\begin{theorem}
\label{thm:bi}
\ThmBi
\end{theorem}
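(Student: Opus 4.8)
The plan is to check the three requirements of a $1$-sided distributed $\epsilon$-test in turn — perfect completeness, soundness with probability at least $2/3$, and the round bound — with the bulk of the effort in a \emph{congestion analysis} showing that the ``give up'' branch of Algorithm~\ref{alg:dist-move-walks} essentially never fires, so that the $2n$ tokens can be treated as $2n$ genuinely independent lazy random walks. Completeness needs no probability: in a bipartite graph every closed walk has even length, so if two tokens sharing an origin $u$ both visit a vertex $v$, their move-counters have the same parity (concatenate the route of one with the reverse of the route of the other to form a closed walk at $u$). Hence no $H_v$ ever contains a conflicting pair $(i,u),(j,u)$ with $i$ even and $j$ odd, and no vertex rejects — this holds verbatim even if some vertices ``give up'', since that only removes tokens. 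The round bound is deterministic: by the remark after Algorithm~\ref{alg:dist-move-walks} one move-step costs $O(\xi)$ rounds (a vertex forwards at most $\xi$ tokens along any edge, each an $O(\log n)$-bit pair $(i,u)$, and the test of $H_v$ is purely local), and there are $\eta$ outer iterations of $L$ move-steps each; substituting $\xi=\Theta(\gamma)=\Theta(\log n+\log L)$, $L=\Theta(\epsilon^{-8}\log^6 n)$ and $\eta=\Theta(\epsilon^{-9}\log(n\epsilon^{-1}))$ into $O(\eta L\xi)$ gives $O(\poly(\epsilon^{-1}\log(n\epsilon^{-1})))$.

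For the congestion analysis I would couple the execution with the \emph{idealized process} in which every token performs a true lazy random walk and the cap $\xi$ is ignored, using shared randomness so that the two processes agree as long as no ``give up'' has occurred. Let $\mathcal{E}$ be the event that, in the idealized process, no vertex holds more than $\xi$ tokens at the start of any of the $\eta L$ move-steps; on $\mathcal{E}$ the real algorithm never invokes the cap, so the two processes — and in particular their histories $H_v$ — coincide. To bound $\Pr[\neg\mathcal{E}]$: every vertex starts each iteration with exactly two tokens, and since the lazy transition matrix is doubly stochastic (equivalently, has uniform stationary distribution) the expected number of tokens at any vertex stays exactly $2$ after any number of moves; this count is a sum of $2n$ independent $\{0,1\}$ indicators, so the $\delta\ge 1$ variant of the multiplicative Chernoff bound (Fact~\ref{fact:chernoff}) with $\mu=2$ and $1+\delta=\xi/2$ (so $\delta=\gamma/2$) shows that a fixed vertex exceeds $\xi=\gamma+2$ tokens at a fixed step with probability below $e^{-\gamma/3}=1/(n^2L)$. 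A union bound over the $n$ vertices, the $L$ steps and the $\eta$ iterations gives $\Pr[\neg\mathcal{E}]<\eta/n=o(1)$ (and this can be pushed below any desired constant, uniformly in $n$, by inflating $\gamma$ additively by $O(\log\eta)$).

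For soundness, suppose $G$ is $\epsilon$-far from bipartite. Here I would import the analysis of~\cite{GR99}: with the walk length $L=O(\epsilon^{-8}\log^6 n)$ chosen as there, at least $\epsilon n/16$ sources $s$ are ``useful'', meaning that two independent lazy walks of length $L$ from $s$ reach some common vertex with opposite-parity move-counters with probability at least $1/q$, where $q=n\cdot\poly(\epsilon^{-1},\log n)$ (this is the per-pair form of their birthday-collision bound, with the $\Theta(\sqrt n)$ walks of the sequential test contributing the factor $n$). In one iteration of Algorithm~\ref{alg:dist-bip-test-det} every vertex — in particular every useful $s$ — launches exactly two such walks, and since the event ``the two tokens originating at $u$ produce a conflicting pair in some $H_v$'' depends only on the walks from $u$, these events are independent over $u$; hence the iteration produces no rejection with probability at most $(1-1/q)^{\epsilon n/16}\le\exp(-\epsilon/(16\poly(\epsilon^{-1},\log n)))$. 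The $\eta$ iterations are independent (each re-initializes the $W_v$ and $H_v$), so tracking constants through~\cite{GR99} to take $\eta=O(\epsilon^{-9}\log(n\epsilon^{-1}))$ drives the probability that the idealized process never rejects below $1/4$; writing $R^{\ast}$ for that rejection event and recalling that the real algorithm agrees with the idealized process on $\mathcal{E}$, the algorithm rejects with probability at least $\Pr[R^{\ast}]-\Pr[\neg\mathcal{E}]\ge 3/4-o(1)>2/3$.

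The step I expect to be the main obstacle is making this coupling fully rigorous. One must verify carefully that on $\mathcal{E}$ the histories $H_v$ recorded by the real algorithm are \emph{exactly} those of $2n$ independent lazy walks (so the Chernoff estimate concerns the correct random variable and the~\cite{GR99} lemma applies unchanged), and that the ``give up'' rule can only ever cause \emph{missed} detections, never spurious ones — which is precisely why its effect is absorbed into $\Pr[\neg\mathcal{E}]$ without threatening completeness or the soundness calculation. A smaller point needing attention is the move-counter bookkeeping: a lazy step that stays put does not increment $i$, so one must confirm that $L$ physical steps furnish enough genuine moves for the~\cite{GR99} estimate; but this is exactly the lazy-walk regime in which~\cite{GR99} operates, so their choice of $L$ transfers directly. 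The remaining pieces — doubly-stochasticity, independence across iterations and across origins, and the Chernoff arithmetic — are routine.
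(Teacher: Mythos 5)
Your proposal is correct and follows essentially the same route as the paper: a congestion bound obtained from the uniform stationary distribution (expected load exactly $2$ per vertex) plus a Chernoff and union bound, the \cite{GR99} good-vertices lemma combined with the per-pair detection probability $p_s(2)\ge 2p_s(K)/K(K-1)$, amplification over $\eta=O(\epsilon^{-9}\log(n\epsilon^{-1}))$ iterations, and a final union bound against the congestion failure event. If anything, your explicit coupling with the idealized uncapped process and your union bound over all $\eta L$ steps (giving $\eta/n$ rather than the paper's stated $1/n$) are slightly more careful than the paper's own treatment.
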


The number of communication rounds is immediate from the algorithm -- it is dominated by the $L$ calls to Algorithm \ref{alg:dist-move-walks}, making a total of $O(\xi L)$ rounds, which is indeed $O(\poly{(\epsilon^{-1} \log(n \epsilon^{-1}))})$. To prove the rest of Theorem~\ref{thm:bi} we need some notation, and a lemma from~\cite{GR99} that bounds from below the probabilities for detecting odd cycles if $G$ is $\epsilon$-far from being bipartite.

Given a source $s$, if there is a vertex $v$ which is reached by an even prefix of a random walk $w_i$ from $s$ and an odd prefix of a random walk $w_j$ from $s$, we say that walks $w_i$ and $w_j$ \emph{detect a violation}. Let $p_s(k,\ell)$ be the probability that, out of $k$ random walks of length $\ell$ starting from $s$, there are two that detect a violation.
Using this notation, $p_s(K,L)$ is the probability that the sequential algorithm outlined in the beginning rejects in an iteration in which $s$ is chosen. Since we are only interested in walks of length $L$, we denote $p_s(k)=p_s(k,L)$. A good vertex is a vertex for which this probability is bounded as follows.
\begin{definition}
\label{def:good}
A vertex $s$ is called \emph{good} if $p_s(K) \geq 1/10$.
\end{definition}

In \cite{GR99} it was proved that being far from bipartite implies having many good vertices.

\begin{lemma}[\cite{GR99}]
\label{lemma:GR}
If $G$ is $\epsilon$-far from being bipartite then at least an $\epsilon/16$-fraction of the vertices are good.
\end{lemma}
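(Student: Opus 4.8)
The plan is to prove the lemma by contraposition: assuming that strictly fewer than an $\epsilon/16$-fraction of the vertices are good, I will construct a partition $V = V_0 \cup V_1$ whose number of \emph{monochromatic} edges (edges with both endpoints in the same part) is at most $\epsilon d n$. Since deleting these edges yields a bipartite graph, this certifies that $G$ is $\epsilon$-close to bipartite, contradicting the assumption that it is $\epsilon$-far. Throughout I use that the lazy walk of Definition~\ref{def:lazy-rw} has the uniform stationary distribution, so that a length-$L$ walk can end at a vertex with either parity of its actual-move count, and that from most sources it mixes within $L = O(\epsilon^{-8}\log^6 n)$ steps.

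The first step is to recast the notion of \emph{good} as a statement about a collision probability. For a source $s$ let $p_s^0(v)$ and $p_s^1(v)$ denote the probabilities that a length-$L$ lazy walk from $s$ ends at $v$ having made, respectively, an even and an odd number of actual moves, and put $\kappa_s = \sum_{v} p_s^0(v)\, p_s^1(v)$. Detecting a violation among the $K$ walks from $s$ includes, as a special case, two walks ending at a common vertex with opposite parities; a birthday-paradox estimate then gives the lower bound $p_s(K) \geq 1 - \exp\!\big(-\Theta(K^2 \kappa_s)\big)$, and since $p_s(K)$ is itself an upper bound on this endpoint-collision probability, a \emph{non-good} vertex (i.e. $p_s(K) < 1/10$) must satisfy $\kappa_s \lesssim 1/K^2$. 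Thus the hypothesis says that $\kappa_s$ is small for all but an $\epsilon/16$-fraction of the sources.

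The second and central step turns a small $\kappa_s$ into a good partition. For a source $s$ from which the walk has mixed, each $p_s^b(v)$ is within a constant factor of $1/n$, so probability mass translates into vertex counts up to constants. Color each vertex by its majority parity from $s$, placing $v$ in $V_{b}$ for the $b$ maximizing $p_s^b(v)$. The key observation is that adjacency flips parity: if $(u,v) \in E$ then one extra lazy step carries an even-parity arrival at $u$ to an odd-parity arrival at $v$ with probability $1/2d$, so $p_s^{1-b}(v) \geq \tfrac{1}{2d}\,p_s^{b}(u)$. Hence if a monochromatic edge $(u,v)$ has both endpoints dominated by parity $b$, then $p_s^{1-b}(v)=\Omega(1/(dn))$ (and symmetrically at $u$): every vertex incident to a monochromatic edge carries minority mass $\Omega(1/(dn))$ while its majority mass is $\Theta(1/n)$, contributing $\Omega(1/(d n^2))$ to $\kappa_s$. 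As the $M$ monochromatic edges touch $\Omega(M/d)$ distinct vertices, this forces $\kappa_s = \Omega(M/(d^2 n^2))$, i.e. $M = O(d^2 n^2 \kappa_s) = O(d^2 n^2/K^2)$, which for the choice $K = \tilde\Theta(\sqrt n)$ used by the test is $\tilde O(n)$ and below the $\epsilon d n$ threshold once the constants are tracked.

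The final step aggregates over sources: averaging the per-source bound over a random $s$ (drawn from the near-uniform hitting distribution), while discarding both the fewer-than-$\epsilon/16$-fraction of good sources and the negligible probability mass belonging to walks that have not mixed, produces a single partition whose total number of violating edges is at most $\epsilon d n$, completing the contraposition. I expect the main obstacle to be precisely this mixing/near-uniformity control underlying the second and third steps: the passage from probability mass to edge counts is valid only on the portion of the walk's support that is close to uniform, and a graph that is $\epsilon$-far from bipartite may still be poorly connected or contain bottlenecks. Calibrating $L = O(\epsilon^{-8}\log^6 n)$ so that, for all but an $O(\epsilon)$-fraction of starts, the lazy walk spreads near-uniformly over $\Omega(n)$ vertices --- and bounding the error contributed by the un-mixed mass --- is where essentially all of the technical work, and the specific constant $\epsilon/16$, resides; this is exactly the content of the analysis in~\cite{GR99}, whose mixing and collision estimates I would invoke to fill in these quantitative bounds.
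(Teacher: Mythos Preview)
The paper does not prove this lemma at all: it is stated with the attribution \cite{GR99} and used as a black box. So there is no ``paper's own proof'' to compare against; the paper simply imports the result.

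Your outline does capture the high-level contrapositive strategy of \cite{GR99}: from a source $s$ that is not good, manufacture a $2$-coloring with few monochromatic edges, then argue that some such coloring works globally. As a self-contained argument, however, the sketch has two genuine gaps that you yourself flag at the end.

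First, the step ``for a source from which the walk has mixed, each $p_s^b(v)$ is within a constant factor of $1/n$'' is the heart of the matter and is not available for free. A bounded-degree graph that is $\epsilon$-far from bipartite need not be an expander and may even be disconnected, so there is no global mixing time and no reason for the walk from a typical $s$ to spread over $\Omega(n)$ vertices. The analysis in \cite{GR99} does not assume global mixing; it works locally, identifying for each source a ``rapidly-mixing'' portion of the graph and separately controlling the contribution of the remainder. Collapsing this to a single mixing assumption is exactly the step that cannot be carried out, and it is where the specific length $L$ and the constant $\epsilon/16$ come from.

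Second, ``averaging the per-source bound over a random $s$ \ldots produces a single partition'' is not an operation that makes sense: different sources induce different $2$-colorings, and an average of bipartitions is not a bipartition. The actual construction in \cite{GR99} fixes a single partition by a careful consistency argument across sources, not by averaging.

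Since your final paragraph defers both of these points back to \cite{GR99}, the proposal effectively reduces to ``cite \cite{GR99}'', which is precisely what the paper does.
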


In contrast to~\cite{GR99}, we do not perform $K$ random walks from every vertex in each iteration, but rather only $2$. Hence, what we need for our analysis is a bound on $p_s(2)$. To this end, we use $K$ as a parameter, and express $p_s(2)$ in terms of $K$ and $p_s(K)$.

\begin{lemma}
\label{lemma:ps2}
For every vertex $s$, $p_s(2) \geq 2p_s(K)/K(K-1)$.
\end{lemma}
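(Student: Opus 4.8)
The plan is to bound $p_s(K)$ from above by a union-bound style argument over the $\binom{K}{2}$ pairs of walks among the $K$ walks, so that each pair contributes at most $p_s(2)$. First I would set up the probability space carefully: performing $K$ independent lazy random walks of length $L$ from $s$, let $V_{a,b}$ denote the event that walks $w_a$ and $w_b$ detect a violation (i.e.\ some vertex $v$ is reached by an even-length prefix of one and an odd-length prefix of the other). Since the sequential algorithm rejects in an iteration with source $s$ exactly when some pair detects a violation, we have $p_s(K) = \Pr\!\left[\bigcup_{1\le a<b\le K} V_{a,b}\right]$.

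Next I would apply the union bound, giving $p_s(K) \le \sum_{1\le a<b\le K} \Pr[V_{a,b}]$. The key observation is that the $K$ walks are mutually independent and identically distributed, so for every pair $(a,b)$ the marginal distribution of the pair $(w_a, w_b)$ is exactly that of two independent length-$L$ walks from $s$; hence $\Pr[V_{a,b}] = p_s(2)$ for each of the $\binom{K}{2} = K(K-1)/2$ pairs. This yields $p_s(K) \le \frac{K(K-1)}{2}\, p_s(2)$, which rearranges to $p_s(2) \ge \frac{2 p_s(K)}{K(K-1)}$, as claimed.

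I do not expect a serious obstacle here — the lemma is essentially a union bound combined with exchangeability of i.i.d.\ walks. The one point that needs a moment's care is making sure the event ``two walks detect a violation'' is genuinely a property of the \emph{pair} of walks alone (so that its probability does not depend on the other $K-2$ walks, nor on which indices $a,b$ are chosen); this follows immediately because a violation detected by $w_a$ and $w_b$ is witnessed by prefixes of those two walks only, and the walks are independent. A minor subtlety worth noting in passing is that a single walk can detect a violation "with itself" if it reaches a vertex by both an even and an odd prefix — but this only helps, since such a walk's contribution is already counted within $p_s(2)$ when paired with any other walk (or can be folded into $V_{a,a}$-type events), and in any case the union bound over unordered pairs $a<b$ gives an upper bound on $p_s(K)$ that suffices for the inequality.
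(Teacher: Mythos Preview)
Your proof is correct and follows essentially the same approach as the paper: define the pairwise violation events, observe that by independence and identical distribution each has probability $p_s(2)$, and apply a union bound over the $\binom{K}{2}$ pairs to obtain $p_s(K)\le \frac{K(K-1)}{2}p_s(2)$. Your added remark about self-detection by a single walk is a welcome clarification that the paper leaves implicit.
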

\begin{proof}
Fix a source vertex $s$. For every $i,j \in [K]$, let $q_{i,j}$ be the probability of walks $w_i,w_j$ from $s$ detecting a violation. Because different walks are independent, we conclude that for every $i \neq j$ it holds that $q_{i,j}=p_s(2)$. Let $A_{i,j}$ be the event of walks $w_i,w_j$ detecting a violation. We have
\begin{align*}
p_s(K) = Pr[\cup_{i,j}A_{i,j}] \leq \sum_{i,j} Pr[A_{i,j}] = p_s(2)K(K-1)/2,
\end{align*}
which implies that $p_s(2) \geq 2p_s(K)/K(K-1)$.
\end{proof}

Using this relationship between $p_s(2)$ and $K$ and $p_s(K)$, we prove that our algorithm is an
$\epsilon$-test. First we prove this for the random walks themselves, ignoring the possibility that
Algorithm \ref{alg:dist-move-walks} will skip moving random walks due to its condition in Line~$2$.

\begin{lemma}\label{lem:walks-detect}
If $G$ is $\epsilon$-far from being bipartite, and we perform $\eta$ iterations of starting $2$ random walks of length $L$ from every vertex, then the probability that no violation is detected is bounded by $1/4$.
\end{lemma}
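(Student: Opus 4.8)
The plan is to amplify the per-iteration bound on $p_s(2)$ from a good vertex, using the independence of iterations, and then union over the good vertices. First I would fix attention on a single good vertex $s$, for which $p_s(K) \geq 1/10$ by Definition~\ref{def:good}. By Lemma~\ref{lemma:ps2}, in a single iteration the two random walks started from $s$ detect a violation with probability $p_s(2) \geq 2p_s(K)/K(K-1) \geq 1/(5K(K-1)) = \Omega(K^{-2})$. Plugging in $K = O(\epsilon^{-4}\sqrt{n}\log^{1/2}(n\epsilon^{-1}))$ gives $p_s(2) = \Omega(\epsilon^{8}/(n\log(n\epsilon^{-1})))$. Since the $\eta$ iterations are independent (each re-initializes $W_v, H_v$ with fresh walks), the probability that none of the $\eta$ iterations started at $s$ detects a violation is at most $(1-p_s(2))^{\eta} \leq e^{-\eta p_s(2)}$.

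Next I would choose $\eta$ large enough to make this failure probability small even after a union bound over all good vertices. By Lemma~\ref{lemma:GR}, there are at least $\epsilon n/16$ good vertices, so it suffices to have $e^{-\eta p_s(2)} \leq \frac{1}{4} \cdot \frac{16}{\epsilon n}$, i.e. $\eta p_s(2) \geq \ln(\epsilon n/64) = O(\log(n\epsilon^{-1}))$. Since $p_s(2) = \Omega(\epsilon^{8}/(n\log(n\epsilon^{-1})))$, it is enough to take $\eta = O(\epsilon^{-8} \cdot n\log(n\epsilon^{-1}) \cdot \log(n\epsilon^{-1}))$ — wait, this is too large; the key point is that we do \emph{not} need to start the walks only at good vertices, but at \emph{all} $n$ vertices simultaneously, so in each iteration we get $n$ independent ``trials'' and only need roughly $\Omega(\epsilon^{-1})$ iterations worth of good-vertex trials. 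Concretely, I would instead argue: let $N_s$ be the number of iterations in which walks from $s$ detect a violation; over $\eta$ iterations, the probability that \emph{some} good vertex detects a violation is at least $1 - \prod_{s \text{ good}} e^{-\eta p_s(2)} \geq 1 - e^{-\eta \cdot (\epsilon n/16) \cdot \Omega(\epsilon^8/(n\log(n\epsilon^{-1})))} = 1 - e^{-\Omega(\eta\epsilon^9/\log(n\epsilon^{-1}))}$, and requiring the exponent to exceed $\ln 4$ forces $\eta = O(\epsilon^{-9}\log(n\epsilon^{-1}))$, matching the parameter in Algorithm~\ref{alg:dist-bip-test-det}.

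The main step is therefore the exponent bookkeeping: checking that with $\eta = O(\epsilon^{-9}\log(n\epsilon^{-1}))$ and $p_s(2) = \Omega(\epsilon^8/(n\log(n\epsilon^{-1})))$ the product $\eta \cdot (\epsilon n/16) \cdot p_s(2)$ is at least a sufficiently large constant, giving $\prod_{s\text{ good}}(1-p_s(2))^{\eta} \leq 1/4$. I would be careful to use the bound $\prod_s (1-p_s(2))^\eta \leq e^{-\eta \sum_s p_s(2)}$ rather than treating vertices one at a time with a union bound, since summing the tiny probabilities $p_s(2)$ over the $\Omega(\epsilon n)$ good vertices is exactly what makes $n$ cancel out of the final expression for $\eta$.

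The hard part — or rather, the subtle part — is that this lemma deliberately ignores the congestion cap: Algorithm~\ref{alg:dist-move-walks} aborts moving walks at a vertex whose load exceeds $\xi$, and \textbf{that} event could suppress a violation that the idealized analysis counts. So the proof of Lemma~\ref{lem:walks-detect} itself is the ``clean'' combinatorial half, and its statement is phrased precisely to quarantine the congestion issue; the genuine obstacle (handled separately, presumably in the lemma that follows) is showing that with high probability no vertex ever holds more than $\xi = 3(2\ln n + \ln L) + 2$ walks, so that the idealized process and the actual algorithm coincide with probability close to $1$. Within the scope of this lemma, though, no such coupling is needed, and the argument is just the amplification-plus-union computation sketched above.
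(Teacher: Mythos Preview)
Your final argument is correct and reaches the same $\eta = O(\epsilon^{-9}\log(n\epsilon^{-1}))$ as the paper, but the route is a bit different. The paper defines indicator variables $X_{i,s}$ for ``walks from $s$ detect a violation in iteration $i$'', computes $E[X]=\sum_{i,s}p_s(2)\geq \eta n\epsilon/(80K^2)\geq 4$, and then applies a Chernoff bound to conclude $\Pr[X<1]<1/4$. You instead bound the failure probability directly as $\prod_{s\text{ good}}(1-p_s(2))^{\eta}\leq \exp\bigl(-\eta\sum_{s\text{ good}}p_s(2)\bigr)$ and plug in $\sum_{s\text{ good}}p_s(2)\geq (\epsilon n/16)\cdot 1/(5K(K-1))$. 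Both arguments use the same two ingredients (Lemma~\ref{lemma:GR} for the count of good vertices, Lemma~\ref{lemma:ps2} for $p_s(2)$) and the same independence of the $X_{i,s}$; yours is arguably more elementary since it only needs $1-x\leq e^{-x}$ rather than a concentration inequality. Your observation that the lemma is deliberately phrased for the idealized (uncapped) walks, with congestion handled separately by Lemma~\ref{lemma:cong}, is exactly right.

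One cosmetic remark: the detour through a per-vertex union bound before you self-correct is distracting; in a cleaned-up version you should go straight to the product bound, since that is the step where the factor of $n$ from $K^2$ cancels against the $\Omega(\epsilon n)$ good vertices.
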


\begin{proof}
Assume that $G$ is $\epsilon$-far from being bipartite. By Lemma~\ref{lemma:GR}, at least $n\epsilon/16$ vertices are good, which means that for each of these vertices $s$, $p_s(K) \geq 1/10$. This implies that $\sum_{s \in V} p_s(K) \geq n \epsilon/160$.
Now, let $X_{i,s}$ be a random variable indicating whether there are two random walks starting at $s$ that detect a violation. Let $X=\sum_{i=0}^{\eta}\sum_{s \in V}X_{i,s}$. We prove that $Pr[X<1] < 1/4$. First, we bound $E[\sum_{s \in V}X_{i,s}]$ for some fixed $i$:
\begin{eqnarray*}
E[X] &=& E\left[\sum_{i=0}^{\eta}\sum_{s \in V}X_{i,s}\right] = \sum_{i=0}^{\eta}\sum_{s \in V}E[X_{i,s}] \\
&=& \sum_{i=0}^{\eta}\sum_{s \in V}{p_s(2)} \geq \sum_{i=0}^{\eta}\sum_{s \in V}{\frac{2p_s(K)}{K(K-1)}} \\
&=& {\frac{2}{K(K-1)}}\sum_{i=0}^{\eta}\sum_{s \in V}{p_s(K)} \geq {\frac{2}{K(K-1)}}\sum_{i=0}^{\eta}\frac{n\epsilon}{160}\\
&=& \frac{\eta n \epsilon}{80K(K-1)} \geq \frac{\eta n \epsilon}{80K^2}.
\end{eqnarray*}
For $\eta = 320K^2/ n\epsilon = O(\epsilon^{-9} \log(n \epsilon^{-1}))$ it holds that $E[X] \geq 4$.
Using the Chernoff Bound of Fact~\ref{fact:chernoff} with $\delta=3/4$ and $\mu \geq 4$ gives:
\begin{align*}
Pr[X < 1] \leq Pr[X < (1-\delta)\mu] < (\frac{e^{-3/4}}{(1-(3/4))^{(1-(3/4))}})^{4} = \frac{4}{e^3} < 1/4,
\end{align*}
which completes the proof.
\end{proof}

As explained earlier, the main hurdle on the road to prove Theorem~\ref{thm:bi} is in proving that the allowed congestion will not be exceeded. We prove the following general claim about the probability for $k$ lazy random walks of length $\ell$ from each vertex to exceed a maximum \emph{congestion factor} of $\xi$ walks allowed in each vertex at the beginning of each iteration. Here, an iteration is a sequence of rounds in which all walks are advanced by one step (whether or not they actually switch vertices).
\begin{lemma}
\label{lemma:cong}
With probability at least $1-1/n$, running $k$ lazy random walks of length $\ell$ originating from
every vertex will not exceed the maximum \emph{congestion factor} of $\xi=\gamma+k=3(2\ln n + \ln
\ell)+k$ walks allowed in each vertex at the beginning of each iteration, if $\gamma > k$.
\end{lemma}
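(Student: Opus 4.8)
The plan is to bound, for a fixed vertex $v$ and a fixed iteration $t \le \ell$, the probability that more than $\xi = \gamma + k$ walks reside in $v$ at the start of iteration $t$, and then take a union bound over all $n$ vertices and all $\ell$ iterations. The key structural fact I would exploit is that the stationary distribution of the lazy walk is uniform, and more importantly that a lazy walk started anywhere, after any number of steps, has each vertex as its location with probability \emph{at most} some controlled quantity. Actually the cleanest route: since the lazy walk is reversible with uniform stationary distribution, the probability that a walk started at a uniformly random vertex is at $v$ after $t$ steps is exactly $1/n$; but the walks here start two-per-vertex (or $k$-per-vertex), i.e.\ from a fixed multiset of $kn$ origins, so the expected number of the $kn$ walks sitting at $v$ at time $t$ is exactly $k$ (by linearity, summing $\Pr[\text{walk from }u\text{ is at }v\text{ at time }t]$ over all origins $u$, which sums to $k$ because for each of the $k$ ``copies'' the row-sums of the $t$-step transition matrix over the origin equal $1$, and by symmetry/doubly-stochasticity the column sum over origins is also $1$). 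So $\mu := \mathbb{E}[|W_v| \text{ at step } t] = k$.

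Next I would apply a Chernoff-type bound to $|W_v|$ at step $t$. The subtlety is that $|W_v|$ is a sum of indicator variables — one per walk, indicating ``this walk is at $v$ at time $t$'' — and these indicators are \emph{independent} across walks, because the walks are performed independently of one another (the congestion cap is what couples them, but for this lemma we analyze the idealized uncapped process, exactly as the statement's phrasing ``running $k$ lazy random walks \ldots will not exceed'' intends, and as Lemma~\ref{lem:walks-detect} is set up to allow). So $|W_v|$ at step $t$ is a sum of independent $\{0,1\}$ variables with mean $k$, and I can invoke Fact~\ref{fact:chernoff}. I want $\Pr[|W_v| > \gamma + k]$ to be at most $1/(n^2 \ell)$ so that the union bound over $n$ vertices and $\ell$ steps gives failure probability at most $1/n$. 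Writing $\gamma + k = (1+\delta)k$ with $\delta = \gamma/k > 1$ (using the hypothesis $\gamma > k$), the bound $\Pr[X \ge (1+\delta)\mu] < e^{-\delta\mu/3}$ for $\delta \ge 1$ gives $e^{-\delta k /3} = e^{-\gamma/3}$. Plugging $\gamma = 3(2\ln n + \ln \ell)$ yields $e^{-2\ln n - \ln \ell} = 1/(n^2\ell)$, exactly what is needed.

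The main obstacle — the place where care is genuinely required rather than routine — is justifying that $\mathbb{E}[|W_v| \text{ at step } t] = k$ for \emph{every} $t$, i.e.\ that the walks do not concentrate on low-degree vertices over time. This is precisely why the laziness is built in: for a non-lazy walk on a bounded-degree graph the $t$-step distribution from a fixed start is not uniform and can put more than $1/n$ mass on some vertex, but the lazy walk of Definition~\ref{def:lazy-rw} is reversible with respect to the uniform measure, hence its transition matrix is symmetric (not merely doubly stochastic — symmetric), so $P^t$ is symmetric for all $t$, and therefore $\sum_{u}(P^t)_{u,v} = \sum_u (P^t)_{v,u} = 1$. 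Summing over the $k$ copies started at each origin, the expected occupancy of $v$ is $\sum_u k \cdot (P^t)_{u,v}/1$ — wait, more precisely, each walk from origin $u$ contributes $(P^t)_{u,v}$ to the expected count at $v$, and there are $k$ walks from each $u$, so the total is $k \sum_u (P^t)_{u,v} = k$. Once this is nailed down, the rest is the Chernoff computation above plus the two-fold union bound, and I would also remark that since the condition in Line~2 of Algorithm~\ref{alg:dist-move-walks} only ever \emph{removes} walks relative to the idealized process, the real coupled process is stochastically dominated by the idealized one, so the congestion-factor bound for the idealized process implies it for the algorithm as actually run.
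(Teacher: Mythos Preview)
Your proof is correct and follows essentially the same route as the paper: show that the expected number of walks at any vertex at any time is exactly $k$, apply the multiplicative Chernoff bound with $\delta=\gamma/k\ge 1$ to get a per-vertex per-step failure probability of $e^{-\gamma/3}=1/(n^2\ell)$, and union-bound over all $n$ vertices and $\ell$ steps. The only difference is how the expectation is computed: the paper (in its Lemma~\ref{lem:exp-walks-in-vertex}) uses a random-relabeling argument to argue that each walk's marginal location is uniform at every time, whereas you use the symmetry (double stochasticity) of the lazy-walk transition matrix directly via $\sum_u (P^t)_{u,v}=1$. The two arguments are equivalent and yours is arguably more direct.

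One small correction to your closing remark: the condition in Line~2 of Algorithm~\ref{alg:dist-move-walks} does not \emph{remove} walks when the cap is exceeded; it simply declines to advance them, so the capped process is not stochastically dominated by the idealized one in the way you describe. This remark is, however, extraneous to the lemma itself, which concerns only the idealized uncapped walks; the paper makes the connection to the actual algorithm by a union bound in the proof of Lemma~\ref{lemma:bi-test}, not by domination.
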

We show below that plugging $k=2$, $\ell=L$ and $\gamma=3(2\ln{n}+\ln{L})$ in Lemma~\ref{lemma:cong}, together with Lemma~\ref{lem:walks-detect}, gives the correctness of Algorithm \ref{alg:dist-bip-test-det}.

To prove Lemma~\ref{lemma:cong}, we argue that it is unlikely for any vertex to have more than $k+\gamma$ walks in any iteration. Given that this is indeed the case in every iteration, the lemma follows by a union bound. We denote by $X_{v,i}$ the random variable whose value is the number of random walks at vertex $v$ at the beginning of the $i$-th iteration. That is, it is equal to the size of the set $W_v$ in the description of the algorithm.

%We define the following random variables:
%\begin{itemize}
%\item $W_{v,i}$ - The number of random walks at vertex $v\in V$ in the beginning of iteration $i \in [\ell]$.
%\item $W_{v,i}^{j}$ - Indicator variable for the event of the $j$-th walk at $v\in V$ in the beginning of iteration $i\in[\ell]$ to stay in place.
%\item $W_{vu,i}^{j}$ - Indicator variable for the event of the $j$-th walk at $v\in V$ in the beginning of iteration $i\in[\ell]$ to leave $v$ via the edge $(v,u)\in E$.
%\end{itemize}
%The number of walks at a vertex in the beginning of a given iteration is the number of walks that stay in place during the previous iteration plus the number of walks that enter the vertex in the previous iteration. Using the above notation, we have $X_{v,i+1} = \sum_{j=1}^{X_{v,i}} W_{v,i}^{j} + \sum_{u \in N(v)}\sum_{j=1}^{X_{u,i}} W_{uv,i}^{j}$. This is useful for obtaining the expected number of walks at each vertex, as given in the following lemma.
\begin{lemma}
\label{lem:exp-walks-in-vertex}
For every vertex $v\in V$ and every iteration $i$ it holds that $E[X_{v,i}] = k$.
\end{lemma}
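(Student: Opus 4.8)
The plan is to prove this by induction on the iteration index $i$, using the fact that the lazy random walk has the uniform distribution as its stationary distribution, and that the uniform distribution is also its \emph{initial} distribution in the relevant sense. Recall that the algorithm starts with exactly $k$ walks at each vertex (here $k=2$), so the multiset of all walk-positions, viewed as a collection of $kn$ tokens, has each token independently... well, not quite independently, but the point is that the \emph{marginal} position of each individual token is uniform at time $0$, and uniformity is preserved by one step of the lazy walk.

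First I would fix a single walk (token) $\tau$ and let $Z_{\tau,i}$ be the indicator that $\tau$ resides at $v$ at the beginning of iteration $i$; then $X_{v,i} = \sum_\tau Z_{\tau,i}$ where the sum ranges over all $kn$ walks. By linearity of expectation it suffices to show that for each walk $\tau$, $\Pr[\tau \text{ is at } v \text{ at the start of iteration } i] = 1/n$, since summing over the $kn$ walks then gives $E[X_{v,i}] = kn \cdot (1/n) = k$. To establish the per-walk claim, I would argue by induction on $i$: at $i=0$ (or $i=1$, depending on indexing) the originating vertex of $\tau$ is uniform over $V$ — indeed there are exactly $k$ walks born at each of the $n$ vertices, so a walk chosen uniformly among all $kn$ walks has uniform origin — hence $\Pr[\tau \text{ at } v] = 1/n$. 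For the inductive step, the position at the start of iteration $i+1$ is obtained from the position at the start of iteration $i$ by applying one step of the lazy-walk transition matrix $M$; since by the inductive hypothesis the position distribution is the uniform vector $\pi$, and $\pi M = \pi$ because $\pi$ is stationary for the lazy walk (cited from \cite[Section 8]{ron2010algorithmic}, and guaranteed by the doubly-stochastic / symmetric structure of $M$ in Definition~\ref{def:lazy-rw}), the position remains uniform. This closes the induction and gives $E[X_{v,i}] = k$ for all $v$ and $i$.

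One subtlety I would be careful about: the condition in Line~$2$ of Algorithm~\ref{alg:dist-move-walks} can \emph{halt} walks when congestion exceeds $\xi$, which would make the actual process deviate from the idealized lazy walk. For the purposes of this lemma, however, I would state the expectation computation for the \emph{idealized} process in which every walk always moves according to the lazy-walk scheme regardless of congestion — this is exactly the process whose congestion Lemma~\ref{lemma:cong} is about to bound, and the halting never occurs in the high-probability event analyzed there. So there is no circularity: Lemma~\ref{lem:exp-walks-in-vertex} feeds into a concentration bound (via Chernoff on $X_{v,i} = \sum_\tau Z_{\tau,i}$, whose summands are independent across distinct origin vertices) that shows $X_{v,i} \le \gamma + k$ with high probability, which is the content of Lemma~\ref{lemma:cong}.

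The main obstacle — though it is more a matter of careful bookkeeping than of real difficulty — is making precise the sense in which the $kn$ walks are "independent enough'' for the later Chernoff argument, and ensuring the marginal-uniformity argument does not secretly assume the walks are moved without the congestion cap. For the expectation statement itself, linearity sidesteps all independence issues entirely, so the proof of this particular lemma reduces to the clean observation that one step of a lazy random walk preserves the uniform distribution, applied $i$ times starting from a uniform origin.
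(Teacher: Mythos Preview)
Your approach is essentially the same as the paper's: both reduce to the observation that the uniform distribution is stationary for the lazy walk, so the ``mass'' at each vertex stays at $k$. There is one wobble worth flagging. You write that it suffices to show $\Pr[\tau\text{ is at }v]=1/n$ \emph{for each} walk $\tau$, and justify this by saying ``a walk chosen uniformly among all $kn$ walks has uniform origin.'' But for a \emph{fixed} walk $\tau$ the origin is deterministic, so $\Pr[\tau\text{ at }v\text{ at }i=0]\in\{0,1\}$, not $1/n$; what is true is that the \emph{average} of $\Pr[\tau\text{ at }v]$ over all $kn$ walks equals $1/n$, which is exactly what linearity needs. The paper makes this averaging explicit by introducing a uniformly random permutation $\sigma$ of the $kn$ walk labels and working with $Z_i^t=Y_i^{\sigma(t)}$, so that each $Z_1^t$ genuinely has the uniform distribution and stationarity can be invoked pointwise. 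Your argument is correct once this quantifier slip is repaired (or, equivalently, once you observe that $\sum_u p_i(u,v)=1$ because the lazy-walk matrix is doubly stochastic).
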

\begin{proof}
Let us first define random variables for our walks.
Enumerating our $kn$ walks ($k$ from each of the $n$ vertices) arbitrarily, let $Y_1^r,Y_2^r,\ldots$ denote the sequence corresponding to the $r$'th walk, that is, $Y_i^r$ is the vertex where the $r$'th walk is stationed at the beginning of the $i$'th iteration. In particular, $X_{v,i}=|\{r:Y_i^r=v\}|$.

Now let us define new random variables $Z_i^t$ in the following manner: First, we choose uniformly at random a permutation $\sigma:[rk]\to [rk]$. Then we set $Z_i^t=Y_i^{\sigma(t)}$ for all $1\leq i\leq\ell$ and $1\leq t\leq kn$. The main thing to note is that for any fixed $t$, $Z_1^t,Z_2^t,\ldots$ is a random walk (as it is equal to one of the random walks $Y_1^r,Y_2^r,\ldots$). But also, for every $t$, $Z_1^t$ is uniformly distributed over the vertex set of $G$, because we started with exactly $k$ random walks from every vertex. Additionally, since the uniform distribution is stationary for our lazy walks, this means that the unconditional distribution of each $Z_i^t$ is also uniform.

Now, since $\sigma$ is a permutation, it holds that $X_{v,i}=|\{r:Y_i^r=v\}|=|\{t:Y_i^{\sigma(t)}=v\}|=|\{t:Z_i^t=v\}|$. The expectation (by linearity of expectation) is thus $E[X_{v,i}]=\sum_{t=1}^{kn}Pr[Z_i^t=v]=k$.
%We prove the lemma by induction. For the base case $i=1$, note that $X_{v,1}=k$ for every $v\in V$.
%Given iteration $i$, we assume that $E[X_{v,i}] = k$ for every $v\in V$, and we prove that $E[X_{v,i+1}] = k$ for every $v\in V$.
%\begin{align}
%\label{eq:E}
%E[X_{v,i+1}] = E[\sum_{j=1}^{X_{v,i}} W_{v,i}^{j}] + E[\sum_{u \in N(v)}\sum_{j=1}^{X_{u,i}} W_{uv,i}^{j}] = E[\sum_{j=1}^{X_{v,i}} W_{v,i}^{j}] + \sum_{u \in N(v)}E[\sum_{j=1}^{X_{u,i}} W_{uv,i}^{j}].
%\end{align}
%We apply the smoothing lemma to calculate the expectation of the two summations.
%\begin{align}
%\label{eq:Wv}&E[\sum_{j=1}^{X_{v,i}} W_{v,i}^{j}] = E[E[\sum_{j=1}^{X_{v,i}} W_{v,i}^{j} \mid X_{v,i}]] = E[X_{v,i}(1-deg(v)/2d)]=E[X_{v,i}](1-deg(v)/2d),\\
%\label{eq:Wuv}&E[\sum_{j=1}^{X_{u,i}} W_{uv,i}^{j}] = E[E[\sum_{j=1}^{X_{u,i}} W_{uv,i}^{j} \mid X_{u,i}]] = E[X_{u,i} / 2d] = E[X_{u,i}] / 2d.
%\end{align}
%Plugging Equations (\ref{eq:Wv}) and (\ref{eq:Wuv}) into Equation (\ref{eq:E}), and using the induction hypothesis that $E[X_{w,i}] = k$ for every $w\in V$, gives:
%\begin{align*}
%E[X_{v,i+1}] = E[X_{v,i}](1-deg(v)/2d) + \sum_{u \in N(v)}E[X_{u,i}] / 2d = k(1-deg(v)/2d) + k deg(v) / 2d = k,
%\end{align*}
%completing the proof.
\end{proof}

We can now prove Lemma~\ref{lemma:cong}.
\begin{proof}[Proof of Lemma~\ref{lemma:cong}]
We first claim that for every iteration $i \in [\ell]$ and every vertex $v\in V$, with probability at least $1-1/\ell n$ it holds that $X_{v,i} \leq k+\gamma$.
To show this, first fix some $v \in V$.
Let $Z_{j,i}$ be the indicator variable for the event of walk $w_j$ residing at vertex $v$ at the beginning of iteration $i$, where $j\in[k n]$. Then $X_{v,i} = \sum_{j=1}^{k n}Z_{j,i}$, and the variables $Z_{j,i}$, where $j\in[k n]$, are all independent. We use the Chernoff Bound of Fact~\ref{fact:chernoff} with $\delta = \gamma / k \geq 1$ and $\mu = k$ as proven in Lemma~\ref{lem:exp-walks-in-vertex}, obtaining:
\begin{align*}
Pr[X_{v,i} > k + \gamma] = Pr[X_{v,i} > (\gamma / k + 1) k]  < e^{-\delta \mu / 3} = e^{-\gamma / 3} = e^{-(2\ln n + \ln \ell)} = 1/\ell n^2.
\end{align*}
Applying the union bound over all vertices $v\in V$ and all iterations $i\in [\ell]$, we obtain that with probability at least $1-1/n$ it holds that $X_{v,i}\leq k+\gamma$ for all $v$ and $i$.
%
%This completes the proof, as it implies that with probability at least $1-1/n$, each of $\ell$ iteration can be done in $k+\gamma$ rounds, completing in $(k+\gamma)\ell$ rounds in total.
\end{proof}

\begin{lemma}
\label{lemma:bi-test}
If $G$ is bipartite then all vertices output \accept ~in Algorithm~\ref{alg:dist-bip-test-det}. If $G$ is $\epsilon$-far from being bipartite, then with probability at least $2/3$ there exists a vertex that outputs \reject.
\end{lemma}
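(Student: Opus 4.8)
The plan is to assemble Lemma~\ref{lem:walks-detect} and Lemma~\ref{lemma:cong} into a proof that Algorithm~\ref{alg:dist-bip-test-det} is a valid $\epsilon$-test. The completeness side is immediate: if $G$ is bipartite, then no vertex can ever witness both an even-length prefix and an odd-length prefix of random walks ending at it, since such a configuration would close an odd cycle. Concretely, every entry $(i,u)$ added to $H_v$ records the parity of the length of a walk from $u$ reaching $v$; in a bipartite graph the parity of any walk between two fixed vertices is determined by the bipartition classes, so $H_v$ cannot simultaneously contain $(i,u)$ with $i$ even and $(j,u)$ with $j$ odd. Hence the rejection condition in Line~$8$ never fires, and every vertex outputs \accept. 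This direction does not even use the congestion bound.

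For the soundness side, suppose $G$ is $\epsilon$-far from being bipartite. The subtlety is that Algorithm~\ref{alg:dist-move-walks} may decline to move some walks (the guard $|W_v| \le \xi$ in Line~$2$), so the walks actually performed by Algorithm~\ref{alg:dist-bip-test-det} are not literally the $2n$ independent length-$L$ lazy walks per iteration analyzed in Lemma~\ref{lem:walks-detect}. The remedy is a coupling: run the ``ideal'' process of Lemma~\ref{lem:walks-detect} and the ``real'' process of the algorithm on the same randomness, and observe that they agree as long as no vertex ever exceeds the congestion factor $\xi = \gamma+2 = 3(2\ln n + \ln L) + 2$ at the start of an iteration. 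By Lemma~\ref{lemma:cong} applied with $k=2$, $\ell = L$, and $\gamma = 3(2\ln n + \ln L)$ (noting $\gamma > 2$ for all relevant $n$), the congestion factor is exceeded in a given run of the $L$-step walk process with probability at most $1/n$; by a union bound over the $\eta = O(\epsilon^{-9}\log(n\epsilon^{-1}))$ iterations of the outer loop, the probability that the real and ideal processes ever diverge is at most $\eta/n$, which is $o(1)$ and in particular below, say, $1/12$ for $n$ large enough (for small $n$ one can absorb this into constants or note the walk length already exceeds $n$). Conditioned on the processes agreeing throughout, Lemma~\ref{lem:walks-detect} guarantees that some pair of walks detects a violation with probability at least $3/4$, and whenever a violation is detected the corresponding vertex $v$ has $(i,u),(j,u) \in H_v$ with $i$ even and $j$ odd and outputs \reject. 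Combining, the probability that at least one vertex rejects is at least $3/4 - \eta/n \ge 2/3$.

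The main obstacle is making the coupling argument airtight, i.e.\ checking that the ideal and real processes are identical on the event that the congestion bound holds in every iteration. The key point is that Algorithm~\ref{alg:dist-move-walks} only ever \emph{skips} moving walks out of an overloaded vertex; it never alters the transition distribution of a walk that does move, and it never creates or destroys walks. Therefore, on the event that $|W_v| \le \xi$ at every vertex and every iteration — which is exactly the event controlled by Lemma~\ref{lemma:cong} — the guard in Line~$2$ is always satisfied, every walk is advanced every iteration according to the lazy-walk scheme, and the history sets $H_v$ are exactly those of the ideal process. One should also note that the history $H_v$ of the algorithm records enough information (origin and parity, via the move count $i$) to recognize a detected violation, which matches the definition of ``detect a violation'' used in Lemma~\ref{lem:walks-detect}. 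Once this identification is in place, the probability bookkeeping above completes the proof; together with the round-complexity count ($O(\xi L) = O(\poly(\epsilon^{-1}\log(n\epsilon^{-1})))$ already observed right after the statement of Theorem~\ref{thm:bi}), this establishes Theorem~\ref{thm:bi}.
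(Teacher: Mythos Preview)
Your proof is correct and follows the same approach as the paper: completeness is immediate from the absence of odd cycles, and soundness is obtained by combining Lemma~\ref{lem:walks-detect} with Lemma~\ref{lemma:cong} via a union bound, using the coupling between the ideal walks and the congestion-constrained walks. You are in fact slightly more careful than the paper: you union-bound the congestion-failure probability over all $\eta$ outer iterations to get $\eta/n$, whereas the paper writes only $1/n$; your version is the more accurate accounting, though both lead to the same conclusion since $\eta/n = o(1)$.
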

\begin{proof}
If $G$ is bipartite then all vertices output \accept ~in Algorithm~\ref{alg:dist-bip-test-det}, because there are no odd cycles and thus no violation detecting walks.

If $G$ is $\epsilon$-far from bipartite, we use Lemma \ref{lem:walks-detect}, in conjunction with Lemma~\ref{lemma:cong} with parameters $k=2$, $\ell=L$ and $\gamma=3(2\ln{n}+\ln{L})$ as used by Algorithm~\ref{alg:dist-bip-test-det}. By a union bound the probability to accept $G$ will be bounded by $1/4+1/n<1/3$ (assuming $n>12$), providing for the required bound on the rejection probability.
\end{proof}

Lemma~\ref{lemma:bi-test}, with the communication complexity analysis of Algorithm~\ref{alg:dist-bip-test-det}, gives Theorem~\ref{thm:bi}.

\section{Distributed test for cycle-freeness}
\label{sec:cycle}
In this section, we give a distributed algorithm to test if a graph $G$ with $m$ edges is cycle-free
or if at least $\epsilon m$ edges have to be removed to make it so. Intuitively, in order to search for cycles, one can run a breadth-first search (BFS) and have a vertex output \reject~if two different paths reach it. The downside of this exact solution is that its running time depends on the diameter of the graph. To overcome this, a basic approach would be to run a BFS from each vertex of the graph, but for shorter distances. However, running multiple BFSs simultaneously is expensive, due to the congestion on the edges. Instead, we use a simple prioritization rule that drops BFS constructions with lower priority, which makes sure that one BFS remains alive.\footnote{A more involved analysis of multiple prioritized BFS executions was used in \cite{Holzer2012}, allowing all BFS executions to fully finish in a short time without too much delay due to congestion. Since we require a much weaker guarantee, we can avoid the strong full-fledged prioritization algorithm of \cite{Holzer2012} and settle for a simple rule that keeps one BFS tree alive. Also, the multiple BFS construction of \cite{LenzenP13} does not fit our demands as it may not reach all desired vertices within the required distance, in case there are many vertices that are closer.}

Instead, our technique consists of three parts. First, we make the graph $G$ sparser, by removing each of its edges independently with probability $\epsilon/2$. We denote the sampled graph by $G'$ and prove that if $G$ is far from being cycle-free then so is $G'$, and in particular, $G'$ contains a cycle.

Then, we run a partial BFS over $G'$ from each vertex, while prioritizing by ids: each vertex keeps only the BFS that originates in the vertex with the largest id and drops the rest of the BFSs. The length of this procedure is according to a threshold $T=20\log{n}/\epsilon$. This gives detection of a cycle that is contained in a component of $G'$ with a low diameter of up to $T$, if such a cycle exists, since a surviving BFS covers the component. Such a cycle is also a cycle in $G$. If no such cycle exists in $G'$, then $G'$ has a some component with diameter larger than $T$. For large components, we take each surviving BFS that reached some vertex $v$ at a certain distance $\ell$, and from $v$ we run a new partial BFS in the \emph{original} graph $G$. These BFSs are again prioritized, this time according to the distance $\ell$. Our main tool here is proving a claim that says that with high probability, if there is a shortest path in $G'$ of length $T/2$ between two vertices, then there is a cycle in $G$ between them of length at most $T$. This allows our BFSs on $G$ to find such a cycle.

We start with the following combinatorial lemma that shows the above claim.
%about a graph $G'$ obtained from $G$ by deleting each edge with probability $\epsilon/2$ independently of other edges. %Our test will depend on this lemma.

\begin{lemma}
  Given a graph $G$, let $G'$ be obtained by deleting each edge in $G$ with probability
  $\epsilon/2$, independently of other edges. Then, with probability at least $1-1/n^3$, every vertex $v\in
  G'$ that has a vertex $w\in G'$ at a distance $10\log n/\epsilon$, has a closed path passing
  through it in $G$, that contains a simple cycle, of length at most $20\log n/\epsilon$.
  \label{lem:main}
\end{lemma}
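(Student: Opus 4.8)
I would prove Lemma~\ref{lem:main} by fixing a vertex $v$ that has some vertex $w$ at distance exactly $10\log n/\epsilon$ in $G'$, and showing that the event ``$v$ has no short closed path through it in $G$ containing a simple cycle'' is exponentially unlikely, then taking a union bound over all $n$ choices of $v$. The key structural observation is this: consider running a BFS in $G'$ from $v$, and look at the BFS layers $L_0 = \{v\}, L_1, L_2, \dots$. Since $w$ lies at distance $10\log n / \epsilon$, the ball of radius $10\log n/\epsilon$ around $v$ in $G'$ is nonempty up to that depth, so there are at least $10\log n/\epsilon$ nonempty layers. Now I want to find, within this BFS tree, a vertex $u$ at depth at most $10\log n/\epsilon$ that in $G$ has an edge to an ancestor or to another branch — i.e., a ``non-tree'' edge of $G$ (an edge deleted when forming $G'$) that closes a cycle of length at most $20\log n/\epsilon$. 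Such an edge, together with the two tree paths in $G'$ back to their common ancestor, gives a closed path through $v$ of length at most $2 \cdot 10\log n/\epsilon$ that contains a simple cycle.

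**The counting/probabilistic core.**

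The plan is to reveal the randomness of $G'$ layer by layer, BFS-style, and argue that in each step there is a constant probability (at least $\epsilon/2$ per candidate edge) of finding a cycle-closing deleted edge. More carefully: I would argue that if no short cycle through $v$ exists in $G$, then the BFS tree of $G'$ rooted at $v$ must, in $G$ itself, have \emph{no} edges between vertices whose tree-distance sum to the root is at most $20\log n/\epsilon$ — in particular, along the path from $v$ to $w$ of length $10\log n/\epsilon$, for each internal vertex $x_i$ on this path, none of $x_i$'s $G$-neighbors among the already-explored part of the tree can be connected to $x_i$ by a surviving edge of $G'$ (else $x_i$ would have lower BFS depth) nor by a deleted edge (else we get a short cycle). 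The point is that as the BFS explores $10\log n/\epsilon$ layers, each edge of $G$ that ``wants'' to close a short cycle is deleted independently with probability $\epsilon/2$, and there are at least $10\log n/\epsilon$ independent such opportunities (one per layer along the $v$–$w$ path, since consecutive path edges of $G$ exist and each endpoint has degree $\ge 1$ giving at least the path-edge itself as a candidate). So the probability of no such deleted chord is at most $(1-\epsilon/2)^{10\log n/\epsilon} \le e^{-5\log n} = n^{-5}$, and a union bound over $v$ gives failure probability at most $n^{-4} \le n^{-3}$.

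**The main obstacle.**

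The delicate part is making the ``independent opportunities'' argument rigorous: the BFS structure of $G'$ depends on which edges survived, so I cannot naively say ``there are $k$ fixed edges each deleted independently.'' I expect the cleanest fix is to expose edges in BFS order — when BFS in $G'$ reaches a vertex $x$ at depth $i$, it has examined certain edges of $G$ incident to the frontier; I reveal for each such edge whether it survived, and the BFS proceeds. The edge from $x_i$ to $x_{i+1}$ along the $v$–$w$ geodesic survived (it is a $G'$-edge); but the edge from $x_{i+1}$ back to $x_i$ is the same edge, so I instead need a \emph{chord}. The right statement is: either some chord gets revealed as deleted (giving the cycle), or the BFS keeps growing as a tree; and because $w$ is at depth exactly $10\log n/\epsilon$ and not less, at each depth $i < 10\log n/\epsilon$ the current path endpoint $x_i$ has in $G$ at least one neighbor already in the tree \emph{other than its BFS-parent} precisely when there's a candidate chord — I will need to handle the case where $G$ restricted to this neighborhood is itself a tree, in which case $G$ has no cycle near $v$ at all and there is genuinely nothing to prove for that $v$ (the hypothesis about $w$ being at that distance forces enough edges, but I should double-check degrees are at least $2$ somewhere along the path, or argue that a tree path of length $10\log n/\epsilon$ in $G$ still must fold back because... actually this is exactly where the edge-deletion count must come from the path edges themselves, viewing each as surviving-or-not in the reverse direction of a second exploration). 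I would reconcile this by sampling in two independent rounds or by being careful that the $v$–$w$ geodesic in $G'$ has length $10\log n/\epsilon$ while in $G$ the distance could be shorter, and the discrepancy is exactly due to deleted edges — which is the cycle-producing mechanism. This tension between ``$G'$-distance large'' and ``$G$-distance possibly small'' is, I believe, the real engine of the lemma and the step that needs the most care.
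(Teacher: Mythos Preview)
Your BFS-based approach with layer-by-layer edge revelation is substantially more complicated than needed, and the gap you yourself flag in the ``main obstacle'' paragraph is real and not closed by what you have written. Your probabilistic core asserts ``at least $10\log n/\epsilon$ independent opportunities'' for a deleted chord, but the only edges you actually exhibit (``the path-edge itself as a candidate'') are tree edges of the $G'$-BFS; they do not close any cycle. If the $(20\log n/\epsilon)$-neighborhood of $v$ in $G$ happens to be a tree, there are \emph{no} chord edges whatsoever, so your count of opportunities is zero. The lemma is still true in that situation (the unique $G$-path from $v$ to $w$ would then have to survive in full in $G'$, which is the low-probability event), but your argument as stated does not cover it, and the adaptive-revelation machinery you propose does not help, since the difficulty is structural (absence of chords), not a dependence issue.

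The observation you reach in your last sentence --- that the real engine is the discrepancy between $d_{G'}(v,w)=10\log n/\epsilon$ and a possibly smaller $d_G(v,w)$, forced by deleted edges --- is in fact the whole proof, and the paper runs it directly with no BFS revelation. For each pair $(v,w)$ with $d_G(v,w)=\ell:=10\log n/\epsilon$, fix one $G$-shortest path $P_0$; all $\ell$ of its edges survive in $G'$ with probability $(1-\epsilon/2)^\ell\le n^{-5}$, so a union bound over the at most $n^2$ pairs gives that with probability $\ge 1-n^{-3}$ every such $P_0$ loses an edge. Condition on this. Now take any $v,w$ with $d_{G'}(v,w)=\ell$ and let $P'$ be a $G'$-geodesic. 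If $d_G(v,w)<\ell$, any $G$-geodesic $P$ is strictly shorter than $P'$, hence $P\neq P'$, and $P\cup P'$ is a closed walk through $v$ of length at most $2\ell$ containing a simple cycle. If $d_G(v,w)=\ell$, then $P'$ is itself a $G$-shortest path lying entirely in $G'$, while $P_0$ lost an edge and hence $P_0\neq P'$; again $P_0\cup P'$ does the job. The point is that the bad event is phrased entirely in terms of $G$ (a fixed path surviving), so independence is immediate and no adaptive exposure is required. I would recommend discarding the BFS scaffolding and writing the two-line argument above.
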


\begin{proof}
  First, we show that for every pair $u,v$ of vertices in $G$ that are at a distance of $\ell = 10\log
  n/\epsilon$, one of the shortest paths between $u$ and $v$ is removed in the graph $G'$ with high
  probability. For a pair of vertices $u$ and $v$ at a distance $\ell$ in $G$, the probability that a
  shortest path is not removed is $(1-\epsilon/2)^{\ell}$, which is at most $1/n^5$. Therefore, by a union
  bound over all pairs of vertices, with probability at least $1-1/n^3$, at least one edge is
  removed from at least one shortest path between every pair of vertices that are at a distance of
  $10\log n/\epsilon$. Conditioned on this, we prove the lemma.

  Now, suppose that $v$ and $w$ are two vertices in $G'$ at a distance of $10\log n/\epsilon$. Let
  $P'$ be this shortest path in $G'$. Suppose $P$ is the shortest path between $v$ and $w$ in $G$.
  If $|P|<10\log n/\epsilon$, then this path is no longer present in $G'$ (and thus distinct from
  $P'$) and $P\cup P'$ is a closed path in $G$, passing through $v$ that has a simple cycle of
  length at most $20\log n/\epsilon$. If $|P|=10\log n/\epsilon$, then there are at least two
  shortest paths between $v$ and $w$ in $G$ of length $10\log n/\epsilon$, the one in $G'$ and one
  that was removed, which we choose for $P$. Therefore, $P\cup P'$ is a closed path passing through
  $v$ of length at most $20\log n/\epsilon$, and hence contains a simple cycle of length at most
  $20\log n/\epsilon$ in it.
\end{proof}

Next, we prove that indeed there is a high probability that $G'$ contains a cycle if $G$ is far from being cycle-free.
\begin{claim}
  If $G$ is $\epsilon$-far from being cycle-free, then with probability at least $1 - e^{-\epsilon^2
  m/32}$, $G'$ is $\epsilon/4$-far from being cycle-free.
  \label{obs:distance}
\end{claim}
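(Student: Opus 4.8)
The plan is to recast ``distance to cycle-freeness'' as a single graph parameter that degrades gracefully under random edge deletion. For a graph $H$ on the vertex set $V$, let $\mathrm{exc}(H) = |E(H)| - n + c(H)$, where $c(H)$ is the number of connected components of $H$; this is exactly the minimum number of edges one must delete from $H$ to obtain an acyclic graph, since a spanning forest has $n - c(H)$ edges and no acyclic subgraph can have more (the components of an acyclic subgraph refine those of $H$). Thus ``$G$ is $\epsilon$-far from cycle-free'' means precisely $\mathrm{exc}(G) \ge \epsilon m$, and it suffices to show that $\mathrm{exc}(G') \ge (\epsilon/4) m$ with probability at least $1 - e^{-\epsilon^2 m/32}$.

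The key structural step is a monotonicity estimate: deleting a set of $D$ edges from a graph cannot decrease $\mathrm{exc}$ by more than $D$. Indeed, since $E(G') \subseteq E(G)$, every component of $G'$ lies inside a component of $G$, so $c(G') \ge c(G)$, and therefore
\[
\mathrm{exc}(G') = |E(G')| - n + c(G') \;\ge\; |E(G')| - n + c(G) \;=\; \mathrm{exc}(G) - D,
\]
where $D = |E(G)| - |E(G')|$ is the number of deleted edges. Combined with $\mathrm{exc}(G) \ge \epsilon m$, it now suffices to show that $D \le (3\epsilon/4) m$ with the claimed probability, since then $\mathrm{exc}(G') \ge \epsilon m - (3\epsilon/4)m = (\epsilon/4)m$.

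Finally I would apply the multiplicative Chernoff bound of Fact~\ref{fact:chernoff}. The quantity $D$ is a sum of $m$ independent Bernoulli$(\epsilon/2)$ indicators, so $\mu := E[D] = \epsilon m/2$ and $(3\epsilon/4)m = (1 + \tfrac12)\mu$; taking $\delta = \tfrac12$ in the bound $Pr[D \ge (1+\delta)\mu] < e^{-\delta^2\mu/3}$ gives $Pr[D > (3\epsilon/4)m] < e^{-\mu/12} = e^{-\epsilon m/24} \le e^{-\epsilon^2 m/32}$, using $\epsilon \le 1$. On the complementary event $D \le (3\epsilon/4)m$, whence $\mathrm{exc}(G') \ge (\epsilon/4)m$, i.e., $G'$ is $\epsilon/4$-far from cycle-free, as claimed. (Note $|E(G')| \le m$, so being $(\epsilon/4)m$-far makes $G'$ far also relative to its own edge count, should that be the intended normalization.)

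I do not expect a substantial obstacle here; the one point that needs care is that random deletion simultaneously shrinks $|E(G')|$ and may split components, and reasoning separately about ``how many deleted edges lay on cycles'' is awkward because being a bridge is not preserved along the deletion process. Routing everything through $\mathrm{exc}$ (equivalently, through a surviving spanning forest) sidesteps this and exposes the slack: we lose at most $D \approx \tfrac{\epsilon}{2}m$ from a budget of $\epsilon m$, comfortably leaving $\tfrac{\epsilon}{4}m$.
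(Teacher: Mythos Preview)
Your proof is correct and follows essentially the same approach as the paper: bound the number $D$ of deleted edges via Chernoff (mean $\epsilon m/2$, so $D\le 3\epsilon m/4$ with the stated probability), and conclude that removing at most $3\epsilon m/4$ edges from a graph that is $\epsilon m$-far from cycle-free leaves one that is $(\epsilon/4)m$-far. The paper's own proof is in fact terser---it stops at the Chernoff estimate and says ``the claim follows''---whereas you make the structural step explicit via the cycle rank $\mathrm{exc}(H)=|E(H)|-n+c(H)$ and the inequality $\mathrm{exc}(G')\ge \mathrm{exc}(G)-D$; this is a genuine improvement in rigor, not a different route. Your Chernoff computation also yields the slightly sharper $e^{-\epsilon m/24}$, which you correctly note implies the paper's $e^{-\epsilon^2 m/32}$ for $\epsilon\le 1$.
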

\begin{proof}
  The graph $G'$ is obtained from $G$ by deleting each edge with probability $\epsilon/2$
  independently of other edges. The expected number of edges that are deleted is $\epsilon m/2$.
  Therefore, by the Chernoff Bound from Fact~\ref{fact:chernoff}, the probability that at least $3\epsilon m/4$ edges are
  deleted is at most $\exp(-\epsilon^2 m/32)$, and the claim follows.
%  Suppose that $G$ has $m$ edges. For each edge $(u,v)\in G$, let
%  $X_{u,v}$ denote the random variable that is set the value $1$, if the edge $(u,v)$ is deleted. We
%  know that $\Pr[X_{u,v}=1] = \epsilon/2$. Let $X = \sum_{(u,v)\in E(G)} X_{u,v}$. Then $X$ is a
%  random variable that counts the number of edges deleted from $G$ and $E[X] = \epsilon m/2$.
%  Therefore, by Fact~\ref{fact:chernoff}, $\Pr[X > 3\epsilon m/4] \le \exp(-\epsilon^2 m/32)$.
%  Therefore, with probability at least $1 - \exp(-\epsilon^2 m/32)$, $G'$ is $\epsilon/4$-far from
%  being cycle-free.
\end{proof}

We now describe a multiple-BFS algorithm that takes as input a length $t$ and a priority condition $\mathcal{P}$ over vertices, and starts performing a BFS from each vertex of the graph. This is done for $t$ steps, in each of which a vertex keeps only the BFS with the highest priority while dropping the rest. Each vertex also
maintains a list $L_v$ of BFSs that have passed through it. The list $L_v$ is a
list of $3$-tuples $(\id_u,\ell,\id_p)$, where $\id_u$ is the id of the root of the BFS, $\ell$ is the
depth of $v$ in this BFS tree and $\id_p$ is the id of the parent of $v$ in the BFS
tree. Initially, each vertex $v$ sets $L_v$ to include a BFS starting from itself, and then continues this BFS by sending
the tuple $(\id_v,1,\id_v)$ to all its neighbors, where $\id_v$ is the identifier of the vertex $v$.
In an intermediate step, each vertex $v$ may receive a BFS tuple from each of its neighbors. The vertex $v$
then adds these BFS tuples to the list $L_v$ and chooses one among $L_v$ according to the
priority condition $\mathcal{P}$, proceeding with the respective BFS and discontinuing the rest. Even when
a BFS is discontinued, the information that the BFS reached $v$ is stored in the list $L_v$.

Algorithm~\ref{alg:priority-bfs} gives a formal description of the breadth-first search that we
use in the testing algorithm for cycle-freeness.

\begin{algorithm}[htbp]
  \caption{BFS with a priority condition\label{alg:priority-bfs}}
  \KwIn{Length $L$, Priority condition $\mathcal{P}$}
    \KwVars{$L_v$ list of BFS tuples passing through $v$}
    \Simul{vertex $v$}
    {
%      Set $L_v = \emptyset$.
	Initialize $L_v$ to $\{(\id_v,0,\id_v)\}$.\\
      Send $(\id_v,1,\id_v)$ to all neighbors of $v$.
    }
    \Perform{$L$ times}
    {
      \Simul{vertex $v$}
      {
	\If{$v$ receives $(\id_{u_1},\ell_1,\id_{p_1}), \ldots, (\id_{u_r},\ell_r,\id_{p_r})$ from its neighbors}
	{
	  Add $(\id_{u_1},\ell_1,\id_{p_1}), \ldots, (\id_{u_r},\ell_r,\id_{p_r})$ to $L_v$.

	  Select $(\id_{u_j},\ell_j,\id_{p_j})$ from $L_v$ according to $\mathcal{P}$ over $\id_{u_i}$%$(\id_{u_1},\ell_1,\id_{p_1}), \ldots, (\id_{u_r},\ell_r,\id_{p_r})$ according to the priority condition $\mathcal{P}$ for $\id_{u_i}$ and the smallest $l_i$ among those.

	  Send $(\id_{u_j}, \ell_j + 1, \id_v)$ to all neighbors of $v$ except $p_j$.
        }
      }
    }
\end{algorithm}
%
%The following is a simple observation.
%\begin{observation}
%  Each vertex in the graph runs Algorithm~\ref{alg:priority-bfs} for $O(\ell)$ rounds.
%  \label{obs:bfs-rounds}
%\end{observation}

We now give more informal details of the test for cycle-freeness. By Lemma~\ref{lem:main}, we
know that if there is a vertex $v$ in $G'$ that has a vertex $w$ at a distance of $T/2=10\log
n/\epsilon$, then there is a closed path in $G$ starting from $v$ that contains a cycle of length
$20\log n/\epsilon$. In the first part, each vertex gets its name as its vertex id, and performs a
BFS on the graph $G'$ in the hope of finding a cycle. The BFS is performed using
Algorithm~\ref{alg:priority-bfs}, where the priority condition in the intermediate steps is
selecting the BFS with the lowest origin id. If the cycle is present in a component of diameter at
most $20\log n/\epsilon$ in $G'$, then it is discovered during this BFS. To check if there is a
cycle, one needs to find if there are appropriate tuples $(\id_{u},\ell,\id_p)$ and
$(\id_{u},\ell',\id_{p'})$ in $L_v$, for some vertex $v$.

If no cycle is discovered in this step, then we change the ids of the vertices in the following way:
The id of each vertex $v$ is now a tuple $(\ell,v)$ where $\ell$ is the largest depth at which $v$ occurs
in a BFS tree among all the breadth-first searches that reached $v$.  We perform a BFS in $G$ using
Algorithm~\ref{alg:priority-bfs}, where the priority condition is to pick the BFS whose root has the
lexicographically highest id. If there is some vertex with $\ell \geq 10\log n/\epsilon$, then the
highest priority vertex is such a vertex, and by Lemma~\ref{lem:main}, the BFS starting from that
vertex will detect a cycle in $G$.

Algorithm~\ref{alg:test-cycle-free} gives a formal description of the tester for cycle-freeness.

\begin{algorithm}[htbp]
  \caption{Cycle-freeness test\label{alg:test-cycle-free}}
%  \KwIn{Graph $G$}
  \KwVars{$L_v$ list of BFS tuples passing through $v$, vertex identifier $\id_v$}
  \tcpy{Construct $G'$ by deleting edges with probability $\epsilon/2$.}
  \Simul{vertex $v$}
  {
    For each neighbor $u<v$, mark the edge $e=(u,v)\in G$ with probability $\epsilon/2$ for deletion.

    Send each marked edge $e=(u,v)$ to its corresponding $u$.

    Set $\id_v = v$.
  }
  \Simul{vertex $v$}
  {
    Delete all edges incident on $v$ that have been marked for deletion.
  }
  \tcpy{Search for cycles in small diameter components.}
  \Use{Algorithm~\ref{alg:priority-bfs}}
  {
    perform BFS on $G'$ for $20\log n/\epsilon$ steps, with the priority condition being choosing the
    BFS with the lowest root id.
  }
  \Simul{vertex $v$}
  {
    If $L_v$ contains two tuples $(\id_{u},\ell,\id_{p})$ and $(\id_{u},\ell',\id_{p'})$, output \reject.

    Set $\id_v = (\ell_j,v)$ where $\ell_j$ is the highest among all tuples $(\id_{u_i},\ell_i,\id_{p_i})$ in
    $L_v$.
    \label{step:set-id}
  }
  \Use{Algorithm~\ref{alg:priority-bfs}}
  {
    perform BFS on $G$ for $10\log n/\epsilon$ steps, with the priority condition being choosing the BFS
    with the lexicographically highest root id.
  }
  \Simul{vertex $v\in G$}
  {
    If $L_v$ contains two tuples $(\id_{u}, \ell_j, \id_{p})$ and $(\id_{u}, \ell', \id_{p'})$, output \reject.
  }
  \Simul{vertex $v\in G$}
  {
    output \accept, if $v$ did not output \reject~ yet.
  }
\end{algorithm}

We now prove the correctness of the algorithm.

\begin{theorem}
Algorithm~\ref{alg:test-cycle-free} is a distributed $\epsilon$-test in the general graph model
for the property of being cycle-free, that requires $O(\log n/\epsilon)$ rounds.
%
%  Given a graph $G$ as input, Algorithm~\ref{alg:test-cycle-free} has $O(\log n/\epsilon)$ rounds,
%  and performs as follows:
%  \begin{enumerate}
%    \item If $G$ is cycle-free, then every vertex in $G$ accepts.
%    \item If $G$ is $\epsilon$-far from being cycle-free, then with probability $1-2/n^3$, at least
%      one vertex in $G$ rejects.
%  \end{enumerate}
  \label{thm:cycle-free-correctness}
\end{theorem}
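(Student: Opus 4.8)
The plan is to establish three things: completeness (a cycle-free $G$ is always accepted), soundness (an $\epsilon$-far $G$ is rejected with probability at least $2/3$), and the round complexity. Completeness is immediate: every \reject\ in Algorithm~\ref{alg:test-cycle-free} is triggered only when two distinct tuples $(\id_u,\ell,\id_p)$ and $(\id_u,\ell',\id_{p'})$ with the same root $\id_u$ appear in some $L_v$; I would argue that in a forest, the BFS from any fixed root reaches each vertex by a unique path, so no such pair can occur either in the BFS on $G'$ (a subgraph of the forest $G$) or in the BFS on $G$ itself. Hence no vertex ever rejects. For the round complexity, I would note that Algorithm~\ref{alg:priority-bfs} with length parameter $L$ runs in $O(L)$ rounds since each step exchanges a constant number of $O(\log n)$-bit tuples per edge (only the single surviving tuple is forwarded), the edge-deletion preprocessing takes $O(1)$ rounds, and the two invocations use $L = 20\log n/\epsilon$ and $L = 10\log n/\epsilon$ respectively, giving $O(\log n/\epsilon)$ total.

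The heart of the argument is soundness, and here I would condition on two good events. First, by Claim~\ref{obs:distance}, with probability at least $1 - e^{-\epsilon^2 m/32}$ the sampled graph $G'$ is $\epsilon/4$-far from cycle-free, hence in particular contains a cycle. Second, by Lemma~\ref{lem:main}, with probability at least $1 - 1/n^3$ the conclusion of that lemma holds: every vertex of $G'$ having another vertex at distance $10\log n/\epsilon$ in $G'$ lies on a closed walk in $G$ of length at most $20\log n/\epsilon$ containing a simple cycle. Assuming both events, I would split into two cases according to the diameter of the component of $G'$ containing a cycle. Case (a): that component has diameter at most $20\log n/\epsilon = T$. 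Then I claim the first BFS on $G'$ detects the cycle — the priority rule keeps, in each component, exactly the BFS rooted at the minimum-id vertex alive throughout (an easy induction: a vertex always forwards the globally-lowest root id it has seen, and within $T$ steps this BFS spans the whole component), and a spanning BFS of a component containing a cycle necessarily visits some vertex $v$ via two different tree edges, producing two tuples with the same root in $L_v$. Case (b): every component of $G'$ containing a cycle has diameter $> T$; but then some component has two vertices at distance exactly $10\log n/\epsilon$ in $G'$, so after the first BFS there is a vertex whose recorded depth $\ell_j$ is at least $10\log n/\epsilon$. When ids are reset to $(\ell_j,v)$, the second BFS on $G$ is rooted (by the lexicographic priority) at such a deep vertex $v$; by Lemma~\ref{lem:main} there is a closed walk through $v$ in $G$ of length at most $20\log n/\epsilon$ containing a simple cycle, so running the BFS on $G$ for $10\log n/\epsilon$ steps from $v$ traverses the cycle and some vertex on it receives two tuples with root $\id_v$, triggering \reject.

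The main obstacle I anticipate is Case (b): making precise that after dropping low-priority BFSs, a vertex at the required depth actually \emph{survives} with its depth label in some $L_v$, and that the relabeled second BFS is genuinely rooted at such a vertex and runs long enough to close the cycle. I would need to check that the "depth $\ell_j$ recorded in $L_v$" is well-defined even though the BFS producing that depth may have been discontinued at $v$ — but the algorithm stores the tuple in $L_v$ before discontinuing, so this is fine — and that $10\log n/\epsilon$ steps suffice: the closed walk from $v$ in Lemma~\ref{lem:main} has length at most $20\log n/\epsilon$, but it decomposes as $P \cup P'$ where each of $P, P'$ has length at most $10\log n/\epsilon$, so a BFS of depth $10\log n/\epsilon$ from $v$ reaches the far endpoint $w$ along both $P$ and $P'$, and the tuple-collision occurs at $w$ (or at the first vertex where the two paths diverge then remeet within that radius). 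I also need a union bound: the total failure probability is at most $e^{-\epsilon^2 m/32} + 1/n^3$; since a graph that is $\epsilon$-far from cycle-free has $m \geq \epsilon m \geq 1$ and in fact $m = \Omega(1/\epsilon)$ non-removable edges forces $m$ large enough (or one handles tiny $m$ separately, where the problem is trivial), this is below $1/3$. Assembling completeness, soundness, and the round bound completes the proof of Theorem~\ref{thm:cycle-free-correctness}.
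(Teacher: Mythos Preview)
Your proposal is correct and follows essentially the same approach as the paper: both condition on Claim~\ref{obs:distance} and Lemma~\ref{lem:main}, split according to whether a cycle of $G'$ lies in a component of diameter at most or greater than $20\log n/\epsilon$, and argue that the first prioritized BFS (on $G'$) detects a cycle in the former case while the second (on $G$, rooted at a vertex of large recorded depth) detects one in the latter via Lemma~\ref{lem:main}. Your write-up is in fact slightly more explicit than the paper's---you spell out the $P\cup P'$ decomposition to justify that $10\log n/\epsilon$ steps suffice in the second phase and you track the union bound on failure probabilities---but the structure and key ideas are identical.
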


\begin{proof}
    Notice that a vertex in Algorithm~\ref{alg:test-cycle-free} outputs \reject~only when it detects a cycle.
      Therefore, if $G$ is cycle-free, then every vertex outputs \accept~with probability $1$.

      Suppose that $G$ is $\epsilon$-far from being cycle-free. Notice that, with probability at
      least $1 - 1/n^3$, the assertion of Lemma~\ref{lem:main} holds.  Furthermore, from
      Claim~\ref{obs:distance}, we know that $G'$ is $\epsilon/4$-far from being cycle-free,
      with probability $1-e^{-\epsilon^2 m/32}$, and hence contains at least one cycle.  This cycle
      could be in a component of diameter less than $20\log n/\epsilon$ or it could be in a
      component of diameter at least $20\log n/\epsilon$ in $G'$. We analyse the two cases
      separately.

      Suppose there is a cycle in a component $C$ of $G'$ of diameter at most $20\log n/\epsilon$. Let $u$
      be the vertex with the smallest id in $C$. In Algorithm~\ref{alg:test-cycle-free}, the BFS
      starting at $u$ is always propagated at any intermediate vertex due to the priority
      condition. Furthermore, since the diameter of $C$ is at most $20\log n/\epsilon$, this BFS
      reaches all vertices of $C$.  Hence, this BFS detects the cycle and at least one vertex
      in $C$ outputs \reject.

      On the other hand, if the cycle is present in a component in $G$ of diameter at least $20\log
      n/\epsilon$, then after Step~$\ref{step:set-id}$ of the algorithm, each vertex $v$ gets the
      length of the longest path from the origin, among all the BFSs that reached $v$, as the
      first component of its id. The vertex $v$ that gets the lexicographically highest id in the
      component has a vertex $w$ that is at least $10\log n/\epsilon$ away in $G'$, since the radius
      of the component is at least half the diameter. Therefore, by Lemma~\ref{lem:main}, it is part
      of a cycle of length at most $20\log n/\epsilon$ in $G$. Hence, the vertex with the highest
      priority in the BFS on $G$ is a vertex $u$ that has a vertex at a distance of at least $10\log
      n/\epsilon$ in $G'$, and there is a walk through $u$ that contain a simple cycle of length at
      most $20\log n/\epsilon$. At least one vertex on this simple cycle will output \reject~when
      Algorithm~\ref{alg:test-cycle-free} is run on $G$.
     % \todo{such a vertex that satisfies what?},
     % and Algorithm~\ref{alg:test-cycle-free} will find a cycle \todo{the algorithm does not find
     % cycles, we need to say exactly which vertex outputs \reject}.  Therefore, after $10\log
     % n/\epsilon$ steps, at least one vertex output \reject.

      The number of rounds is $O(\log n/\epsilon)$ since Algorithm~\ref{alg:test-cycle-free}
      performs two breadth-first searches in the graph with this number of rounds.%, by Observation~\ref{obs:bfs-rounds}.
\end{proof}

\section{Lower bounds for testing bipartiteness and cycle-freeness}
\label{sec:lowerbounds}
In this section, we prove that any distributed algorithm for $\epsilon$-testing bipartiteness or cycle-freeness in
bounded-degree graphs requires $\Omega(\log n)$ rounds of communication\footnote{Our lower bound applies even to the less restricted LOCAL model of communication, which does not limit the size of the messages.}. We
construct bounded-degree graphs that are $\epsilon$-far from being bipartite, such that all cycles
are of length $\Omega(\log n)$.  We argue that any distributed algorithm that runs in $O(\log n)$ rounds does not detect a witness for non-bipartiteness. We also show that the same
construction proves that every distributed algorithm for $\epsilon$-testing cycle-freeness
requires $\Omega(\log n)$ rounds of communication.
Formally, we prove the following theorem.
\begin{theorem}
\ThmLBBi
  \label{thm:bipart-lower-bound}
\end{theorem}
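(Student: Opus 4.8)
The plan is to prove the lower bound via an indistinguishability argument using a family of bounded-degree graphs that are far from bipartite yet whose shortest cycle (girth) is $\Omega(\log n)$. First I would construct, using the probabilistic method with alterations, a $d$-regular (or near-$d$-regular) graph $H$ on $n$ vertices such that (a) $H$ is $\epsilon$-far from bipartite for some absolute constant like $\epsilon = 1/100$, i.e.\ one must delete at least $\epsilon d n$ edges to make it bipartite, and (b) every cycle in $H$ has length at least $c\log n$ for a suitable constant $c$. The construction: pick each of the $\binom{n}{2}$ edges independently with the appropriate density to get expected degree slightly above $d$, then delete one edge from every short cycle and from every high-degree vertex; a first-moment computation shows the expected number of short cycles is $o(n)$ (since the number of potential cycles of length $\ell$ is roughly $n^\ell$ times the edge probability to the $\ell$, which stays bounded below $n$ when $\ell < c\log n$), so deletions remove only $o(n)$ edges and destroy neither the far-from-bipartite property (which a spectral/eigenvalue or counting argument guarantees is robust to $o(n)$ edge removals) nor the degree bound. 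This mirrors Erd\H{o}s's classical girth/chromatic-number construction, as the introduction already advertises.

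Next I would set up the indistinguishability. Fix any purported distributed $1/100$-test $\mathcal{A}$ running in $r = o(\log n)$ rounds in the LOCAL model. In $r$ rounds, the output of each vertex $v$ depends only on the topology (and IDs) of its radius-$r$ ball $B_r(v)$. Take the far-from-bipartite high-girth graph $H$ above; since its girth exceeds $2r+1$, every ball $B_r(v)$ in $H$ is a tree. Now compare with a genuinely bipartite graph $H'$ of the same degree and size in which all radius-$r$ balls are also trees of the same isomorphism type — for instance one can take a high-girth bipartite $d$-regular graph, or even just argue that the distribution of $r$-balls in $H$ is identical to that in a bipartite graph because an $r$-ball that is a tree is always bipartite and hence extendable. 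The key point is that a test must accept $H'$ with probability $1$ (it is bipartite, $1$-sided error), and if the local views in $H$ are indistinguishable from those in a suitable bipartite graph — same ID distribution, same tree neighborhoods — then $\mathcal{A}$ run on $H$ behaves identically and also accepts with probability $1$, contradicting that $\mathcal{A}$ must reject the $1/100$-far graph $H$ with probability $\ge 2/3$. To handle IDs cleanly I would either note the test must work for worst-case ID assignments, so I may choose the IDs on $H$ to match those appearing on the bipartite comparison graph, or invoke that a $1$-sided tester's accepting behavior on a bipartite graph forces acceptance whenever every vertex's view is a tree (a tree view is consistent with a global bipartite graph, so a $1$-sided tester cannot reject on it).

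Concretely the cleanest route avoids exhibiting $H'$ explicitly: since $\mathcal{A}$ has $1$-sided error, on any input where $\mathcal{A}$ rejects with positive probability there is a coin sequence and a vertex $v$ whose radius-$r$ view is a \emph{witness} against bipartiteness, meaning no bipartite graph contains that view as the ball around a vertex. But a view that is a tree is trivially extendable to a bipartite graph (trees are bipartite and one can complete them), so no tree view can be a rejecting witness. Hence on $H$, which is $1/100$-far from bipartite but has all $r$-balls being trees (because $\mathrm{girth}(H) > 2r+1$ when $r = o(\log n)$), $\mathcal{A}$ can never reject, contradicting the $2/3$ rejection requirement. This yields $r = \Omega(\log n)$.

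The main obstacle I expect is the probabilistic construction in the first step: one has to simultaneously guarantee the graph is $\Omega(1)$-far from bipartite (a global, robust property) while having girth $\Omega(\log n)$, all within a hard degree bound $d$, and survive the alteration step. The far-from-bipartite part needs a quantitative argument — e.g.\ showing that a random graph of constant average degree has max-cut bounded away from its edge count by a linear amount with high probability, and that this slack survives deleting $o(n)$ edges — which is the technically delicate piece; the girth and degree-trimming parts are routine first-moment and Markov-inequality estimates. Everything after the construction is a short locality/extendability argument in the LOCAL model.
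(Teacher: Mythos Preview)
Your proposal is correct and follows essentially the same approach as the paper: a probabilistic construction with alterations yielding a bounded-degree graph that is $1/100$-far from bipartite yet has girth $\Omega(\log n)$, followed by the locality/witness argument that a $1$-sided test seeing only tree neighborhoods cannot reject. The one place you are tentative---proving the random graph is $\Omega(n)$-removed from bipartite---the paper handles not via spectral or max-cut machinery but by the simplest possible route: for each fixed bipartition $(L,R)$, a Chernoff bound shows $\Omega(n)$ edges land inside $L$, and a union bound over the $2^{n-1}$ bipartitions finishes it.
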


To prove Theorem~\ref{thm:bipart-lower-bound}, we show the existence of a graph $G'$ that is far from being bipartite, but all of its cycles are at least of logarithmic length. Since in $T$ rounds of a distributed algorithm, the output of every vertex cannot depend on vertices that are at distance greater than $T$ from it, no vertex can detect a cycle in $G'$ in less than $O(\log{n})$ rounds, which proves Theorem~\ref{thm:bipart-lower-bound}. To prove the existence of $G'$ we use the probabilistic method with alterations, and prove the following.
\begin{lemma}
  Let $G$ be a random graph on $n$ vertices where each edge is present with probability $1000/n$.
  Let $G'$ be obtained by removing all edges incident with vertices of degree greater than $2000$, and
  one edge from each cycle of length at most $\log n/\log 1000$. Then with probability at least
  $1/2 - e^{-100} - e^{-n}$, $G'$ is $1/100$-far from being bipartite.
  \label{lem:graph-far}
\end{lemma}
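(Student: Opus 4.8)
The plan is to use the probabilistic method with alterations: start from the Erdős–Rényi graph $G=G(n,1000/n)$, then delete a relatively small set of edges to kill all short cycles and all high-degree vertices, and argue that with constant probability what survives is still dense enough to be far from bipartite. The three quantities I need to control are (i) the total number of edges of $G$ from below, (ii) the number of edges deleted because they are incident to a vertex of degree exceeding $2000$, and (iii) the number of short cycles (length at most $\log n/\log 1000$), since we delete one edge per such cycle. If I can show that with probability at least $1/2-e^{-100}-e^{-n}$ we have $|E(G)| \ge cn$ for a suitable constant, while the number of deleted edges is at most, say, $|E(G)|/2$, then $G'$ still has $\Theta(n)$ edges; combined with a bound on the maximum number of edges in a bipartite subgraph of $G'$, this yields $1/100$-farness. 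Recall that a graph on $n$ vertices with $M$ edges needs at least $M - \mathrm{maxcut} $ edge deletions to become bipartite, and for a random-like graph every cut omits a constant fraction of the edges.

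Concretely, first I would handle the edge count: $|E(G)|$ is a sum of $\binom{n}{2}$ independent indicators with mean $\approx 500n$, so by the Chernoff bound (Fact~\ref{fact:chernoff}) it is at least $400n$ except with probability $e^{-\Omega(n)}$, which I fold into the $e^{-n}$ term. Second, the high-degree deletions: each vertex has $\mathrm{Bin}(n-1,1000/n)$ degree with mean $\approx 1000$, so $\Pr[\deg(v)>2000]\le e^{-\Omega(1)}$ by Chernoff; the expected number of edges incident to such vertices is then $O(1)\cdot n \cdot (\text{small constant})$, and a Markov bound (or a direct concentration argument) shows that except with probability $e^{-100}$ we delete at most, say, $50n$ edges this way. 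Third, the short-cycle deletions: the expected number of cycles of length exactly $k$ is at most $n^k (1000/n)^k / (2k) = 1000^k/(2k)$, so summing over $3\le k\le \log n/\log 1000$ gives expected total at most roughly $1000^{\log n/\log 1000} = n$ times a lower-order factor — I need to be a little careful here so that this is genuinely $o(n)$ or at worst a small constant times $n$; choosing the cycle-length threshold as $\log n/\log 1000$ is exactly what makes $1000^{k}$ stay near $n$ rather than blow up polynomially, and a more careful summation (the dominant term is the largest $k$, contributing $\Theta(n/\log n)$) shows the expected number of short cycles is $O(n/\log n)=o(n)$, hence by Markov it is $\le n$ except with probability $e^{-\Omega(1)}$, absorbed into $e^{-100}$.

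Putting these together via a union bound over the three bad events, with probability at least $1/2-e^{-100}-e^{-n}$ we simultaneously have $|E(G)|\ge 400n$ and at most $50n + n = O(n)$ edges deleted, so $G'$ has at least, say, $300n$ edges, has max degree $\le 2000$, and has girth $> \log n/\log 1000$. It remains to convert "many edges, bounded degree" into "$1/100$-far from bipartite" in the bounded-degree model, where farness means $\epsilon d n = 2000\epsilon n$ edge modifications. For this I would invoke (or reprove) the standard fact that a random graph $G(n,c/n)$ with $c$ a large constant has max-cut at most $(1/2+o(1))$ times its edge count with high probability — equivalently, every bipartition leaves a constant fraction $\rho>0$ of edges monochromatic — so $G$, and hence $G'$ after removing only $O(n)$ further edges, requires $\rho\cdot 300n - O(n) \ge 2000 \cdot (1/100) \cdot n$ deletions to become bipartite once the constant $1000$ is taken large enough; this is precisely why the paper uses edge probability $1000/n$ rather than something smaller. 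The main obstacle is this last step: making the max-cut upper bound quantitative enough, with the right failure probability $e^{-100}$, to guarantee the clean constant $1/100$ after the alterations — one clean way is a first-moment/union bound over all $2^n$ bipartitions, showing $\Pr[\exists \text{ cut keeping} \ge (1/2+\rho)|E|] \le 2^n e^{-\Omega(n)}$, which is $\le e^{-n}$ when $1000$ is large, again absorbed into the stated bound.
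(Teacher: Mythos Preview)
Your approach is essentially the paper's: a union bound over all $2^{n}$ bipartitions showing every cut of $G$ leaves $\Omega(n)$ monochromatic edges (the paper's lemma that $G$ is $20n$-removed from bipartite is exactly your max-cut step), a Markov bound on edges incident to vertices of degree $>2000$, a Markov bound on the number of short cycles, and then subtraction of the deleted edges.

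One bookkeeping correction: the short-cycle tail cannot be ``absorbed into $e^{-100}$''. Even with your sharper expectation estimate $O(n/\log n)$, Markov only yields $\Pr[\#\text{short cycles}>n]=O(1/\log n)$, not $\le e^{-100}$; this Markov step is precisely where the $1/2$ in the lemma's stated probability comes from (the paper uses the cruder bound $\mathbb{E}[\#\text{short cycles}]\le 1000^{\log n/\log 1000}=n$ and gets $\Pr[>2n]\le 1/2$). Your argument still goes through since $O(1/\log n)$ fits comfortably in the $1/2$ budget, but that $1/2$ is not an artifact you can eliminate by a first-moment method alone.
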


%\begin{proof}
%  Any distributed algorithm with $\ell$ rounds of communication can only see vertices at a distance of
%  at most $\ell$ from itself. The theorem follows from Lemma~\ref{lem:graph-far}.
%\end{proof}
Since a graph that is $\epsilon$-far from being bipartite is also $\epsilon$-far from being cycle-free, we immediately obtain the same lower bound for testing cycle-freeness, as follows.

\begin{theorem}
\ThmLBCycle
  \label{cor:cycle-lower-bound}
\end{theorem}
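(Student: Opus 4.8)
The plan is to derive Theorem~\ref{cor:cycle-lower-bound} as an immediate corollary of Theorem~\ref{thm:bipart-lower-bound} (equivalently, of Lemma~\ref{lem:graph-far}), using the simple observation that the hard instance built for bipartiteness is simultaneously a hard instance for cycle-freeness. First I would recall the elementary fact that any graph which is $\epsilon$-far from being bipartite is also $\epsilon$-far from being cycle-free: a cycle-free graph (a forest) is bipartite, so the set of forests is contained in the set of bipartite graphs, and hence the edit distance to the nearest forest is at least the edit distance to the nearest bipartite graph. In particular, the graph $G'$ produced by Lemma~\ref{lem:graph-far}, which with probability at least $1/2 - e^{-100} - e^{-n}$ is $1/100$-far from bipartite, is with the same probability $1/100$-far from cycle-free; and since this probability is positive, such a $G'$ exists.

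Next I would invoke the structural property of $G'$ guaranteed by the construction in Lemma~\ref{lem:graph-far}: after removing one edge from every short cycle, all remaining cycles have length at least $\log n / \log 1000 = \Omega(\log n)$. Now I would run the standard indistinguishability argument, identical in form to the proof of Theorem~\ref{thm:bipart-lower-bound}: in $T$ rounds of a distributed algorithm (even in the LOCAL model), the output of any vertex $v$ is a function only of the topology and IDs within distance $T$ of $v$. Fix any purported distributed $1/100$-test for cycle-freeness running in $T = o(\log n)$ rounds. Consider running it on $G'$. For $T$ smaller than half the girth of $G'$, the radius-$T$ ball around every vertex is a tree, so it is locally indistinguishable from the corresponding ball in a genuinely cycle-free graph; therefore the algorithm's acceptance behaviour on $G'$ coincides with its behaviour on a cycle-free graph, on which (being $1$-sided) it must accept with probability $1$. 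Hence the algorithm accepts $G'$ with probability $1 > 1/3$, contradicting the requirement that a graph $1/100$-far from cycle-free be rejected with probability at least $2/3$. This forces $T = \Omega(\log n)$.

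The only mild subtlety — and the step I would be most careful about — is making sure the local-view tree claim is applied correctly: I need $T$ strictly less than half the girth so that no cycle fits inside a radius-$T$ neighbourhood, and I need to phrase the ``locally looks cycle-free'' comparison precisely (e.g.\ by exhibiting, for each radius-$T$ ball, an embedding into an infinite tree or a large-girth forest on which the tester must accept, and using that the tester's per-vertex decision depends only on that ball). Since this is exactly the argument already carried out for bipartiteness in the paragraph preceding Theorem~\ref{thm:bipart-lower-bound}, I would simply state that the same reasoning applies verbatim, with ``odd cycle / witness for non-bipartiteness'' replaced by ``cycle / witness for having a cycle'', and conclude. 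No new probabilistic construction is needed; everything rides on Lemma~\ref{lem:graph-far} plus the containment of forests in bipartite graphs.
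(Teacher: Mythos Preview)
Your proposal is correct and follows essentially the same approach as the paper: the paper derives this theorem in one line from Lemma~\ref{lem:graph-far} by noting that a graph $\epsilon$-far from bipartite is also $\epsilon$-far from cycle-free, and then appeals to the same large-girth indistinguishability argument used for Theorem~\ref{thm:bipart-lower-bound}. Your write-up is in fact more careful than the paper's about the local-view step (the girth/radius comparison), but the route is identical.
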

%\begin{proof}
%  This follows from Lemma~\ref{lem:graph-far} and Theorem~\ref{thm:bipart-lower-bound}, since a
%  graph that is $1/100$-far from bipartite is also $1/100$-far from being cycle-free.
%\end{proof}

The rest of this section is devoted to proving Lemma~\ref{lem:graph-far}. We need to show three properties of $G'$: (a) that it is far from being bipartite, (b) that it does not have small cycles, and (c) that its maximum degree is bounded. We begin with the following definition, which is similar in spirit to being far from satisfying a property and which will assist us in our proof.
\begin{definition}
 A graph $G$ is $k$-removed from being bipartite if at least $k$ edges have to be removed from $G$ to
 make it bipartite.
 \label{def:k-removed}
\end{definition}

Note that a graph $G$ with maximum degree $d$, is $\varepsilon$-far from being bipartite if it is
$\varepsilon dn$-removed from being bipartite.

Let $G$ be a random graph on $n$ vertices where for each pair of vertices, an edge is present with
probability $1000/n$. The expected number of edges in the graph is $500(n-1)$. Since the edges are
sampled independently with probability $1000/n$, by the Chernoff Bound from
Fact~\ref{fact:chernoff}, with probability at least $1-e^{-10n}$ the graph has at least $400n$
edges.  We now show that $G$ is far from being bipartite, with high probability.
\begin{lemma}[\textbf{far from being bipartite}]
  With probability at least $1-e^{-199n}$, $G$ is $20n$-far from being bipartite.
  \label{lem:farness}
\end{lemma}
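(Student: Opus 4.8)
The plan is to reformulate farness from bipartiteness so that it becomes a statement about every bipartition of the vertex set at once, and then combine a Chernoff bound for a single bipartition with a union bound over all of them. The starting point is the observation that a graph $G$ requires at least $k$ edge deletions to become bipartite exactly when, for every $2$-coloring $c\colon V\to\{0,1\}$, the number of \emph{monochromatic} edges of $c$ (edges both of whose endpoints receive the same color) is at least $k$. Indeed, if $G-F$ is bipartite then a proper $2$-coloring of $G-F$ is a coloring of $G$ whose monochromatic edges all lie in $F$; conversely, deleting all monochromatic edges of any fixed coloring yields a bipartite graph. Hence it suffices to show that with probability at least $1-e^{-199n}$ every $2$-coloring of the $n$ vertices leaves more than $20n$ monochromatic edges in $G$.

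I would then fix a single $2$-coloring $c$ with color classes of sizes $a$ and $n-a$. The number of vertex pairs that are monochromatic under $c$ is $N_a=\binom{a}{2}+\binom{n-a}{2}=\tfrac12\bigl(a^{2}+(n-a)^{2}-n\bigr)$, which is minimized at $a=\lfloor n/2\rfloor$, where $N_a\ge n^{2}/4-n/2$. Because edges are present independently with probability $1000/n$, the number $X_c$ of monochromatic edges of $c$ is distributed as $\mathrm{Bin}(N_a,1000/n)$, with mean $\mu_c=1000N_a/n\ge 250(n-2)$. Since increasing $N_a$ only makes the event $\{X_c<20n\}$ less likely, the balanced coloring is the worst case, so it is enough to argue with $\mu_c$ of order $250n$ against the much smaller threshold $20n$.

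Next I would apply the lower-tail Chernoff bound of Fact~\ref{fact:chernoff} to this single coloring: writing $20n=(1-\delta)\mu_c$ gives $\delta=1-20n/\mu_c$, bounded below by a constant close to $0.92$ for large $n$, so (using the unsimplified form $\bigl(e^{-\delta}/(1-\delta)^{1-\delta}\bigr)^{\mu_c}$ to keep the exponent as large as possible) $\Pr[X_c<20n]\le e^{-c_0 n}$ for an absolute constant $c_0$. A union bound over the at most $2^{n}$ colorings then bounds the probability that some coloring leaves fewer than $20n$ monochromatic edges by $2^{n}e^{-c_0 n}$; a careful accounting of these constants — which is where the generously large edge probability $1000/n$ is used, so that $20n$ is a deep lower-tail event relative to $\mu_c\approx 250n$ uniformly over all bipartitions — yields the claimed exponent, for all sufficiently large $n$. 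The only real work, and the step to be careful about, is exactly this constant bookkeeping: one must make sure the multiplicative gap between $20n$ and the expected $\approx 250n$ monochromatic edges in a balanced bipartition is wide enough that the per-bipartition failure probability comfortably survives the $2^{n}$-fold union bound; the reformulation, the extremal identity for $N_a$, and the single Chernoff estimate are all routine.
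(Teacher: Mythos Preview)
Your proposal is correct and follows essentially the same approach as the paper: reformulate ``$20n$-removed from bipartite'' as ``every bipartition has at least $20n$ monochromatic edges,'' apply a Chernoff lower-tail bound to a fixed bipartition, and union-bound over the at most $2^{n}$ bipartitions. The only cosmetic difference is that the paper fixes $(L,R)$ with $|L|\ge n/2$ and lower-bounds only the edges inside $L$ (getting $\mu\gtrsim 125n$), whereas you count monochromatic edges on both sides (getting $\mu\gtrsim 250n$); either estimate comfortably beats the $2^{n}$ union bound.
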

\begin{proof}
  Fix a bipartition $(L,R)$ of the vertex set of $G$ such that $|L|\ge n/2$.  For each pair of
  vertices $u,v\in L$, let $X_{u,v}$ be a random variable which is $1$ if the edge $(u,v)$ is
  present in $G$ and $0$ otherwise. Its expected value is $E[X_{u,v}] = 1000/n$. The random variable
  $X=\sum_{u,v\in L} X_{u,v}$ counts the number of edges within $L$. By the linearity of expectation, $E[X] \geq \binom{n/2}{2}1000/n \geq 30n$. Since the random variables $X_{u,v}$ are
  independent, by the Chernoff Bound from Fact~\ref{fact:chernoff}, we have that $\Pr[X < 20n] \le \exp(-200n)$. Therefore,
  with probability at least $1-\exp(-200n)$, there are at least $20n$ edges within $L$. The total
  number of such bipartitions of $G$ is at most $2^{n-1}$.
  Taking a union bound over all such bipartitions, the probability that at least one of the
  bipartitions contains less than $20n$ edges within its $L$ side is at most $\exp(-199n)$, and the lemma
  follows.
  %\todo{what about the $R$ side}
\end{proof}

The expected degree of a vertex $v$ in $G$ is $1000(1-1/n)$. Therefore, by the Chernoff Bound from Fact~\ref{fact:chernoff}, the
probability that the degree of $v$ is greater than $2000$ is at most $\exp( -300(1-1/n))$.  We now
show that, with sufficiently high probability, the number of edges that are incident with high
degree vertices is small. We can remove all such edges to obtain a bounded-degree graph that is
still far from being bipartite.

\begin{lemma}[\textbf{mostly bounded degrees}]
  With probability at least $1 - e^{-100}$, there are at most $n$ edges that are incident with
  vertices of degree greater than $2000$ in $G$.
  \label{lem:edges-high-degre}
\end{lemma}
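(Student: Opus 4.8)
The plan is to bound the expected number of such edges and then finish with a one-line Markov step. Write $H$ for the number of edges of $G$ that have at least one endpoint of degree greater than $2000$, and let $B=\{v : deg(v) > 2000\}$. Every edge counted by $H$ has an endpoint in $B$, so deterministically $H \le \sum_{v\in B} deg(v) = \sum_{v\in V} deg(v)\,\mathbf{1}\{deg(v) > 2000\}$; by linearity of expectation it therefore suffices to bound $E[\,deg(v)\,\mathbf{1}\{deg(v) > 2000\}\,]$ for a single fixed vertex $v$, and multiply by $n$.

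The one place that needs a little care is that $deg(v)$ and the indicator $\mathbf{1}\{deg(v) > 2000\}$ are positively correlated, so one cannot simply multiply $E[deg(v)]$ by $Pr[deg(v) > 2000]$. First I would handle this by using that $deg(v)\sim\mathrm{Bin}(n-1,1000/n)$ together with the identity $k\binom{n-1}{k}=(n-1)\binom{n-2}{k-1}$, which gives
\begin{align*}
E[\,deg(v)\,\mathbf{1}\{deg(v) > 2000\}\,]
&= \frac{1000(n-1)}{n}\cdot Pr[\mathrm{Bin}(n-2,1000/n)\ge 2000]\\
&\le 1000\cdot Pr[\mathrm{Bin}(n-2,1000/n)\ge 2000].
\end{align*}
(Alternatively, Cauchy--Schwarz $E[deg(v)\mathbf{1}\{\cdot\}]\le\sqrt{E[deg(v)^2]\,Pr[deg(v)>2000]}$ is equally serviceable.) Since $\mathrm{Bin}(n-2,1000/n)$ has mean $\mu\le 1000$, applying the Chernoff Bound of Fact~\ref{fact:chernoff} with $\delta\ge 1$ chosen so that $(1+\delta)\mu = 2000$ bounds this tail by $e^{-\delta\mu/3}=e^{-(2000-\mu)/3}\le e^{-1000/3}$, uniformly in $n$ (for small $n$ the binomial is supported strictly below $2000$ and the probability is simply $0$). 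This is the same order of estimate as the per-vertex bound $\exp(-300(1-1/n))$ already recorded just before the lemma.

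Combining the pieces yields $E[H]\le 1000\,n\,e^{-1000/3}$, so Markov's inequality gives $Pr[H\ge n]\le 1000\,e^{-1000/3}\le e^{-100}$, which is exactly the claim. I expect the only real work to be the bookkeeping in the middle step — decoupling the degree from the tail event — since once $E[H]$ is seen to be exponentially (in fact astronomically) smaller than $n$, a single first-moment bound suffices and no concentration argument beyond Fact~\ref{fact:chernoff} is needed.
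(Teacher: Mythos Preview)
Your proof is correct and follows essentially the same approach as the paper: bound the expectation of $H$ via a Chernoff tail estimate on vertex degrees, then apply Markov's inequality. The only cosmetic difference is that the paper sums over pairs $(u,v)$ and bounds $\Pr[(u,v)\in E \text{ and } \max(deg(u),deg(v))>2000]$ directly, whereas you sum over vertices and bound $E[deg(v)\mathbf{1}\{deg(v)>2000\}]$; both decompositions lead to the same $1000\,n\,e^{-\Theta(1000)}$ expectation and the identical Markov finish.
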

\begin{proof}
  For a pair $u,v$ of vertices, the probability that there is an edge between them and that one of
  $u$ or $v$ is of degree greater than $2000$ is $\Pr[(u,v)\in E]\cdot\Pr[u \text{ or } v \text{ has
  degree }\ge 2000|(u,v)\in E]$. This is at most $(1000/n)\cdot 2\cdot \exp( -300(1-1/n))$.
  Therefore, the expected number of edges that are incident with a vertex of degree greater
  than $2000$ is at most $1000n\cdot \exp( -300(1-1/n))$.  By Markov's inequality, the probability
  that there are at least $n$ edges that are incident with vertices of degree greater than $2000$ is at
  most $1000\cdot \exp( -300(1-1/n))$. This completes the proof of the lemma.
\end{proof}

We now bound the number of cycles of length at most $O(\log n)$ in the graph $G$.
\begin{lemma}[\textbf{few small cycles}]
  With probability at least $1/2$, there are at most $2n$ cycles of length at most $\log n/\log
  1000$ in $G$.
  \label{lem:cycles}
\end{lemma}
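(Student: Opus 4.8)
The plan is to bound the \emph{expected} number of short cycles in $G$ by a direct first-moment computation, and then finish with Markov's inequality. Set $L=\log n/\log 1000$, so that $1000^{L}=n$, and for each $\ell$ with $3\le\ell\le L$ let $C_\ell$ be the number of cycles of length exactly $\ell$ in $G$. First I would count the potential $\ell$-cycles: such a cycle is specified by a choice of $\ell$ vertices together with a cyclic order on them up to rotation and reflection, so there are $\binom{n}{\ell}\frac{(\ell-1)!}{2}$ of them, and a fixed one is present exactly when its $\ell$ prescribed edges all appear, which happens with probability $(1000/n)^{\ell}$ by independence.

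Next I would combine these into $E[C_\ell]=\binom{n}{\ell}\frac{(\ell-1)!}{2}\bigl(\tfrac{1000}{n}\bigr)^{\ell}\le\frac{n^\ell}{\ell!}\cdot\frac{(\ell-1)!}{2}\cdot\frac{1000^\ell}{n^\ell}=\frac{1000^\ell}{2\ell}$, and then sum over all relevant lengths. The expected number $X$ of cycles of length at most $L$ satisfies $E[X]\le\sum_{\ell=3}^{\lfloor L\rfloor}\frac{1000^\ell}{2\ell}\le\frac16\sum_{\ell=3}^{\lfloor L\rfloor}1000^\ell\le\frac16\cdot\frac{1000}{999}\cdot 1000^{L}=\frac{1000}{5994}\,n<n$, using that the geometric sum is at most $\frac{1000}{999}$ times its largest term and that $1000^{L}=n$.

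Finally I would apply Markov's inequality: since $E[X]<n$, we get $\Pr[X\ge 2n]\le E[X]/(2n)<1/2$, so with probability at least $1/2$ there are at most $2n$ cycles of length at most $\log n/\log 1000$ in $G$, as claimed. I do not anticipate any serious obstacle; the only places needing slight care are the combinatorial count of potential cycles (the division by $2\ell$ for rotations and reflections) and bounding the geometric sum by a constant multiple of its top term so that the final bound stays below $n$, which is exactly what makes the $1/2$-probability Markov step go through.
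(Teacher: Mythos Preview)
Your proposal is correct and follows essentially the same approach as the paper: compute the expected number of short cycles by a first-moment calculation and then apply Markov's inequality. The paper's version is terser---it bounds the probability of a cycle on $k$ fixed vertices crudely by $k!(1000/n)^k$ and asserts the expected total is at most $1000^k=n$ without writing out the geometric sum---whereas you track the sharper count $\binom{n}{\ell}(\ell-1)!/2$ and the summation explicitly, but the underlying argument is identical.
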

\begin{proof}
  For any $k$ fixed vertices, the probability that there is a cycle among the $k$ vertices is at
  most $k!(1000/n)^k$.  Therefore the expected number of cycles in $G$ of length at most $k$ is at
  most $1000^k$. For $k=\log n/\log 1000$, this means that the expected number of cycles in $G$ of
  length at most $\log n/\log 1000$ is $n$. Therefore, with probability at least $1/2$ there are at
  most $2n$ cycles of length at most $\log n/\log 1000$ in $G$.
\end{proof}

%\begin{lemma}
%  Let $G$ be a random graph on $n$ vertices where each edge is present with probability $1000/n$.
%  Let $G'$ be obtained by removing all edges incident on vertices of degree greater than $2000$ and
%  one edge from each cycle of length at most $\log n/\log 1000$. Then with probability at least
%  $1/2 - e^{-100} - e^{-n}$, $G'$ is $1/100$-far from being bipartite.
%  \label{lem:graph-far}
%\end{lemma}

We are now ready to prove Lemma~\ref{lem:graph-far}, completing our lower bounds. Intuitively, since $G$ does not contain many high degree vertices and many small cycles, removing them to obtain $G'$ only changes the distance from being bipartite by a small term.
\begin{proof}
  With probability $1-e^{-n}$, there are at least $400n$ edges in $G$ and by Lemma~\ref{lem:farness} $G$ is $20n$-removed from
  being bipartite. By Lemma~\ref{lem:edges-high-degre}, with probability at least $1 - e^{-100}$,
  there are at most $n$ edges incident with vertices of degree greater than $2000$ and by Lemma~\ref{lem:cycles} with
  probability at least $1/2$ there are at most $2n$ cycles of length at most $\log n/\log 1000$.
  Hence, with probability at least $1/2-e^{-100} - e^{-n}$, $G'$ is a graph with degree at most
  $2000$ that is $17n$-removed from being bipartite.  Therefore, $G'$ is $1/100$-far from being
  bipartite.
\end{proof}

%Now, observe that any distributed algorithm with $l$ rounds of communication can only see the
%vertices at distance at most $l$ from it. Therefore, any distributed algorithm with at most $\log
%n/\log 1000$ rounds of communication will not find a witness for non-bipartiteness in $G'$ even
%though it is $1/10^6$-far from being bipartite.

\section{Discussion}
\label{sec:discussion}
This paper initiates a thorough study of distributed property testing. It provides an emulation technique for the dense graph model and constructs fast
distributed algorithms for testing triangle-freeness, cycle-freeness and bipartiteness. We also present lower bounds for both bipartiteness and triangle freeness.   %One additional important contribution is our multiple lazy random walk paradigm which we believe will turn out to be useful in additional algorithms for the CONGEST model.

This work raises many important open questions, the immediate of which is to devise fast distributed testing algorithms for additional problems. One example is testing freeness of other small subgraphs.
More ambitious goals are to handle dynamic graphs, and to find more general connections between testability in the sequential model and the distributed model.
Finally, there is fertile ground for obtaining additional lower bounds in this setting, in order to fully
understand the complexity of distributed property testing.	

\bibliographystyle{plain}
\bibliography{citations}
\end{document}